\documentclass[conference]{IEEEtran} 
\usepackage{url}
\usepackage{hyperref}
\hypersetup{
  colorlinks   = true, 
  linkcolor    = black!50!brown, 
  citecolor   = black!50!brown, 
}
\usepackage{balance}
\usepackage{verbatim}
\usepackage{bm}
\usepackage{color,graphicx,xcolor}
\usepackage{mdwtab}
\usepackage{mathtools,tikz}
\usepackage{pgfplots}
\pgfplotsset{soldot/.style={color=myblue,only marks,mark=*}} \pgfplotsset{holdot/.style={color=myblue,fill=white,only marks,mark=*}}
\usepackage{hhline}
\usepackage{multirow}
\usepackage{pdfpages}
\usepackage{enumitem}  
\usepackage{caption}
\usepackage{subcaption}
\DeclareCaptionFormat{myformat}{\fontsize{8}{9}\selectfont#1#2#3}
\captionsetup{format=myformat}
\usepackage[noadjust]{cite}
\usepackage{amsmath,amssymb,amsfonts,amsthm,xspace,bm}
\usepackage[bb=dsserif]{mathalpha}
\usepackage{algorithmic}
\usepackage{graphicx}
\usepackage{textcomp}
\usepackage{todonotes}
\usepackage{etoolbox}
\usetikzlibrary{arrows,shapes.geometric,calc}
\usepackage[rightcaption]{sidecap}
\sidecaptionvpos{figure}{m}
\allowdisplaybreaks
\bibliographystyle{ieeetr}
\definecolor{bluishgreen}{RGB}{0,158,115}
\definecolor{vermillion}{RGB}{213,94,0}
\definecolor{myblue}{RGB}{0,114,178}
\definecolor{myorange}{RGB}{230,159,0}


\theoremstyle{definition}

\newtheorem{remark}{Remark}
\newtheorem{theorem}{Theorem}
\newtheorem{definition}{Definition}

\newtheorem{protocol}{Protocol}

\newtheorem{claim}[theorem]{Claim}

\newtheorem*{definition*}{Definition}
\newtheorem*{lemma*}{Lemma}

\newtheorem{lemma}[theorem]{Lemma}

\newtheoremstyle{thmnum}{\topsep}{\topsep}{\itshape}{0pt}{\bfseries}{.}{ }{\thmname{#1}\thmnote{ \bfseries #3}}
\theoremstyle{thmnum}

\usepackage{pgffor}
\foreach \x in {a,...,z}{%
\expandafter\xdef\csname vec\x \endcsname{\noexpand\ensuremath{\noexpand\bm{\x}}}
}

\foreach \x in {A,...,Z}{%
\expandafter\xdef\csname vec\x \endcsname{\noexpand\ensuremath{\noexpand\bm{\x}}}
}

\foreach \x in {A,...,Z}{%
\expandafter\xdef\csname c\x \endcsname{\noexpand\ensuremath{\noexpand\mathcal{\x}}}
}

\foreach \x in {A,...,Z}{%
\expandafter\xdef\csname bb\x \endcsname{\noexpand\ensuremath{\noexpand\mathbb{\x}}}
}

\foreach \x in {A,...,Z}{%
\expandafter\xdef\csname s\x \endcsname{\noexpand\ensuremath{\noexpand\sf{\x}}}
}

\foreach \x in {J,M,U,X,Y,Z}{%
\expandafter\xdef\csname h\x \endcsname{\noexpand\ensuremath{\noexpand\widehat{\x}}}
}

\foreach \x in {m,u,x,y,z}{%
\expandafter\xdef\csname h\x \endcsname{\noexpand\ensuremath{\noexpand\hat{\x}}}
}

\foreach \x in {W,X,m,x,y,z}{%
\expandafter\xdef\csname t\x \endcsname{\noexpand\ensuremath{\noexpand\widetilde{\x}}}
}
\xdef\tzero{\noexpand\ensuremath{\noexpand\tilde{0}}}
\xdef\tone{\noexpand\ensuremath{\noexpand\tilde{1}}}

\foreach \x in {M,P,U,W,X,Y,Z,u,x,y,z}{%
\expandafter\xdef\csname b\x \endcsname{\noexpand\ensuremath{\noexpand\overline{\x}}}
}

\foreach \x in {M,P,U,X,Y,Z,u,x,y,z}{%
\expandafter\xdef\csname u\x \endcsname{\noexpand\ensuremath{\noexpand\underline{\x}}}
}

\newcommand{\mc}{\ensuremath{\leftrightarrow}}
\newcommand{\eras}{\ensuremath{\mathsf{e}}}

\newcommand{\ina}[1]{\left<#1\right>}

\newcommand{\inp}[1]{\left(#1\right)}

\newcommand{\red}[1]{{\textcolor{red}{#1}}}
\newcommand{\blue}[1]{{\textcolor{blue}{#1}}}
\newcommand{\coll}{\ensuremath{\mathbb{A}}}
\newcommand{\pa}{\ensuremath{\mathsf{P}}}
\newcommand{\bcA}{\ensuremath{\cA^c}}

\newcommand{\pr}{\ensuremath{\mathbb{P}}}

\newcommand{\rad}{\delta}

\newcommand{\ham}{\ensuremath{d_{\text{H}}}}
\newcommand{\dtv}{\ensuremath{d_{\tt{TV}}}}
\newcommand{\type}[1]{\ensuremath{\mathbf{p}_{#1}}}
\newcommand{\supp}{\ensuremath{\mathsf{Supp}}}
\newcommand{\dmc}{\ensuremath{\mathsf{dmc}}}

\newcommand{\Wsym}{\ensuremath{W^{\mathtt{sym}}}}
\newcommand{\bb}[1]{\ensuremath{\mathbb{#1}}}
\newcommand{\ttf}{\ensuremath{\Phi}}
\newcommand{\mss}{\ensuremath{\searrow}}
\newcommand{\Xdag}{\ensuremath{X^{\dagger}}}
\newcommand{\xdag}{\ensuremath{x^{\dagger}}}

\newcommand{\txa}{\ensuremath{\tX^1}}
\newcommand{\txb}{\ensuremath{\tX^2}}
\newcommand{\txc}{\ensuremath{\tX^3}}
\newcommand{\Pabc}{\ensuremath{P_{X_1 X_2 X_3 Y}}}

\def\BibTeX{{\rm B\kern-.05em{\sc i\kern-.025em b}\kern-.08em
    T\kern-.1667em\lower.7ex\hbox{E}\kern-.125emX}}

\newcommand{\protdecoder}{\ensuremath{  \mathsf{decode} _{(2,1)} }}
\newcommand{\protdecoderr}{\ensuremath{  \mathsf{decode} _{(k,1)} }}

\newtoggle{long}
\togglefalse{long}


\IEEEoverridecommandlockouts
\title{{Byzantine Distributed Function Computation}}
\author{
    \IEEEauthorblockN{Hari Krishnan P. Anilkumar}
    \IEEEauthorblockA{TIFR, Mumbai, India\\
    hari.a@tifr.res.in
    }
    \and
    \IEEEauthorblockN{Neha Sangwan}
    \IEEEauthorblockA{UCSD, USA\\
    {nehasangwan010@gmail.com}
    }
    \and
    \IEEEauthorblockN{Varun Narayanan}
    \IEEEauthorblockA{UCLA, USA\\
    {varunnkv@gmail.com}
    }
    \and
    \IEEEauthorblockN{Vinod M. Prabhakaran}
    \IEEEauthorblockA{TIFR, Mumbai, India\\
    {vinodmp@tifr.res.in}
    }}

\begin{document}
\maketitle
\begin{abstract} 
    We study the distributed function computation problem with $k$ users of which at most $s$ may be controlled by an adversary and characterize the set of functions of the sources the decoder can reconstruct robustly in the following sense -- if the users behave honestly, the function is recovered with high probability (w.h.p.); if they behave adversarially, w.h.p, either one of the adversarial users will be identified or the function is recovered with vanishingly small distortion.
\end{abstract}
\section{Introduction}\label{sec:intro}

In the distributed source coding problem, spatially distributed users observe correlated sources and send encodings of their observations to a central decoder which, using these messages and any source observation (side-information) of its own, attempts to reconstruct all the sources or a function of them. Following the seminal work of Slepian and Wolf~\cite{SlepianW73}, a wide variety of settings has been studied (e.g., see~\cite{dragotti2009distributed} and references therein). The focus of these works is mainly on understanding the amount of communication required, i.e., how short can the messages from the users be so that the decoder can reconstruct the sources at the desired fidelity. In this paper we turn our attention from limited communication to limited {\em trust}. Specifically, the decoder can no longer trust all the users to act faithfully. Out of $k$ users, up to $s\leq k$ may be controlled by an adversary (we call these users malicious); the upper bound $s$ is assumed to be known beforehand, but the identity of which users are controlled by the adversary is unknown to the decoder and the honest users. We ask what, if anything, can the central decoder hope to learn about the sources in such a setting. 
{A similar question with somewhat different motivations from ours was studied by Kosut and Tang~\cite{KosutTTransIT08,KosutTISIT08,KosutTAllerton08,KosutTISIT09}. We will discuss the similarities and differences with the problem we study towards the end of this section.}

In this paper we are interested in the following form of {\em robust recovery} (see Figure~\ref{fig:setup}): the decoder should either be able to compute a desired function $f$ of the sources or it must be able to identify a malicious user (if any). The main result will be a characterization of all functions $f$ for which this is possible for a given joint distribution of the sources. Note that if a malicious user is identified, that user may be removed to yield a new instance of the problem with $k-1$ users of whom at most $s-1$ are malicious and the process continued. 
{We do not impose any communication restrictions on the users. In fact, the users send their observations uncoded to the decoder. Furthermore, we assume that the adversary has no additional side-information other than the observations of users it controls.}

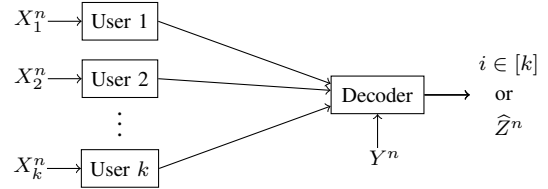
\begin{figure}[tb]
\centering

\begin{tikzpicture}[node distance=0.25cm, every node/.style={draw, minimum size=0.5cm, align=center}, scale = 0.6]

\node (user1) [rectangle] {\footnotesize User~1};
\node (user2) [rectangle, below=of user1] {\footnotesize User~2};
\node (dots) [draw=none, below=of user2, yshift = 0.4cm] {$\vdots$};
\node (userk) [rectangle, below=of dots, yshift = 0.2cm] {\footnotesize User~$k$};

\draw[->] ($(user1.west)+(-0.75,0)$) -- (user1.west) node[midway, left, draw=none, xshift=-0.1cm] {\footnotesize \(X_1^n\)};
\draw[->] ($(user2.west)+(-0.75,0)$) -- (user2.west) node[midway, left, draw=none, xshift=-0.1cm] {\footnotesize \(X_2^n\)};
\draw[->] ($(userk.west)+(-0.75,0)$) -- (userk.west) node[midway, left, draw=none, xshift=-0.1cm] {\footnotesize \(X_k^n\)};

\node (decoder) [rectangle, right=2.8cm of $(user1)!0.5!(userk)$, text width=1cm] {\footnotesize Decoder};

\draw[->] ($(decoder.south)+(0,-0.75)$) -- (decoder.south) node[midway, below, draw=none, xshift = 0.1cm, yshift = -0.1cm] {\footnotesize \(Y^n\)};

\draw[->] (user1.east) -- ($(decoder.west)+(0,0.25)$);
\draw[->] (user2.east) -- ($(decoder.west)+(0,0.1)$);
\draw[->] (userk.east) -- ($(decoder.west)+(0,-0.25)$);

\draw[->] (decoder.east) -- ($(decoder.east)+(1,0)$) node[midway, right, draw=none, xshift=0.3cm, yshift=0.4cm] {\footnotesize \(i\in[k]\)};
\draw[->] (decoder.east) -- ($(decoder.east)+(1,0)$) node[midway, right, draw=none, xshift=0.5cm] {\footnotesize or};
\draw[->] (decoder.east) -- ($(decoder.east)+(1,0)$) node[midway, right, draw=none,xshift=0.5cm, yshift=-0.4cm] {\footnotesize \(\hZ^n\)};

\end{tikzpicture}
\caption{At most $s$ out of the $k$ users are malicious (i.e., controlled by an adversary who has access to the observations of the malicious users, but no additional side-information). The decoder, with high probability, either identifies a malicious user or outputs a substantially correct estimate of $Z^n$, where $Z=f(X_1,\ldots,X_k,Y)$. The main result is a characterization of functions $f$ for which this is possible for a given $P_{X_1,\ldots,X_kY}$. If the decoder identifies a malicious user, that user can be removed to get a new instance of the problem with $k-1$ users of which at most $s-1$ are malicious and the process repeated.}\label{fig:setup}
\end{figure}

While not the focus of this paper, we are also motivated by potential applications to information theoretically secure signatures and byzantine broadcast~\cite{fitzi2004pseudo,narayanan2023complete}. Indeed, our starting point is the following observation which forms the basis of the algorithm in~\cite{fitzi2004pseudo}:
Consider a single source node with observation vector $X^n=(X_1,\ldots,X_n)$ and a decoder with side-information vector $Y^n=(Y_1,\ldots,Y_n)$ such that 
$(X_i,Y_i)$, $i=1,\ldots,n$ are independent and
identically distributed according to a joint distribution $P_{XY}$
(this is the $k=s=1$ case). Let $X\searrow Y:=\psi_{X\searrow Y}(X)$ be a minimal sufficient statistic for estimating $Y$ given $X$, i.e., $\psi_{X\searrow Y}$ function is such that $\psi_{X\searrow Y}(x)=\psi_{X\searrow Y}(x')$ if and only if $P_{Y|X=x}\equiv P_{Y|X=x'}$. Then the decoder can recover $\psi_{X\searrow Y}(X^n):=(\psi_{X\searrow Y}(X_1),\ldots,\psi_{X\searrow Y}(X_n))$ robustly in the following sense -- if the sender is honest, the decoder outputs $\psi_{X\searrow Y}(X^n)$ with high probability (w.h.p.); if the sender is malicious, w.h.p., either the decoder outputs a vector which is still substantially correct (in the sense of vanishing average Hamming distortion w.r.t. $\psi_{X\searrow Y}(X^n)$) or detects that the sender is malicious. In other words, a malicious sender is unable to induce the decoder to produce an erroneous output (without being detected). This is accomplished by a simple joint typicality test by the decoder. It is also easy to see that $\psi_{X\searrow Y}(X^n)$ (along with $Y^n$) is the most that the decoder can hope to learn robustly since $X\mc \psi_{X\searrow Y}(X) \mc Y$ is a Markov chain and any alterations to $X$ which preserve $\psi_{X\searrow Y}(X)$ (and the marginal of $X$) cannot be detected by the decoder relying only on its side-information $Y$. 

In~\cite{narayanan2023complete}, the above was extended to the case of $k=2$ senders with at most $s=1$ corrupt user using a simple idea -- for a joint distribution $P_{X_1X_2Y}$, suppose the users observe $X_1$ and $X_2$ (dropping the vector notation for convenience), respectively, and the decoder has side-information $Y$. Using the single-user scheme, the decoder can first robustly recover $X_1\searrow Y$ (or determine that user 1 is malicious); then using what it has learned, robustly recover $X_2\searrow Y_1$ (or determine that user~2 is malicious), where $Y_1:=(Y,X_1\searrow Y)$; then recover $X_1\searrow Y_2$ (or determine that user~1 is malicious), where $Y_2:=(Y_1,X_2\searrow Y_1)$ and so on. This process can be shown to saturate in a finite number of steps (which depends on their joint distribution and is at most the product of cardinalities of the alphabets of $X_1$ and $X_2$). Thus, either the  decoder recovers this (saturated) function or identifies the adversary. Notice that if the adversary is identified, the decoder can proceed with the knowledge that the other user is honest. A consequence of the results in the present paper is that this function (along with $Y$) is the most a decoder can learn robustly\footnote{\cite{narayanan2023complete} did not formulate the precise  question we formulate here (since the goal there was a signature/byzantine broadcast scheme which imposes further restrictions) and the converses proved there do not directly imply an optimality for the formulation here.}.

While we can extend the idea of repeatedly using the single-user scheme to $k>2$ users for at most $s=1$ corrupt users (see Appendix~\ref{app:k=1}), the scheme does not generalize\footnote{One of the several issues which prevent an easy generalization of the single-user scheme to the $s>1$ case is that collusions provide additional side-information to a malicious party to craft its report (and this side-information is unavailable in a trustworthy form to the decoder); in such cases the minimum sufficient statistic or multiparty analogues of it are no longer robustly recoverable.} for $s>1$. One of the main technical contributions of this paper, in addition to the new formulation and the converse result, is an optimal achievability scheme which works for all $s\leq k$.


{
\paragraph*{Related works} Apart from~\cite{fitzi2004pseudo,narayanan2023complete}, the closest works to ours are by Kosut and Tong on byzantine distributed source coding~\cite{KosutTTransIT08} and the byzantine CEO problem~\cite{KosutTISIT08,KosutTAllerton08,KosutTISIT09}. Distributed source coding with byzantine users studied in~\cite{KosutTTransIT08} is very similar to our setup except that the decoder outputs reconstructions for the observations of all the users with the objective that the reconstructions must be correct for the observations of the honest users. An obvious way to do this is to require the users to send their observations uncoded (as we do in this paper) and output; but this may not be communication efficient. The paper presents the optimal communication rate-tradeoff region of deterministic and randomized fixed-length codes and the optimal sum-rate for variable length codes. Unlike in our formulation, the adversary is allowed to have additional side-information. The main difference with our problem is that the decoder's aim is not to correctly compute a function of the observations (or declare an outage by identifying a malicious user), but to produce reconstructions for the observations of all users of which those for honest users' are correct; the decoder may not be able to identify which of its reconstructions are correct. Our focus in this paper is on the feasibility question of which functions are robustly recoverable and we do not study communication efficiency, whereas the feasibility question is trivial for the formulation in~\cite{KosutTTransIT08} and the paper characterizes the optimal rates of communication. Kosut and Tong also studied the byzantine CEO problem, in which the setup is similar except that the decoder's~(CEO) goal is to estimate an underlying source conditioned on which the observations of the users are conditionally independent. They obtained bounds on the rate-region~\cite{KosutTISIT08,KosutTISIT09} and the error-exponents~\cite{KosutTAllerton08}. In addition, broadly related are works on byzantine users in multiterminal information theory and network coding such as~\cite{Jaggi7,WangSK10,Yener,KTong,FanKW18,SangwanNP22,NehaBDP23,NehaBDP25} apart from the extensive literature in the areas of cryptography and distributed computing.
}

\section{Notation}

Sets are denoted by calligraphic letters. ${\mathcal A}^c$ denotes the complement of set ${\mathcal A}$. For a positive integer $n$, we denote $\{1,\ldots,n\}$ by $[n]$.
For $\cA=\{j_1,j_2,\ldots,j_l\}\subseteq[k]$ with $j_1<j_2<\ldots<j_l$, we write $x_{\cA}$ to denote $(x_{j_1},x_{j_2},\ldots,x_{j_l})$.
All probability mass functions (p.m.f.) and conditional p.m.fs (i.e., channels) are denoted exclusively by $P,Q,W$ or their variants; their domains will be omitted where it is clear from the context. We write $P^n$ to denote the p.m.f. of length-$n$ vector of independent and indentically distributed random variables, each with p.m.f. $P$.
We define $\cP(\cV|\cU)$ as the set of all channels from $\cU$ to $\cV$, i.e., the set of all conditional p.m.f.s $W_{V|U}$. The conditional p.m.f. of $n$-fold independent use of a channel $W_{V|U}$ will be denoted by $W^n_{V|U}$.

$\type{u^n}$ denotes the empirical distribution (type) of the vector $u^n=(u_1,u_2,\ldots,u_n)$.
The set of all empirical distributions (types) of vectors in $\cU^n$ will be denoted by $\cP_n(\cU)$; we drop $\cU$ from this notation when it is clear from the context. For a $Q_{U}\in\cP_n(\cU)$, we write $\cT_{U}$ to denote the set (type class) of all vectors in $\cU^n$ with empirical distribution $Q_{U}$ (the type involved will be clear from the context).

$\dtv(Q,Q')$ denotes the total variation distance between p.m.fs $Q$ and $Q'$. $\bb{1}$ is the indicator function; for instance, $\bb{1}_{a=b}=\bb{1}\{a=b\}$ takes on value one if $a=b$ is true and 0 otherwise.
$\ham(x^n,\hx^n)= \frac{1}{n} \sum_{t=1}^n {\bb{1}}_{x_t\neq \hx_t}$ denotes the {\em normalized/average} Hamming distortion between $x^n$ and $\hx^n$.
We write $g(n)=\Omega(h(n))$ to mean $\liminf_{n\to\infty} g(n)/h(n)>0$ (\`a la Knuth~\cite{Knuth-Omega}).
All $\log$ are natural logarithms and $\exp$ are to base $e$. 

For jointly distributed random variables $U,V,W$, we write ``$U\mc V\mc W$ is a Markov chain'' to mean that $U$ and $W$ are conditionally independent conditioned on $V$; we often suppress the phrase "is a Markov chain" in the sequel.

For a function $f:\cU\times\cV\to\cW$, we write $f(u^n,v^n)$ for vectors $u^n=(u_1,\ldots,u_n)$ and $v^n=(v_1,\ldots,v_n)$ to mean $f(u^n,v^n)=(f(u_1,v_1),\ldots,f(u_n,v_n))$. More generally, we extend function definitions along the same lines when the domain is a product of a finite number of sets (and not just two as above).

\section{Problem Statement}\label{sec:problem}

Consider the following $(k,s)$-{\em byzantine distributed source coding} problem which has a decoder and $k$ users (source nodes) of which at most $s\leq k$ are controlled by an adversary. Let $\cX_1,\ldots,\cX_k,\cY$ be finite alphabets. Suppose $P_{X_1\ldots X_k Y}$ is a known p.m.f. over $\cX_1\times\ldots\times\cX_k\times\cY$. Let $(X_{1,t},\ldots,X_{k,t},Y_t)\sim P_{X_1\ldots X_k Y}$, $t\in[n]$, be independent and identically distributed (i.i.d.) over the (discrete) time index $t$. For $i\in[k]$, (source) node-$i$ observes $X_i^n:=(X_{i,1},\ldots,X_{i,n})$ and the decoder observes the side-information $Y^n$. The decoder is interested in recovering a function $f$ of the sources (i.e., the domain of $f$ is $\cX_1\times\ldots\times\cX_k\times\cY$). Specifically, if $Z_t=f(X_{1,t},\ldots,X_{k,t},Y_t)$, $t\in[n]$, the decoder desires to recover $Z^n$.

The nodes are required to send their observations\footnote{Our model assumes that there are no communication constraints which require the nodes to compress their observations. As mentioned, the focus of this paper is on limitations on what the decoder can recover imposed by a deficit of trust rather than that of communication. 
} to the decoder. We want our decoder to either recover $Z^n$ (with a vanishing average Hamming distortion) or correctly identify {\em one} of the users controlled by the adversary\footnote{As was already mentioned, once such a user is identified, this user may be removed to obtain a new instance of the problem with $k-1$ users and at most $s-1$ users controlled by the adversary. 
}. Let the decoder be $\phi:\cX_1^n\times\ldots\times\cX_k^n\times\cY^n\to [k]\cup\cZ^n$, where $\cZ$ is the co-domain of $f$.

Let $\gamma>0$. Suppose $\cA\subseteq[k]$ is such that $|\cA|\leq s$ (note that $\cA$ may be empty). When the adversary controls the nodes in $\cA$, based on the observations $X_{\cA}^n:=(X_i^n)_{i\in\cA}$, it produces purported observations $\bX_{\cA}^n$ which are sent to the decoder. Let $W_{\cA}$ denote the randomized map (not necessarily memoryless) that the adversary uses to map $X_{\cA}^n$ to $\bX_{\cA}^n$. Both $\cA$ and $W_{\cA}$ are unknown to the decoder (other than the fact that $|\cA|\leq s$). 

To define the error event, let ${\cA}^c=[k]\setminus\cA$ and\footnote{We use the $\langle\; \rangle$ notation to indicate that the reported observation vectors in $\langle\bX_{\cA}^n,X_{{\cA}^c}^n\rangle$ are arranged in the order of $i=1,2,\ldots,k$ (and not with the ones in $\cA$ first followed by those in ${\cA}^c$).}
\begin{align*}
    \cE_1(\cA) &= (\phi(\langle\bX_{\cA}^n,X_{{\cA}^c}^n\rangle,Y^n)\in {\cA}^c)\\
    \cE_2(\gamma,\cA) &= \big((\phi(\langle\bX_{\cA}^n,X_{{\cA}^c}^n\rangle,Y^n)\notin [k])\cap\\
    &\qquad(\ham({\phi(\langle\bX_{\cA}^n,X_{{\cA}^c}^n\rangle,Y^n),Z^n}) > \gamma)\big),
\end{align*}
where the average Hamming distortion $\ham(\hz^n,z^n)$ measures the fraction of locations where the vectors differ.
We define the error event when the adversary controls $\cA$ as $\cE(\gamma,\cA)=\cE_1(\cA)\cup \cE_2(\gamma,\cA).$
Thus, an error occurs if either the decoder identifies a node outside $\cA$ or it outputs an estimate of $Z^n$ which incurs an average Hamming distortion more than $\gamma$. Notice that for $\cA=\varnothing$ (i.e., when the adversary is absent), an error occurs unless the decoder outputs an estimate $\hZ^n$ and it is of average Hamming distortion no larger than $\gamma$.
Denote the probability of the error event by
\[ \eta(\gamma,\cA,W_{\cA}) = \pr(\cE(\gamma,\cA)),\]
where the probability is evaluated under the joint distribution
\[ P(x_{[k]}^n,y^n,\bx_{\cA}^n) = \left( \prod_{t=1}^n P_{X_{[k]},Y}(x_{[k],t},y_t)\right) W_{\cA}(\bx_{\cA}^n|x_{\cA}^n).\]

For $\gamma>0$, the error probability of the decoder $\phi$ is defined as the maximum over the choices available to the adversary (including $\cA=\varnothing$ corresponding to an inactive/absent adversary)
\[ \epsilon(\gamma,\phi) = \max_{\cA\subseteq[k]: |\cA|\leq s} \;\sup_{W_{\cA}} \; \eta(\gamma,\cA,W_{\cA}).\]
We say that a function $f$ with domain $\cX_1\times\ldots\times\cX_k\times\cY$ is $s$-{\em robustly recoverable} if, for all $\gamma>0$, there is a sequence of decoders $\phi_n$, $n\in\bbZ^+$, such that 
$\liminf_{n\to\infty}\epsilon(\gamma,\phi_n)=0$. We note that our achievability results are shown with $\lim$.

\begin{figure*}[h]
\centering

\begin{tikzpicture}[node distance=0.7cm, every node/.style={draw, minimum size=0.7cm, align=center}, scale = 0.4]

\node (user1_left) [rectangle, red] {\scriptsize \(Q^n_{\uX_1|\tX_1}\)};
\node (user2_left) [rectangle, below=of user1_left] {\footnotesize User~2};

\node (adv) [draw, minimum height=1.4cm, minimum width=2cm, right=0cm of user1_left, xshift = -1.4cm, red, dashed, dash pattern=on 2pt off 1pt] {};
\node [draw = none, below = of user1_left, anchor = center, yshift = 0.55cm, red] {\scriptsize User~1};
\draw[->] ($(user1_left.west)+(-1.2,0)$) -- (user1_left.west) node[midway, left, draw=none, xshift=-0.2cm] {\footnotesize \(\tX_1^n\)};
\draw[->] ($(user2_left.west)+(-1.2,0)$) -- (user2_left.west) node[midway, left, draw=none, xshift=-0.2cm] {\footnotesize \(\uX_2^n\)};
\draw[->, red] (user1_left.east) -- node[above, draw = none, yshift = -0.1cm] {\scriptsize\(\uX_1^n\)} ($(adv.east)$);

\node (decoder) [rectangle, right=2.5cm of $(user1_left)!0.5!(user2_left)$, text width=1cm] {\footnotesize Decoder};

\draw[->] ($(decoder.south)+(0,-0.75)$) -- (decoder.south) node[midway, below, draw=none] {\footnotesize \(\uY^n\)};

\draw[->, red] (adv.east) -- ($(decoder.west)+(0,0.3)$);
\draw[->] (user2_left.east) -- ($(decoder.west)+(0,-0.3)$);

\draw[->] (decoder.east) -- ($(decoder.east)+(1,0)$) node[midway, right, draw=none, xshift=0.3cm] {\footnotesize \(\phi(\uX_1^n, \uX_2^n,  \uY^n)\)};

\node (user1_right) [rectangle, right=9.6cm of user1_left] {\footnotesize User-1};
\node (user2_right) [rectangle, below=of user1_right, red] {\scriptsize \(Q^n_{\uX_2|\tX_2}\)};

\node (adv_mirror) [draw, minimum height=1.4cm, minimum width=2cm, left=0cm of user2_right, xshift = 1.4cm, red, dashed, dash pattern=on 2pt off 1pt] {};

\node [draw = none, below = of user2_right, anchor = center, yshift = 0.55cm, red] {\scriptsize User-2};
\draw[->] ($(user1_right.east)+(1.2,0)$) -- (user1_right.east) node[midway, right, draw=none, xshift=0.2cm] {\footnotesize \(\uX_1^n\)};
\draw[->] ($(user2_right.east)+(1.2,0)$) -- (user2_right.east) node[midway, right, draw=none, xshift=0.2cm] {\footnotesize \(\tX_2^n\)};
\draw[->, red] (user2_right.west) -- node[above, draw = none, yshift = -0.1cm] {\scriptsize\(\uX_2^n\)} ($(adv_mirror.west)$);

\node (decoder_right) [rectangle, left=2.5cm of $(user1_right)!0.5!(user2_right)$, text width=1cm] {\footnotesize Decoder};

\draw[->, red] (adv_mirror.west) -- ($(decoder_right.east)+(0,-0.3)$);
\draw[->] (user1_right.west) -- ($(decoder_right.east)+(0,0.3)$);

\draw[->] ($(decoder_right.south)+(0,-0.75)$) -- ($(decoder_right.south)$) node[midway, below, draw=none] {\footnotesize \(\uY^n\)};
\draw[->] (decoder_right.west) -- ($(decoder_right.west)+(-1,0)$) node[midway, right, draw=none, xshift=0.5cm] {};

\end{tikzpicture}
\caption{For any $Q_{\uX_1\tX_1\uX_2\tX_2\uY}$ satisfying the conditions (i) and (ii) in Definition~\ref{def:1-viable}, the observations reported to the decoder and the decoder's side-information can be jointly distributed as $Q_{\uX_1\uX_2\uY}$ i.i.d. under the two scenarios shown to the left and right. Here, the malicious users are shown in color.}
\label{fig:2user-converse}
\end{figure*}
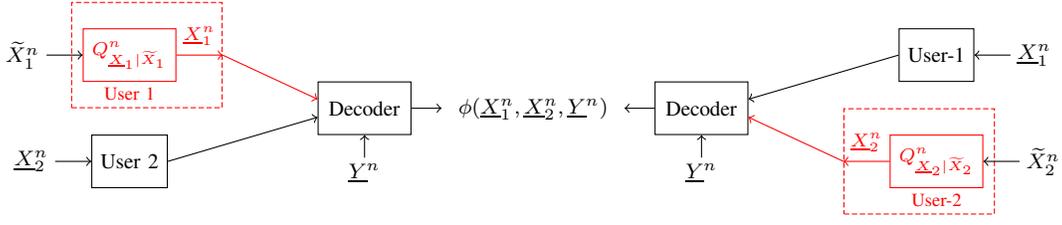

Before presenting our characterization of robustly recoverable functions for all $s\leq k$, to illustrate the main ideas, we discuss the case of $k=2$ users with adversary controlling at most $s=1$ of them in the next section. As mentioned in Section~\ref{sec:intro} and as we show in Appendix~\ref{app:k=1}, the scheme in~\cite{narayanan2023complete} is in fact optimal for $k=2, s=1$ though it does not generalize for $s>1$. In addition to presenting the idea of the converse, our purpose is to present a scheme which in fact generalizes to all values of $s\leq k$ in the simplest setting. 

\section{The $(k=2, s=1)$ case}\label{sec:2user}

\begin{definition}\label{def:1-viable} We say a function $f:\cX_1\times\cX_2\times\cY\to\cZ$ is 1-{\em viable} 
if for every joint p.m.f. $Q_{\uX_1\tX_1\uX_2\tX_2\uY}$ over $\cX_1\times\cX_1\times\cX_2\times\cX_2\times\cY$ 
such that
\begin{itemize}
\item[(i)] 
$Q_{\tX_1\uX_2\uY}=P_{X_1X_2Y}$ and $\uX_1 \mc \tX_1 \mc (\uX_2,\uY)$, and
\item[(ii)] 
$Q_{\uX_1\tX_2\uY}=P_{X_1X_2Y}$ and $\uX_2 \mc \tX_2 \mc (\uX_1,\uY)$,
\end{itemize}
we have (with probability 1)
\[ f(\tX_1,\uX_2,\uY) = f(\uX_1,\tX_2,\uY).\]
\end{definition}

\begin{theorem}[2 users of which at most 1 is corrupt] \label{thm:2user}
A function $f$ is 1-robustly recoverable if and only if it is 1-viable.
\end{theorem}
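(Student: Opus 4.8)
\emph{Converse.} I would prove the contrapositive: if $f$ is not $1$-viable, it is not $1$-robustly recoverable. Fix a joint p.m.f.\ $Q_{\uX_1\tX_1\uX_2\tX_2\uY}$ witnessing non-viability, so (i)--(ii) hold but $\delta:=\pr_Q(f(\tX_1,\uX_2,\uY)\neq f(\uX_1,\tX_2,\uY))>0$. Consider the two memoryless adversary strategies of Figure~\ref{fig:2user-converse}: Strategy~A has $\cA=\{1\}$ and passes the honest observation $X_1^n$ through the channel $Q_{\uX_1\mid\tX_1}$; Strategy~B has $\cA=\{2\}$ and passes $X_2^n$ through $Q_{\uX_2\mid\tX_2}$. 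Using the Markov conditions in (i), (ii) together with $Q_{\tX_1\uX_2\uY}=Q_{\uX_1\tX_2\uY}=P_{X_1X_2Y}$, in \emph{both} strategies the decoder's input $(\uX_1^n,\uX_2^n,\uY^n)$ is i.i.d.\ with the \emph{same} marginal $Q_{\uX_1\uX_2\uY}$, while the correct output is $Z_A^n=f(\tX_1^n,\uX_2^n,\uY^n)$ under A and $Z_B^n=f(\uX_1^n,\tX_2^n,\uY^n)$ under B; coupling the two worlds through $Q^{\otimes n}$ and applying the law of large numbers, $\ham(Z_A^n,Z_B^n)\ge\delta/2$ with probability $1-o(1)$. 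Hence, for any decoder $\phi_n$ and any $\gamma<\delta/4$: on that event an index output errs in one of the two worlds (an honest user is accused there), and a string output $\hZ^n$ cannot be $\gamma$-close to both $Z_A^n$ and $Z_B^n$; thus $\eta(\gamma,\{1\},W_A)$ and $\eta(\gamma,\{2\},W_B)$ sum to at least $1-o(1)$, so $\epsilon(\gamma,\phi_n)\ge\tfrac13$ for large $n$, and $f$ is not $1$-robustly recoverable.

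\emph{Achievability -- the decoder.} For achievability, fix a threshold $\varepsilon>0$ to be chosen from $\gamma$ below. Call the received $(\uX_1^n,\uX_2^n,\uY^n)$ \emph{user-$1$-consistent} if there is a candidate $x_2^n$ for the true $X_2^n$ such that the empirical type of $(\uX_1^n,x_2^n,\uX_2^n,\uY^n)$ lies within $\varepsilon$ of some $Q$ with $Q_{\uX_1\tX_2\uY}=P_{X_1X_2Y}$ and $\uX_2\mc\tX_2\mc(\uX_1,\uY)$ -- i.e.\ the data could have arisen with user~$1$ honest and user~$2$ possibly tampering; define \emph{user-$2$-consistent} symmetrically from condition (i), with a candidate $x_1^n$ for $X_1^n$. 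The decoder outputs index~$1$ if the data is not user-$1$-consistent; else index~$2$ if it is not user-$2$-consistent; else (both hold) it outputs $\hZ^n=f(x_1^n,\uX_2^n,\uY^n)$ for some user-$2$-consistency witness $x_1^n$.

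\emph{Achievability -- the analysis.} The engine is a robustness upgrade of $1$-viability: the set of $Q_{\uX_1\tX_1\uX_2\tX_2\uY}$ satisfying (i)--(ii) is compact and $Q\mapsto\pr_Q(f(\tX_1,\uX_2,\uY)\neq f(\uX_1,\tX_2,\uY))$ is continuous and identically $0$ on it, so for every $\gamma>0$ there is $\varepsilon>0$ such that any $Q$ whose relevant marginals are $\varepsilon$-close to $P_{X_1X_2Y}$ and whose two Markov chains hold up to $\varepsilon$ has $\pr_Q(\cdot)\le\gamma/4$ (this needs a small perturbation argument to pass from ``approximately Markov'' to ``close to an exact solution''). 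With this $\varepsilon$: (1) whichever user $j$ is honest, $(X_1^n,X_2^n,Y^n)$ is jointly typical and -- since the adversary has no side information -- a fibre-wise law of large numbers shows the Markov defect of $\uX_{j'}\mc X_{j'}\mc(X_j,Y)$ is $\le\varepsilon$ w.h.p., so the true $X_{j'}^n$ is a witness and the data is user-$j$-consistent w.h.p.; hence any index the decoder outputs is a malicious one w.h.p. (2) When it outputs $\hZ^n$, both consistencies hold; take $\cA=\{1\}$. Then the true $X_1^n$ and the chosen $x_1^n$ are both user-$2$-consistency witnesses and a user-$1$-consistency witness $x_2^n$ exists; applying the robust bound to the joint types of $(\uX_1^n,x_1^n,\uX_2^n,x_2^n,\uY^n)$ and of $(\uX_1^n,X_1^n,\uX_2^n,x_2^n,\uY^n)$ and chaining through $f(\uX_1^n,x_2^n,\uY^n)$ gives $\ham(\hZ^n,f(X_1^n,X_2^n,Y^n))\le\gamma$; the cases $\cA=\{2\}$ and $\cA=\varnothing$ are handled the same way (and more simply, since then $\uX_1^n=X_1^n$ and a single application suffices). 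Therefore $\epsilon(\gamma,\phi_n)\to0$ for every $\gamma>0$.

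\emph{Main obstacle.} The converse is clean; the work is in achievability. The two points that need care are (a) translating the adversary's arbitrary, possibly non-memoryless map into a statement about joint \emph{types} -- namely that the empirical type of the honest sources together with the report approximately obeys the Markov constraint of Definition~\ref{def:1-viable}, which requires a conditional law of large numbers applied fibre-by-fibre with separate, negligible accounting for low-weight symbols of $X_{j'}$ -- and (b) the robust form of $1$-viability, where the exact constraints and the a.s.\ conclusion must be made quantitative via compactness and continuity and then threaded through joint types of five sequences at once without losing any of the marginal or Markov hypotheses. Conceptually, the nontrivial idea behind the decoder is that it should not trust the received $\uX_1^n$ but instead search for a plausible honest pre-image, and that $1$-viability is exactly the guarantee that all such pre-images -- in particular the true one -- agree on the value of $f$.
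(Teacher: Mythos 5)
Your converse matches the paper's (Section~\ref{sec:2user-converse-proof}) up to taking the contrapositive and a slightly more explicit coupling of the two adversarial scenarios; the content is the same. Your achievability, however, takes a genuinely different route in decoder design. The paper's decoder tests the type of the received data against $\delta$-neighbourhoods $\cV_1^\delta,\cV_2^\delta$ of the induced view-distribution sets and, when it does not implicate anyone, applies a pre-computed per-symbol function $g$ from Lemma~\ref{lem:decoding-function-2user}; constructing $g$ and proving its well-definedness --- especially when $P_{X_1X_2Y}$ is not full-support --- is the most delicate part of the paper's argument. Your decoder instead searches for explicit witness sequences certifying consistency with each user's honesty and outputs $f$ applied to such a witness, which is a sequence-level analogue of the per-symbol $g$. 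This avoids constructing $g$ altogether; the price is a two-step chaining through $f(\uX_1^n,x_2^n,\uY^n)$ and type-level bookkeeping over five sequences, including the observation that the true $X_1^n$ is itself a user-$2$-consistency witness on the same high-probability event (correct by a union bound, but worth stating explicitly since the chaining needs both $x_1^n$ and $X_1^n$ as witnesses simultaneously). The two ingredients you flag as needing care are precisely the paper's Lemma~\ref{lem:2user-typicality} (your fibre-wise law of large numbers corresponds to the paper's permutation-symmetrization and the divergence bound defining $\cQ_n(\delta)$) and Lemma~\ref{lem:2user-technical-single-letter} via Lemmas~\ref{lem:comp-int}--\ref{lem:2continuous} (your compactness/continuity upgrade of $1$-viability, which does go through via the convergent-subsequence argument you sketch). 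In short, the underlying analytic machinery is the same; the two routes trade the intricacy of defining $g$ against the bookkeeping of chaining over witness sequences.
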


We start with a sketch of the proof of the converse (``only if'' part) which also provides an intuition for the characterization. For any $Q_{\uX_1\tX_1\uX_2\tX_2\uY}$ satisfying the conditions (i) and (ii) in Definition~\ref{def:1-viable}, we will argue that the observations reported to the decoder and its own side-information can be jointly distributed as $Q_{\uX_1\uX_2\uY}$ i.i.d. under two scenarios (see Figure~\ref{fig:2user-converse}):
\begin{enumerate}
\item[(i)] adversary controls user~1 and uses the discrete memoryless channel (DMC) $Q_{\uX_1|\tX_1}$ ($n$-times) as its randomized map $W_{\{1\}}$ with the underlying original source being $Q_{\tX_1\uX_2\uY}$ (which is $P_{X_1X_2Y}$) i.i.d., and 
 \item[(ii)] adversary controls user~2 and uses the (DMC) $Q_{\uX_2|\tX_2}$ as $W_{\{2\}}$ with the underlying original source being $Q_{\uX_1\tX_2\uY}$ (which is also $P_{X_1X_2Y}$) i.i.d. 
 \end{enumerate}
 Since the decoder is unable to tell between these scenarios, it must (w.h.p.) output an estimate $\hZ^n$. Moreover, this estimate must match (with vanishing distortion) the true $Z^n$ under both scenarios. This is possible only if the true $Z^n$ in both cases are themselves (substantially) equal. From this we will conclude that $f(\tX_1,\uX_2,\uY) = f(\uX_1,\tX_2,\uY)$.  
See Section~\ref{sec:2user-converse-proof} for details.


The proof of achievability (``if part'') relies on a decoder which checks whether the empirical joint distribution (type) of the reported observations and its own side-information (i.e., inputs to the decoder) can be ``explained'' by exactly one of the users behaving adversarially. If so, it names that user as the adversary and stops. Otherwise, the decoder outputs an estimate of $Z^n$ by applying the function element-wise ($n$-times) on its inputs. The intuition is that when there are alternative ``explanations'' and the decoder cannot be sure who the adversary is, the conditions in the theorem can be used to argue that applying the function element-wise will still lead to a (substantially) correct estimate of $Z^n$ (even if one of the reported observations itself may not be accurate).

We denote by $\cV_1$ the set of {\em single-letter} (i.e., $n=1$) joint distributions of the inputs to the decoder that user-1 acting adversarially can induce (using some channel $W_{\bX_1|X_1}$ to produce its report).
\begin{align*} \cV_1=\big\{& Q_{\uX_1\uX_2\uY}: \exists W_{\bX_1|X_1} \text{s.t. } 
              Q_{\uX_1\uX_2\uY}(\ux_1,\ux_2,\uy)=
\\&\quad\sum_{x_1} P_{X_1X_2Y}(x_1,\ux_2,\uy)W_{\bX_1|X_1}(\ux_1|x_1), \forall \ux_1,\ux_2,\uy\big\}.
\end{align*}
For $\delta>0$, let $\cV_1^{\delta}=\bigcup_{Q\in\cV_1} \cB(Q,\delta)$, where $\cB(Q,\delta)=\{ Q': \dtv(Q,Q')\leq\delta\}$ is the set of p.m.f.s within total-variation distance $\delta$ of $Q$. Analogously, $\cV_2, \cV_2^{\delta}$ are defined for user-2.
\paragraph*{Decoder} Let $\delta>0$. For $n\in\mathbb{Z}^+$, consider the decoder
\begin{align*}
    \phi_n(\ux_1^n,\ux_2^n,\uy^n)=
        \begin{cases}
            1, & \type{\ux_1^n,\ux_2^n,\uy^n}\in \cV_{1}^{\delta}\cap({\cV_{2}^{\delta}})^c\\
            2, & \type{\ux_1^n,\ux_2^n,\uy^n}\in ({\cV_{1}^{\delta}})^c\cap{\cV_{2}^{\delta}}\\
            g(\ux_1^n,\ux_2^n,\uy^n), & \text{otherwise},
        \end{cases}
\end{align*}
where we will define $g$ presently. Notice that the obvious choice of $g$ is $f$ itself since, if the adversary is absent, we would like the output to be (close to) $f$ applied on $(x_{1,t},x_{2,t},y_t)_{t\in[n]}$. This intuition turns out to be correct when $P_{X_1X_2Y}$ has full support. When the support is not full, clearly the definition of $f$ outside the support should not affect whether $f$ is robustly recoverable (also notice that Definition~\ref{def:1-viable}  depends only on the definition of $f$ in the support). Our decoder may indeed need to deal with inputs which fall outside the support. We obtain a $g$ for the decoder through the following lemma. Besides addressing the support, it also shows a property of 1-viable $f$'s as we explain after stating the lemma\footnote{A word on the notation is in order.
We use underbar ($\uX_1,\uX_2,\uY$) and $Q_{\uX_1\uX_2\uY}$ to denote the ``views'' of the decoder (only one of $\uX_1,\uX_2$ is potentially produced by an adversary, but the decoder does not know which one and the notation reflects this fact). We also use $Q_{\uX_1\tX_1\uX_2\tX_2\uY}$ and $\tX_1,\tX_2$ to denote the underlying true observations which are replaced by the adversary; often both $\tX_1,\tX_2$ are used together which present two plausible interpretations (see the discussion of the converse above). On the other hand, to represent things from the adversary's perspective, as in Lemma~\ref{lem:decoding-function-2user}, we use overbar $\bX_1,\bX_2$ and $W_{\bX_1|X_1},W_{\bX_2|X_2}$ to denote the replacement (manipulated) random variables and channels used to produce them, respectively.}.
\begin{lemma}\label{lem:decoding-function-2user}
If $f:\cX_1\times\cX_2\times\cY\to\cZ$ is 1-viable, there is a $g:\cX_1\times\cX_2\times\cY\to\cZ$ such that for any pair $(W_{\bX_1|X_1},W_{\bX_2|X_2})$ 
satisfying, for all $\ux_1,\ux_2,\uy$,
\begin{align*}
  \sum_{x_1}&P_{X_1X_2Y}(x_1,\ux_2,\uy)W_{\bX_1|X_1}(\ux_1|x_1)\\&\qquad=\sum_{x_2}P_{X_1X_2Y}(\ux_1,x_2,\uy)W_{\bX_2|X_2}(\ux_2|x_2), \text{ we have}
\end{align*}
\begin{itemize}
\item[(i)] $f(X_1,X_2,Y)=g(\bX_1,X_2,Y)$ under 
$P_{X_1X_2Y}W_{\bX_1|X_1}$,
\item[(ii)] $f(X_1,X_2,Y)=g(X_1,\bX_2,Y)$ under 
$P_{X_1X_2Y}W_{\bX_2|X_2}$.
\end{itemize}
If $P_{X_1X_2Y}(x_1,x_2,y)>0$, then $g(x_1,x_2,y)=f(x_1,x_2,y)$.
\end{lemma}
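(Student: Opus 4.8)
\emph{Proof plan.} The plan is to build $g$ so that it restricts to $f$ on the support of $P_{X_1X_2Y}$ while, on every triple that can occur as the decoder's view when exactly one user cheats, it equals a single ``element-wise'' value; off these triples $g$ may be set arbitrarily (say $g:=f$). Call a pair $(W_{\bX_1|X_1},W_{\bX_2|X_2})$ as in the lemma a \emph{matching pair}; both sides of the displayed identity then define the same p.m.f.\ $Q_{\uX_1\uX_2\uY}$, with the left side exhibiting $Q_{\uX_1\uX_2\uY}\in\cV_1$ and the right side exhibiting $Q_{\uX_1\uX_2\uY}\in\cV_2$. Say $(\ux_1,\ux_2,\uy)$ is \emph{reachable} if $Q_{\uX_1\uX_2\uY}(\ux_1,\ux_2,\uy)>0$ for some matching pair; note that the identity pair $(\mathrm{id},\mathrm{id})$ is matching and gives $Q_{\uX_1\uX_2\uY}=P_{X_1X_2Y}$, so the reachable set contains $\supp(P_{X_1X_2Y})$.

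Fix a matching pair and form the single-letter law $Q_{\uX_1\tX_1\uX_2\tX_2\uY}$ in which $(\uX_1,\uX_2,\uY)\sim Q_{\uX_1\uX_2\uY}$ and, conditioned on the decoder's view, $\tX_1$ and $\tX_2$ are \emph{independent}, $\tX_1$ drawn as the true user-1 observation would be in the ``user~1 cheats'' scenario and $\tX_2$ as the true user-2 observation would be in the ``user~2 cheats'' scenario; concretely $Q_{\uX_1\tX_1\uX_2\tX_2\uY}$ is the product of the two single-cheater laws $P_{X_1X_2Y}(\tx_1,\ux_2,\uy)W_{\bX_1|X_1}(\ux_1|\tx_1)$ and $P_{X_1X_2Y}(\ux_1,\tx_2,\uy)W_{\bX_2|X_2}(\ux_2|\tx_2)$ divided by their common view-marginal $Q_{\uX_1\uX_2\uY}(\ux_1,\ux_2,\uy)$. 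Marginalizing out $\tX_2$ returns $P_{X_1X_2Y}(\tx_1,\ux_2,\uy)W_{\bX_1|X_1}(\ux_1|\tx_1)$, which has $(\tX_1,\uX_2,\uY)$-marginal $P_{X_1X_2Y}$ and obeys $\uX_1\mc\tX_1\mc(\uX_2,\uY)$; symmetrically after marginalizing $\tX_1$. Hence $Q_{\uX_1\tX_1\uX_2\tX_2\uY}$ satisfies (i)--(ii) of Definition~\ref{def:1-viable}, so $1$-viability yields $f(\tX_1,\uX_2,\uY)=f(\uX_1,\tX_2,\uY)$ almost surely. Conditioning on any reachable $(\uX_1,\uX_2,\uY)=(\ux_1,\ux_2,\uy)$: the left side is a function of $\tX_1$ alone, the right side of $\tX_2$ alone, and the two are independent, so an almost-sure equality between them forces each to be almost surely constant. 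Denote the common value $c(\ux_1,\ux_2,\uy)$. Reading off the two sides gives $c(\ux_1,\ux_2,\uy)=f(\tx_1,\ux_2,\uy)$ for every $\tx_1$ with $P_{X_1X_2Y}(\tx_1,\ux_2,\uy)W_{\bX_1|X_1}(\ux_1|\tx_1)>0$, and $c(\ux_1,\ux_2,\uy)=f(\ux_1,\tx_2,\uy)$ for every $\tx_2$ with $P_{X_1X_2Y}(\ux_1,\tx_2,\uy)W_{\bX_2|X_2}(\ux_2|\tx_2)>0$.

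The delicate point --- the heart of the argument --- is that $c(\ux_1,\ux_2,\uy)$ must not depend on which matching pair reaches $(\ux_1,\ux_2,\uy)$. For this I would use that the defining identity of a matching pair is \emph{linear} in $(W_{\bX_1|X_1},W_{\bX_2|X_2})$ taken jointly: given two matching pairs that both reach $(\ux_1,\ux_2,\uy)$, their componentwise average is again a matching pair, still reaches $(\ux_1,\ux_2,\uy)$, and retains (with positive probability) every routing by which either pair sent some $\tx_1$ to $\ux_1$ or some $\tx_2$ to $\ux_2$. Applying the constancy read-off of the previous paragraph to the averaged pair therefore equates the two candidate values of $c(\ux_1,\ux_2,\uy)$. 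So $c$ is well defined on the reachable set; set $g(\ux_1,\ux_2,\uy):=c(\ux_1,\ux_2,\uy)$ there and $g:=f$ off the reachable set.

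It remains to read off the three conclusions. If $P_{X_1X_2Y}(x_1,x_2,y)>0$, then $(\mathrm{id},\mathrm{id})$ is a matching pair reaching $(x_1,x_2,y)$ and forcing $\tX_1=x_1$ on that view, whence $g(x_1,x_2,y)=c(x_1,x_2,y)=f(x_1,x_2,y)$; this is the last sentence of the lemma. For (i): under $P_{X_1X_2Y}W_{\bX_1|X_1}$ the realized $(X_1,X_2,Y,\bX_1)$ almost surely has $P_{X_1X_2Y}(X_1,X_2,Y)W_{\bX_1|X_1}(\bX_1|X_1)>0$, and then $Q_{\uX_1\uX_2\uY}(\bX_1,X_2,Y)\ge P_{X_1X_2Y}(X_1,X_2,Y)W_{\bX_1|X_1}(\bX_1|X_1)>0$, so $(\bX_1,X_2,Y)$ is reachable and the read-off with $\tx_1=X_1$ gives $g(\bX_1,X_2,Y)=c(\bX_1,X_2,Y)=f(X_1,X_2,Y)$; (ii) is symmetric. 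The expected main obstacle is exactly the well-definedness of $c$ across matching pairs, and the two ingredients that resolve it are the ``independent and almost surely equal implies almost surely constant'' principle and the convexity of the set of matching pairs.
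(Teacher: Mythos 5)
Your proof is correct and follows essentially the same route as the paper's: both construct, for each pair of channels inducing the same view distribution, a joint law on $(\uX_1,\tX_1,\uX_2,\tX_2,\uY)$ with $\tX_1,\tX_2$ conditionally independent given the view, invoke $1$-viability to force $f(\tX_1,\uX_2,\uY)=f(\uX_1,\tX_2,\uY)$ a.s., and establish well-definedness of the resulting $g$ by taking convex combinations of the two channel pairs. The paper's version first treats the full-support case separately via a $\lambda$-mixture with the identity channel before giving the general construction, whereas you handle both at once; your explicit use of ``independent and a.s.\ equal implies a.s.\ constant'' to extract the common value is a slightly cleaner packaging of the same step.
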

The lemma (which is proved in Section~\ref{sec:decoding-function-2user-proof}) says that, for a 1-viable $f$, there is a $g$ with the following property: Consider any pair of DMCs $W_{\bX_1|X_1},W_{\bX_2|X_2}$ that a malicious user 1 or 2, respectively (not simultaneously) may employ to attack such that they induce the same joint distribution on the reported observations and side-information at the decoder. Then, under either attack, applying $g$ on the reported observations and side-information recovers $f(X_1,X_2,Y)$. 

Note that the above lemma does not immediately lead to a proof of achievability of Theorem~\ref{thm:2user} since (i) the adversary need not use a DMC and (ii) even if a DMC is used, the adversary could select one which induces a distribution too close to $\cV_1\cap\cV_2$ for the decoder to detect reliably who the adversary is (note that the adversary may select its attack for each given decoder, i.e., for each $\delta,n$).
We employ the method of types~\cite[Chapter~2]{csiszar2011information} to show the following (in Section~\ref{sec:2user-typicality-proof}):
\begin{lemma}\label{lem:2user-typicality}
Consider a 1-viable $f$ along with $g$ from Lemma~\ref{lem:decoding-function-2user}.
Suppose $X_1^n,X_2^n,Y^n,\bX_1^n$ are jointly distributed as
\[P(x_1^n,x_2^n,y^n,\bx_1^n) = \left( \prod_{t=1}^n P_{X_1X_2Y}(x_{1,t},x_{2,t},y_t)\right) W_1(\bx_1^n|x_1^n),\]
where $W_1$ is {\em not} necessarily memoryless. Then
\begin{itemize}
    \item[(i)] $\pr(\type{\bX_1^n,X_2^n,Y^n}\notin\cV_1^{\delta}) \leq 2^{-\Omega(n)}$ for all $\delta>0$, and
    \item[(ii)] there is a function $\gamma:\mathbb{R}^+\to\mathbb{R}^+$, which does not depend on $W_1$ or $n$, such that $\gamma(\delta)\to0$ as $\delta\to 0$ and
\begin{align*}\pr\big((\type{\bX_1^n,X_2^n,Y^n}&\in\cV_1^{\delta}\cap\cV_2^{\delta})\cap(\ham(f(X_1^n,X_2^n,Y^n),\\ &\qquad g(\bX_1^n,X_2^n,Y^n))>\gamma(\delta))\big)\leq 2^{-\Omega(n)}.
\end{align*}
\end{itemize}
The factors hidden in $\Omega(n)$ do not depend on $W_1$.
\end{lemma}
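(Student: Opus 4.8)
The plan is to reduce both claims to single-letter statements about empirical distributions by conditioning on the adversary's input and output. Fix a realization $(X_1^n,\bX_1^n)=(x_1^n,\bx_1^n)$ and set $\pi:=\type{x_1^n,\bx_1^n}$, a p.m.f.\ on $\cX_1\times\cX_1$. Because $\bX_1^n$ is a randomized function of $X_1^n$ alone, the joint law factors so that, conditioned on $(X_1^n,\bX_1^n)=(x_1^n,\bx_1^n)$, the pairs $(X_{2,t},Y_t)$ are independent with $(X_{2,t},Y_t)\sim P_{X_2Y|X_1=x_{1,t}}$. Grouping the indices $t$ by the value $(x_{1,t},\bx_{1,t})$ and applying the method of types within each group (the indices lying in groups of relative size below a threshold $\epsilon'$ account for at most $|\cX_1|^2\epsilon'$ of the total and can be ignored at that accuracy), one obtains: for every target $\delta'>0$, except with probability $2^{-\Omega(n)}$ --- uniformly in $(x_1^n,\bx_1^n)$, since the within-group law $P_{X_2Y|X_1=a}$ does not depend on $W_1$, hence uniformly in $W_1$ --- the type $\type{x_1^n,\bx_1^n,X_2^n,Y^n}$ is within $\delta'$ in total variation of $\widehat{Q}^{\pi}_{X_1\uX_1\uX_2\uY}$, where $\widehat{Q}^{\pi}(a,b,c,d):=\pi(a,b)\,P_{X_2Y|X_1=a}(c,d)$. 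Also $\type{X_1^n}$ is within $\delta'$ of $P_{X_1}$ except with probability $2^{-\Omega(n)}$. Let $\cG$ be the event on which both approximations hold. Claim (i) then follows: on $\cG$, $\widehat{Q}^{\pi}_{\uX_1\uX_2\uY}$ is within $O(\delta')$ of $(b,c,d)\mapsto\sum_a P_{X_1X_2Y}(a,c,d)\,\pi_{\uX_1|X_1}(b|a)$, which belongs to $\cV_1$ (adversary channel $\pi_{\uX_1|X_1}$); choosing $\delta'$ small in terms of $\delta$ puts $\type{\bX_1^n,X_2^n,Y^n}$ in $\cV_1^{\delta}$ on $\cG$.

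For (ii), note that $\widehat{Q}^{\pi}$ has two structural features used below: its $(X_1,\uX_2,\uY)$-marginal is within $O(\delta')$ of $P_{X_1X_2Y}$ (using $\type{X_1^n}\approx P_{X_1}$), and it satisfies $\uX_1\mc X_1\mc(\uX_2,\uY)$ \emph{exactly}. Accordingly, define for $\beta>0$
\[ \gamma(\beta):=\sup_{R}\ \sum_{a,b,c,d} R(a,b,c,d)\,\bb{1}\{f(a,c,d)\neq g(b,c,d)\}, \]
the supremum over p.m.f.s $R_{X_1\uX_1\uX_2\uY}$ such that (a) $\dtv(R_{X_1\uX_2\uY},P_{X_1X_2Y})\le\beta$, (b) $\uX_1\mc X_1\mc(\uX_2,\uY)$ under $R$, and (c) $R_{\uX_1\uX_2\uY}\in\cV_2^{\beta}$. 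The feasible set is nonempty (it contains $R(a,b,c,d)=\bb{1}\{a=b\}P_{X_1X_2Y}(a,c,d)$) and compact --- the constraints are closed, using that $\cV_2$, being a continuous image of the compact set of channels $W_{\bX_2|X_2}$, is compact, hence $\cV_2^{\beta}$ is closed --- and the objective is linear, so the supremum is attained; $\gamma$ is non-decreasing, $[0,1]$-valued, and depends only on $(P_{X_1X_2Y},f,g)$, not on $W_1$ or $n$.

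The crux is $\gamma(\beta)\to0$ as $\beta\to0$, which I would establish by compactness together with Lemma~\ref{lem:decoding-function-2user}. If it failed, then, $\gamma$ being non-decreasing, $\gamma(\beta)\ge c>0$ for all $\beta$, so we could choose $R^{(m)}$ feasible for $\beta=1/m$ with objective $\ge c$; passing to a convergent subsequence $R^{(m)}\to R^{\star}$ in the simplex yields $R^{\star}_{X_1\uX_2\uY}=P_{X_1X_2Y}$, $\uX_1\mc X_1\mc(\uX_2,\uY)$ under $R^{\star}$, and $R^{\star}_{\uX_1\uX_2\uY}\in\bigcap_{\beta>0}\cV_2^{\beta}=\cV_2$. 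Put $W_{\bX_1|X_1}:=R^{\star}_{\uX_1|X_1}$ and pick $W_{\bX_2|X_2}$ witnessing $R^{\star}_{\uX_1\uX_2\uY}\in\cV_2$; the Markov relation together with $R^{\star}_{X_1\uX_2\uY}=P_{X_1X_2Y}$ makes both sides of the hypothesis of Lemma~\ref{lem:decoding-function-2user} equal to $R^{\star}_{\uX_1\uX_2\uY}$, so conclusion (i) of that lemma forces $\sum_{a,b,c,d}R^{\star}(a,b,c,d)\,\bb{1}\{f(a,c,d)\neq g(b,c,d)\}=0$, contradicting continuity of the objective and the bound $\ge c$. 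I expect this step --- relaxing the exact-equality hypothesis of Lemma~\ref{lem:decoding-function-2user} to approximate consistency of the induced distributions --- to be the main obstacle, which is why I route it through compactness rather than an explicit modulus-of-continuity estimate.

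Finally, on $\cG\cap\{\type{\bX_1^n,X_2^n,Y^n}\in\cV_1^{\delta}\cap\cV_2^{\delta}\}$ the distribution $\widehat{Q}^{\pi}$ is feasible for $\gamma(\delta+O(\delta'))$ (its $\uX_1\uX_2\uY$-marginal lies in $\cV_2^{\delta+\delta'}$, its $X_1\uX_2\uY$-marginal is $O(\delta')$-close to $P_{X_1X_2Y}$, and it is Markov), so the objective at $\widehat{Q}^{\pi}$ is at most $\gamma(\delta+O(\delta'))$; since $\type{X_1^n,\bX_1^n,X_2^n,Y^n}$ is within $\delta'$ of $\widehat{Q}^{\pi}$ and the objective is linear with $\{0,1\}$ coefficients, the objective at that type --- which equals exactly $\ham\!\big(f(X_1^n,X_2^n,Y^n),g(\bX_1^n,X_2^n,Y^n)\big)$ --- is at most $\gamma(\delta+O(\delta'))+O(\delta')$. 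Choosing $\delta'$ a fixed multiple of $\delta$ and defining the target distortion function of the lemma to be this bound (which tends to $0$ as $\delta\to0$ by the previous paragraph), we see that on $\cG$ the event in (ii) cannot occur, so its probability is at most $\pr(\cG^{c})\le 2^{-\Omega(n)}$. All union bounds taken are over types, of which there are at most polynomially many in $n$, so the factors hidden in $\Omega(n)$ do not depend on $W_1$, as required.
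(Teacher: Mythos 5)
Your proof is correct, and it takes a genuinely different route from the paper's on both halves of the argument. For handling the non-memoryless channel $W_1$, the paper symmetrizes $W_1$ by averaging over permutations of $[n]$ and then analyzes the resulting symmetric channel with the method of types, showing that the joint type is, with high probability, KL-close to some product $P_{X_1X_2Y}Q_{\bX_1'|X_1'}$ (via the decomposition $D(Q\Vert P Q')=D(Q_{X_1X_2Y}\Vert P)+I(X_2Y;\bX_1|X_1)$ and Pinsker). You instead condition on $(X_1^n,\bX_1^n)$, observe that the $(X_{2,t},Y_t)$ are then conditionally independent with law $P_{X_2Y|X_1=x_{1,t}}$, and apply concentration group-by-group to show the joint type is TV-close to the factored distribution $\widehat{Q}^\pi$. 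This is more elementary (no KL/Pinsker machinery) and makes the uniformity in $W_1$ transparent, since after conditioning $W_1$ has disappeared; the symmetrization trick is the more standard device in AVC-style arguments, which is likely why the paper (following~\cite{narayanan2023complete}) uses it. For part~(ii), the paper passes through Lemma~\ref{lem:2user-technical-single-letter}, which is itself built from the continuity Lemmas~\ref{lem:comp-int} and~\ref{lem:2continuous}; you short-circuit all of this by defining $\gamma(\beta)$ directly as a supremum of the distortion objective over a compact feasibility region (approximate marginal, exact Markov, $\cV_2^\beta$ membership) and proving $\gamma(\beta)\to0$ by one compactness argument that hands the limit point straight to Lemma~\ref{lem:decoding-function-2user}. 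Two small points you glossed over but which are easily patched: $R^\star_{\uX_1|X_1}(\cdot|a)$ must be fixed arbitrarily when $P_{X_1}(a)=0$, and your feasibility constraints (a)+(b) together implicitly place $R_{\uX_1\uX_2\uY}$ in $\cV_1^\beta$ even though you do not list that constraint explicitly, which is needed to mirror the hypotheses of Lemma~\ref{lem:decoding-function-2user} exactly; neither affects correctness. Your packaging trades the paper's reusable continuity lemmas (which it reuses in Appendix~\ref{app:general-adversary-structure}) for a tighter, more self-contained argument here.
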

The lemma considers the case where a (potentially) malicious user~1 employs a (not necessarily memoryless) $W_1$ to produce a purported $\bX_1^n$ from $X_1^n$. Then, the lemma asserts that (i) the empirical distribution of the reported observations and side-information will (w.h.p.) lie in $\cV_1^{\delta}$. Hence the decoder will not make the error of naming user~2 as malicious. Furthermore, (ii) the event where it decides to output an estimate (which depends on whether the empirical distribution also lies in $\cV_2^{\delta}$) and the output exceeds distortion $\gamma(\delta)$ has a vanishingly small probability, where $\gamma(\delta)$ vanishes as $\delta\to0$. It is easy to see that this lemma, along with its analog for user~2, proves the achievability of Theorem~\ref{thm:2user} -- when either user is malicious (but not both), the above argument suffices; when both users are honest, the lemma (and its analog for user~2) may be invoked with $W$ being the identity channel to conclude (from its part~(i)) that the empirical distribution lies in $\cV_1^\delta\cap\cV_2^\delta$ w.h.p. and the decoder will output an estimate; moreover, (by part~(ii)) this output will incur a distortion no more than $\gamma(\delta)$ w.h.p.

For the sake brevity, here we present a proof of the lemma when the $W_1$ is a DMC; the proof for the general case (in Section~\ref{sec:2user-typicality-proof}) uses an additional symmetrization trick via a random permutation of $[n]$; the same idea was used in the achievability proof of~\cite{narayanan2023complete}. Suppose $W_1$ is a DMC, i.e., let $W_1(\bx_1^n|x_1^n)=W_{\dmc}^n(\bx_1^n|x_1^n):=\prod_{t=1}^n W_{\dmc}(\bx_{1,t}|x_{1,t})$ for some channel $W_{\dmc}$. 
Then, part~(i) of the lemma follows from standard properties of types (\cite[Lemma~2.6]{csiszar2011information} and Pinsker's inequality) since $(X_1^n,X_2^n,Y^n,\bX_1^n)\sim P_{X_1X_2Y}W_{\dmc}$ i.i.d. To show part~(ii), we use the following technical lemma which is a continuous version of Lemma~\ref{lem:decoding-function-2user}. Notice that (part (i) of) Lemma~\ref{lem:decoding-function-2user} says that there is a $g$ such that for every $W_{\bX_1|X_1}$ for which the induced p.m.f. 
$Q_{\uX_1\uX_2\uY}(\ux_1,\ux_2,\uy):=\sum_{x_1}P_{X_1X_2Y}(x_1,\ux_2,\uy)W_{\bX_1|X_1}(\ux_1|x_1)$ belongs to $\cV_1\cap\cV_2$ (which is another way of saying that there is a corresponding $W_{\bX_2|X_2}$ which induces the same $Q_{\uX_1\uX_2\uY}$), we have $\pr(f(X_1,X_2,Y)\neq g(\bX_1,X_2,Y))=0$ under $P_{X_1X_2Y}W_{\bX_1|X_1}$. The following lemma (proved in Section~\ref{sec:2user-technical-single-letter-proof}) generalizes this:

\begin{lemma}\label{lem:2user-technical-single-letter}
For a 1-viable $f$ along with $g$ as in Lemma~\ref{lem:decoding-function-2user}, 
there is a 
$\gamma:\bbR^+\to\bbR^+$ such that $\gamma(\delta)\to0$ as $\delta\to0$ 
satisfying the following: for $\delta>0$, if 
$Q_{X_1\uX_2\uY\uX_1}$ is such that $Q_{\uX_1\uX_2\uY}\in\cV_1^{\delta}\cap\cV_2^{\delta}$ and $\exists$ $W_{\bX_1|X_1}$ for which $\dtv(Q_{X_1\uX_2\uY\uX_1}, P_{X_1X_2Y}W_{\bX_1|X_1})\leq \delta$, then $\pr(f(X_1,\uX_2,\uY)\neq g(\uX_1,\uX_2,\uY))\leq \gamma(\delta)$ under the p.m.f. $Q_{X_1\uX_2\uY\uX_1}$. 
\end{lemma}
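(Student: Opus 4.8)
The plan is to promote the exact identity of Lemma~\ref{lem:decoding-function-2user} to the claimed robust form by a compactness/continuity argument over the finite-dimensional probability simplices and channel polytopes. For $\delta>0$ let $\cF(\delta)$ be the set of p.m.f.s $Q_{X_1\uX_2\uY\uX_1}$ meeting both hypotheses of the lemma with parameter $\delta$, and set
\[
  \gamma(\delta)\defineqq \delta+\sup_{Q\in\cF(\delta)}\pr_{Q}\big(f(X_1,\uX_2,\uY)\neq g(\uX_1,\uX_2,\uY)\big),
\]
which is well defined (at most $1+\delta$), strictly positive, and nondecreasing in $\delta$; for this choice the conclusion holds by construction. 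What remains is to show $\gamma(\delta)\to0$ as $\delta\to0$, i.e.\ $\sup_{Q\in\cF(\delta)}Q(\cD)\to0$, where $\cD\defineqq\{(x_1,\ux_2,\uy,\ux_1):f(x_1,\ux_2,\uy)\neq g(\ux_1,\ux_2,\uy)\}$ is a \emph{fixed} subset of $\cX_1\times\cX_2\times\cY\times\cX_1$ (all alphabets being finite), so $Q\mapsto Q(\cD)=\pr_Q(f(X_1,\uX_2,\uY)\neq g(\uX_1,\uX_2,\uY))$ is linear, hence continuous, in $Q$.

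I would argue by contradiction. If $\gamma(\delta)\not\to0$, then by monotonicity there are $\gamma_0>0$, a sequence $\delta_m\downarrow0$, and $Q^{(m)}\in\cF(\delta_m)$ with $Q^{(m)}(\cD)>\gamma_0$ for every $m$. By the hypothesis ``$\exists W_{\bX_1|X_1}$'' there is a channel $W^{(m)}_{\bX_1|X_1}$ with $\dtv(Q^{(m)}_{X_1\uX_2\uY\uX_1},\,P_{X_1X_2Y}W^{(m)}_{\bX_1|X_1})\leq\delta_m$, and since $Q^{(m)}_{\uX_1\uX_2\uY}\in\cV_2^{\delta_m}$ there is a channel $V^{(m)}_{\bX_2|X_2}$ with $\dtv(Q^{(m)}_{\uX_1\uX_2\uY},\,P_{X_1X_2Y}V^{(m)}_{\bX_2|X_2})\leq\delta_m$. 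Passing to a subsequence, let $Q^{(m)}\to Q^\star$, $W^{(m)}_{\bX_1|X_1}\to W^\star_{\bX_1|X_1}$, $V^{(m)}_{\bX_2|X_2}\to V^\star_{\bX_2|X_2}$. Letting $m\to\infty$ turns the two total-variation bounds into equalities: $Q^\star_{X_1\uX_2\uY\uX_1}=P_{X_1X_2Y}W^\star_{\bX_1|X_1}$ (hence $Q^\star_{X_1\uX_2\uY}=P_{X_1X_2Y}$, marginalizing out $\bX_1$), and, identifying $\uX_1$ with the channel output $\bX_1$,
\begin{align*}
  Q^\star_{\uX_1\uX_2\uY}(\ux_1,\ux_2,\uy)&=\sum_{x_1}P_{X_1X_2Y}(x_1,\ux_2,\uy)\,W^\star_{\bX_1|X_1}(\ux_1|x_1)\\
  &=\sum_{x_2}P_{X_1X_2Y}(\ux_1,x_2,\uy)\,V^\star_{\bX_2|X_2}(\ux_2|x_2).
\end{align*}
Thus the pair $(W^\star_{\bX_1|X_1},V^\star_{\bX_2|X_2})$ satisfies the hypothesis of Lemma~\ref{lem:decoding-function-2user}, so part~(i) of that lemma gives $f(X_1,X_2,Y)=g(\bX_1,X_2,Y)$ almost surely under $P_{X_1X_2Y}W^\star_{\bX_1|X_1}$, i.e.\ $Q^\star(\cD)=0$. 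By continuity of $Q\mapsto Q(\cD)$, $Q^{(m)}(\cD)\to Q^\star(\cD)=0$, contradicting $Q^{(m)}(\cD)>\gamma_0>0$. Hence $\gamma(\delta)\to0$, as needed.

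The only step I expect to require care is the passage to the limit: one must choose the two witnessing channels along a \emph{common} subsequence and verify that the limiting data reconstitute exactly the premise of Lemma~\ref{lem:decoding-function-2user} — a user-$1$ channel and a user-$2$ channel inducing the \emph{same} law $Q_{\uX_1\uX_2\uY}$ on the decoder's view — so that the exact ($\delta=0$) identity from that lemma applies. The remaining ingredients (sequential compactness of the finite-dimensional simplices and channel polytopes, $1$-Lipschitzness of marginalization in $\dtv$, and continuity of $Q\mapsto Q(\cD)$ for the fixed finite set $\cD$) are routine.
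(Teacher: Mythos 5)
Your proof is correct, and it reaches the conclusion by a genuinely different (and more compact) compactness argument than the paper's. The paper first proves a lemma for the exact case $Q_{X_1\uX_2\uY\uX_1}=P_{X_1X_2Y}W_{\bX_1|X_1}$ (Lemma~\ref{lem:old-2user-technical-single-letter}), and then obtains the stated result by a triangle-inequality step. That intermediate lemma is in turn built from two auxiliary continuity moduli: Lemma~\ref{lem:comp-int} (points of $\cV_1^\delta\cap\cV_2^\delta$ are close to $\cV_1\cap\cV_2$) and Lemma~\ref{lem:2continuous} (a channel $W$ whose induced view-law is close to a closed $\cV\subseteq\cV_1$ is $d$-close to some channel in $\ttf_1^{-1}(\cV)$); these are each proved by the same subsequence/contradiction device you use, and they are stitched together to produce an explicit $\gamma'(\delta)=|\cX_1|\,\eta_{\cV_1\cap\cV_2}(\varepsilon(\delta))$. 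You instead define $\gamma(\delta)$ directly as the supremum over $\cF(\delta)$ of the target probability, and verify $\gamma(\delta)\to 0$ by passing a common subsequence of $(Q^{(m)},W^{(m)},V^{(m)})$ to its limit, landing precisely in the hypothesis of Lemma~\ref{lem:decoding-function-2user} and invoking continuity of the linear functional $Q\mapsto Q(\cD)$. What you lose is the modularity — the paper's two auxiliary lemmas are reused essentially verbatim for the general adversary-structure achievability (as Lemmas~\ref{lem:kcompact} and~\ref{lem:kcontinuous}), and Lemma~\ref{lem:2continuous}'s channel-distance estimate feeds directly into the $|\cX_1|$ factor — whereas your single argument would have to be redone per non-intersecting index set. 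What you gain is a noticeably shorter proof that collapses the ``exact vs.\ $\delta$-approximate'' split into one step. Two small remarks: the additive $+\delta$ in your definition of $\gamma$ is harmless but unnecessary (it only serves to force strict positivity), and the $\cV_1^\delta$ hypothesis is redundant in your argument since $\dtv$-proximity to $P_{X_1X_2Y}W_{\bX_1|X_1}$ already places the $(\uX_1,\uX_2,\uY)$-marginal in $\cV_1^\delta$, as you implicitly observed by only using the $\cV_2^\delta$ membership and the channel witness.
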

 
Now, to see part~(ii) of Lemma~\ref{lem:2user-typicality} for 
$(X_1^n,X_2^n,Y^n,\bX_1^n)\sim P_{X_1X_2Y}W_{\dmc}$ i.i.d., let $h_{x_1^n,x_2^n,y^n,\bx_1^n}$ be 1 if
$(\type{\bx_1^n,x_2^n,y^n}\in\cV_1^{\delta}\cap\cV_2^{\delta})\cap(\ham(f(x_1^n,x_2^n,y^n),g(\bx_1^n,x_2^n,y^n))>\gamma(\delta))$
 and 0 otherwise. Define
 \begin{align*}\cQ_n(\delta)&=\{Q_{X_1\uX_2\uY\uX_1}\in\cP_n
 :Q_{\uX_1\uX_2\uY}\in \cV_1^{\delta}\cap\cV_2^{\delta},\\ 
 &\qquad\qquad
 \dtv(Q_{X_1\uX_2\uY\uX_1}, P_{X_1X_2Y}W_{\dmc})\leq \delta\}.\end{align*} 
 For any $(x_1^n,x_2^n,y^n,\bx_1^n)$, the average distortion $\ham(f(x_1^n,x_2^n,y^n),g(\bx_1^n,x_2^n,y^n)$ coincides with $\pr(f(X_1,\uX_2,\uY)\neq g(\uX_1,\uX_2,\uY))$ under the p.m.f. $Q_{X_1\uX_2\uY\uX_1}:=\type{x_1^n,x_2^n,y^n,\bx_1^n}$.
Hence, by Lemma~\ref{lem:2user-technical-single-letter}, $h_{x_1^n,x_2^n,y^n,\bx_1^n}=0$ if $\type{x_1^n,x_2^n,y^n,\bx_1^n}\in\cQ_n(\delta)$.
  Using this observation in step (a) below, the probability of interest in part (ii) of Lemma~\ref{lem:2user-typicality} is
 \begin{align*}
 &\sum_{x_1^n, x_2^n, y^n,\bx_1^n} P^n_{X_1 X_2 Y}(x_1^n,x_2^n,y^n) W^n_{\dmc}(\bx^n_1|x^n_2)h_{x_1^n,x_2^n,y^n,\bx_1^n}\\
    & \stackrel{\text{(a)}}{=} \sum_{\substack{x_1^n,x^n_2,y^n,\bx_1^n:\\\type{x_1^n,x^n_2,y^n,\bx_1^n} \in \cQ_n(\delta)^c}} \hspace{-1cm}P^n_{X_1 X_2 Y}(x_1^n,x_2^n,y^n)W^n_{\dmc}(\bx_1^n|x^n)
\stackrel{\text{(b)}}{\leq} \; 2^{-\Omega(n)},
 \end{align*} where (b) is a consequence of the fact that $\pr(\dtv(\type{X_1^n,X_2^n,Y^n,\bX_1^n},P_{X_1X_2Y}W_{\dmc})>\delta)=2^{-\Omega(n)}$ which follows from \cite[Lemma~2.6]{csiszar2011information} and Pinsker's inequality.

\section{Main Result}\label{sec:main-result}
\paragraph*{Notation} For this section, we use the following compact notation: Recall that for $\cA=\{j_1,j_2,\ldots,j_l\}\subseteq[k]$, we write $X_{\cA}$ to denote $(X_{j_1},X_{j_2},\ldots,X_{j_l})$. Similarly $\uX_{\cA}$ and so on. We will also write $\tX^i_{\cA}$ to denote $(\tX^i_{j_1},\tX^i_{j_2},\ldots,\tX^i_{j_l})$, where $\tX^i_j$ denotes a random variable indexed by $(i,j)$.

\begin{definition}[$s$-viable] \label{def:s-viable} We say a function $f$ with domain $\cX_1\times\ldots\cX_k\times\cY$ is s-{\em viable} 
if for any collection $\cA_1,\cA_2,\ldots,\cA_m$ of distinct subsets of $[k]$ such that $|\cA_i|\leq s,\, i\in[k]$ and $\bigcap_{i=1}^m \cA_i=\varnothing$, under every joint p.m.f. $Q_{\uX_{[k]},\uY,\left(\tX^{i}_{\cA_i}\right)_{i\in[m]}}$ over $\cX_1\times\ldots\cX_k\times\cY\times\prod_{i=1}^m\prod_{j\in \cA_i}\cX_j$ 
satisfying, for each $i\in[m]$,
\begin{itemize}
\item[(a)] 
$Q_{\ina{\tX^i_{\cA_i},\,\uX_{{\cA}^c_i}},\uY}=P_{X_{[k]}Y}$, and
\item[(b)] 
$\uX_{\cA_i} \mc \tX^i_{\cA_i} \mc (\uX_{{\cA}^c_i},\uY)$,
\end{itemize}
we have, for all $i,i'\in[m]$, (with probability 1)
\[ f\inp{\ina{\tX^{i}_{\cA_i},\,\uX_{{\cA}^c_i}},\uY} = f\inp{\ina{\tX^{i'}_{\cA_{i'}},\,\uX_{{\cA}^c_{i'}}},\uY}.\]
\end{definition}

The ideas in the previous section readily generalize to give the following single-letter characterization.
\begin{theorem}\label{thm:main-threshold}
A function $f$ with domain $\cX_1\times\ldots\times\cX_k\times\cY$ is $s$-{robustly} {recoverable} if and only if it is $s$-viable.
\end{theorem}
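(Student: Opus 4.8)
The plan is to prove Theorem~\ref{thm:main-threshold} by mirroring the structure used for the $(k=2,s=1)$ case in Section~\ref{sec:2user}, lifting each of the three supporting lemmas to the general setting. For the converse (``only if''), suppose $f$ is $s$-robustly recoverable but not $s$-viable. Then there is a collection $\cA_1,\ldots,\cA_m$ of distinct subsets of $[k]$, each of size at most $s$, with $\bigcap_{i=1}^m \cA_i = \varnothing$, and a joint p.m.f. $Q_{\uX_{[k]},\uY,(\tX^i_{\cA_i})_{i\in[m]}}$ satisfying (a) and (b) for every $i\in[m]$, under which $f(\ina{\tX^i_{\cA_i},\uX_{\cA_i^c}},\uY) \neq f(\ina{\tX^{i'}_{\cA_{i'}},\uX_{\cA_{i'}^c}},\uY)$ with positive probability for some $i,i'$. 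For each $i\in[m]$, consider the attack scenario in which the adversary controls $\cA_i$, the underlying honest sources are $\ina{\tX^i_{\cA_i},\uX_{\cA_i^c}},\uY$ drawn i.i.d.\ from $P_{X_{[k]}Y}$ (this is legitimate by (a)), and the adversary applies the DMC $Q_{\uX_{\cA_i}|\tX^i_{\cA_i}}$ ($n$ times) as $W_{\cA_i}$ — well-defined by the Markov condition (b). In all $m$ scenarios the decoder's input $(\uX_{[k]}^n,\uY^n)$ has the same i.i.d.\ law $Q_{\uX_{[k]}\uY}^n$, so the decoder cannot distinguish them. The condition $\bigcap_i \cA_i = \varnothing$ guarantees that no single index lies in every $\cA_i$; hence whatever index $\phi_n$ outputs lies in $\cA_j^c$ for some $j$, triggering $\cE_1(\cA_j)$ in that scenario — so on a non-vanishing fraction of the probability mass (union bound over the $m$ scenarios) the decoder must instead output an estimate $\hZ^n\in\cZ^n$, and by the error criterion that estimate must be within distortion $\gamma$ of $Z^n$ in \emph{every} scenario simultaneously. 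Since $Z^n$ differs across scenarios in a non-vanishing fraction of coordinates (by the single-letter inequality and a type/law-of-large-numbers argument), taking $\gamma$ smaller than that gap yields a contradiction for all large $n$.

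For achievability (``if''), generalize the three lemmas. First, the analog of Lemma~\ref{lem:decoding-function-2user}: for $s$-viable $f$, there is a $g:\cX_1\times\cdots\times\cX_k\times\cY\to\cZ$, agreeing with $f$ on the support of $P_{X_{[k]}Y}$, such that whenever a family of channels $(W_{\bX_{\cA_i}|X_{\cA_i}})_{i\in[m]}$ (with $\cA_i$ distinct, $|\cA_i|\le s$, $\bigcap_i\cA_i=\varnothing$) induces a \emph{common} joint law $Q_{\uX_{[k]}\uY}$ on the decoder's view — i.e.\ $\sum_{x_{\cA_i}}P_{X_{[k]}Y}(\ina{x_{\cA_i},\ux_{\cA_i^c}},\uy)W_{\bX_{\cA_i}|X_{\cA_i}}(\ux_{\cA_i}|x_{\cA_i})$ is the same function of $(\ux_{[k]},\uy)$ for all $i$ — then applying $g$ element-wise on the view recovers $f(X_{[k]},Y)$ under each attack $P_{X_{[k]}Y}W_{\bX_{\cA_i}|X_{\cA_i}}$. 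This is extracted by a consistency/extension argument from Definition~\ref{def:s-viable}. Second, define $\cV_{\cA}$ as the set of single-letter decoder views inducible by an adversary controlling $\cA$, and $\cV_{\cA}^\delta$ its $\delta$-blowup in total variation. The decoder: if the type of $(\ux_{[k]}^n,\uy^n)$ lies in $\bigcap_{\cA\in\cS}\cV_{\cA}^\delta$ for some collection $\cS$ of size-$\le s$ sets with $\bigcap_{\cA\in\cS}\cA=\varnothing$, output $g$ element-wise; otherwise the family of $\cA$ for which the type lies in $\cV_{\cA}^\delta$ has a common element (since no $\varnothing$-intersecting subfamily qualifies), and any index in that common intersection is named — it is necessarily one of the true adversarial users. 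Third, the analog of Lemma~\ref{lem:2user-typicality}/\ref{lem:2user-technical-single-letter}: if the true adversary set is $\cA^\star$ using an arbitrary (not memoryless) $W_{\cA^\star}$, then (i) the type of the view lies in $\cV_{\cA^\star}^\delta$ w.h.p., so $\cA^\star$ is never in the named-but-honest error; and (ii) conditioned on the decoder outputting an estimate, the type lies in a common intersection $\bigcap_{\cA\in\cS}\cV_{\cA}^\delta$ with $\cA^\star\in\cS$ (else no honest user could be named, but $\cA^\star$ is honest-complement-consistent), hence by the single-letter lemma $\ham(f(X_{[k]}^n,Y^n),g(\ux_{[k]}^n,\uy^n)) \le \gamma(\delta)$ w.h.p., with $\gamma(\delta)\to 0$. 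The non-memoryless case is handled exactly as in Section~\ref{sec:2user-typicality-proof}, by pre-composing $W_{\cA^\star}$ with a uniformly random permutation of $[n]$ and arguing that this symmetrization does not help the adversary while making the induced per-coordinate law i.i.d.\ in the relevant sense, then invoking the method of types as in the displayed computation for the $k=2$ case.

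The main obstacle I expect is the combinatorial bookkeeping around the intersection condition $\bigcap_i\cA_i=\varnothing$ and its interaction with the decoder's decision rule. In the $k=2,s=1$ case there are only two singletons and ``$\cV_1^\delta\cap(\cV_2^\delta)^c$'' cleanly separates the two attackers; for general $k,s$ one must argue that the collection of sets $\cA$ whose attack is ``type-consistent'' with the observed view, if it fails to have nonempty intersection, always contains a sub-collection witnessing $s$-viability — and conversely that when it \emph{does} have nonempty intersection, picking any common index is safe because the true attacker set's indices, or at least one honest-identifying index, survive. Making the decoder well-defined (there may be several maximal type-consistent collections) and checking that Lemma~\ref{lem:decoding-function-2user}'s generalization applies to precisely the collection the decoder acts on requires care: one wants the single-letter guarantee for the collection $\{\cA : \type{} \in \cV_{\cA}^\delta\}$ itself, which is why the continuous/robust version (analog of Lemma~\ref{lem:2user-technical-single-letter}) must be stated for arbitrary collections with empty intersection and a uniform modulus $\gamma(\delta)$. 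The rest — the method-of-types estimates, Pinsker, the permutation symmetrization, and the reduction after a user is identified — is routine given the $k=2$ template, though the $s$-viability extraction of $g$ (showing the element-wise function is well-defined and consistent across \emph{all} admissible collections at once) will need the full strength of Definition~\ref{def:s-viable} applied with varying $m$.
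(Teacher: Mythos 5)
Your proposal follows essentially the same route as the paper, which proves a slightly more general statement (Theorem~\ref{thm:kuser-main} in Appendix~\ref{app:general-adversary-structure}, for arbitrary adversary structures) and specializes. Your converse --- building $m$ mutually indistinguishable attack scenarios from a witness to non-$s$-viability, union-bounding over the non-intersecting collection to force the decoder to output an estimate, and then using the LLN on the Hamming distortion between the two $Z^n$ vectors --- is step-for-step the paper's converse. Your achievability also matches the paper's generalization of Lemmas~\ref{lem:decoding-function-2user}, \ref{lem:2user-technical-single-letter}, and \ref{lem:2user-typicality} (as Lemmas~\ref{lem:kuser-single-letter}, \ref{lem:kuser-technical-single-letter}, and \ref{lem:kuser-blame-bound}/\ref{lem:kuser-hamming-bound}), with the same decoder that computes $\cJ=\{i:\type{\ux^n_{[k]},\uy^n}\in\cV_i^\delta\}$, names a user in $\bigcap_{i\in\cJ}\cA_i$ if that intersection is nonempty, and applies a single-letter decoding function elementwise otherwise; the random-permutation symmetrization for non-memoryless attacks is the same.

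The one place your plan over-reaches is your insistence on a \emph{single} $g:\cX_1\times\cdots\times\cX_k\times\cY\to\cZ$ ``consistent across all admissible collections at once.'' The paper is more cautious --- and I believe correctly so: Lemma~\ref{lem:kuser-single-letter} produces a separate $g_{\cI}$ for each non-intersecting collection $\cI$, well-defined within $\cI$ via the convex-combination argument, and the decoder applies $g_{\cJ}$ for the $\cJ$ it actually computes. Cross-collection agreement of $g$ \emph{outside} $\supp(P_{X_{[k]}Y})$ is not obviously implied by Definition~\ref{def:s-viable}: for two non-intersecting $\cI_1,\cI_2$ with $\cI_1\cap\cI_2=\varnothing$, a point $(\ux_{[k]},\uy)$ may be reachable under $\cI_1$ and under $\cI_2$ via two genuinely different view distributions, neither of which lies in $\bigcap_{i\in\cI_1\cup\cI_2}\cV_i$, so there is no single coupling in $\cQ_{P_{X_{[k]}Y},\cI_1\cup\cI_2}$ putting mass there, and the within-collection mixing trick does not patch $\cQ_{P,\cI_1}$ and $\cQ_{P,\cI_2}$ together. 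Since your decoder only ever acts through one collection per realization, you never actually need a global $g$; switching to a collection-indexed $g_{\cJ}$ (as the paper does) removes the issue at no cost. One minor bookkeeping item: you should also say what the decoder does when $\cJ=\varnothing$ (the paper declares ``error''); this event has vanishing probability since $\varnothing\in\coll$ and $P_{X_{[k]}Y}\in\cV_i$ for every $i$, but the decoder must be total.
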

\begin{remark}
In Appendix~\ref{app:general-adversary-structure} we obtain the characterization for a more general case where the adversary may control the users in any one of the subsets in a given collection $\coll$ (this collection is called the {\em adversary structure} in secret sharing/cryptography literature). The above theorem follows if $\coll$ consist of subsets of cardinality at most $s$.
\end{remark}

\begin{remark}\label{rmk:decidability}
We show in Appendix~\ref{app:LP-remark} that the viability condition in Definition~\ref{def:s-viable} can be checked using a linear program. 
\end{remark}

\section{An Example} \label{sec:example}

To illustrate the characterization in Theorem~\ref{thm:main-threshold}, we consider the following $k=3$ and threshold $s=2$ toy-example. Let $U,V,W$ be i.i.d. uniform bits and $X_1=U$,
\begin{align*}
(X_2,X_3)=\begin{cases}
             (U,U),&V=0\\
             (\eras_2,U),&V=1,W=0\\
             (U,\eras_3),&V=1,W=1
        \end{cases} \quad
        Y=\begin{cases}
            U,&V=0\\
            \eras,&V=1
        \end{cases}
\end{align*}
i.e., $X_1$ is a uniform bit $U$, while the side-information $Y$ is an erased version of $U$ with erasure probability 1/2 (erasure state represented by $V$ which is independent of $U$). When the bit is unerased (i.e.,  $V=0$), both $X_2=X_3=U$. When the bit is erased for $Y$ (i.e., $V=1$), exactly one of $X_2$ and $X_3$ is equal to $U$ while the other is an erasure; which of these occurs is equiprobable and is represented by the uniform bit $W$ which is independent of $U,V$. It is clear that if users 2 and 3 collude, since working together they can learn $U,V$ (and $W$), they may resample $W$ and replace $X_2,X_3$ according to the resampled $W$. Thus, the decoder cannot hope to learn $W$. 
Can it learn $U$ where it is erased in $Y$? It turns out it can.
\begin{claim}\label{cl:(3,2)-example}
    $f(X_1,X_2,X_3,Y):=(U,V)$ is 2-viable. Moreover, any 2-viable function is a function of $f$.
\end{claim}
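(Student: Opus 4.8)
The plan is to split the claim into two parts: (1) $f = (U,V)$ is $2$-viable, and (2) every $2$-viable function factors through $f$. For part (1), I would verify Definition~\ref{def:s-viable} directly. Fix a collection $\cA_1,\ldots,\cA_m$ of distinct subsets of $[3]$ of size at most $2$ with empty intersection, and a joint p.m.f.\ satisfying conditions (a) and (b) for each $i$. Since $X_1 = U$ and $Y$ determines $V$ (and, when $V=0$, also $U$), the key observations are: conditions (a)--(b) force each $\tX^i_{\cA_i}$ to be conditionally independent of the decoder's view given $\uX_{\cA_i^c}$ while preserving the marginal $P_{X_{[k]}Y}$. Because $\bigcap_i \cA_i = \varnothing$, for each user $j \in [3]$ there is some $i$ with $j \notin \cA_i$, i.e., with $X_j$ appearing ``honestly'' (untampered, equal to $\uX_j$) in the $i$-th configuration. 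In particular $\uX_1 = \tX^i_1$ for every $i$ with $1 \notin \cA_i$; and since $\{1\}$ alone cannot be all of $\bigcap$, one checks that $U$ and $V$ are pinned down consistently across all configurations: $V$ is recoverable from $\uY$ which is common to all configurations, and $U$ is recoverable from $\uX_1$ (common to all configurations that don't tamper with user~1) and, when $V=1$, from whichever of $\uX_2,\uX_3$ is not an erasure in the relevant honest configuration. The Markov conditions prevent a tampered coordinate from taking an erasure value inconsistent with the honest ones without changing a marginal. I would make this precise by arguing that under any valid $Q$, the value $(U,V) = f(\ina{\tX^i_{\cA_i},\uX_{\cA_i^c}},\uY)$ is the same measurable function of $(\uX_{[k]},\uY)$ for every $i$, hence equal across $i,i'$ a.s.

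For part (2), I would show that a $2$-viable function cannot distinguish source realizations that differ only in the value of $W$ (when $V=1$), nor anything finer than $(U,V)$. Concretely, suppose $g$ is $2$-viable; I want $g(x_1,x_2,x_3,y)$ on the support to depend only on $(u,v)$. The nontrivial case is $v=1$: I need $g$ to give the same value at the realization with $(X_2,X_3)=(\eras_2,u)$ and at the one with $(X_2,X_3)=(u,\eras_3)$ (same $u$, same $y=\eras$). To get this from $2$-viability, I would exhibit a collection and a joint p.m.f.\ $Q$ witnessing the required equality: take $\cA_1 = \{2,3\}$ (so that, with $1 \notin \cA_1$, user~1 is honest and $\uX_1 = \tX^1_1 = u$), and choose $\tX^1_{\{2,3\}}$ to be the honest $(X_2,X_3)$ while $\uX_{\{2,3\}}$ is the resampled-$W$ version — this is exactly the ``collude and resample $W$'' attack and it satisfies (a)--(b) because resampling $W$ preserves $P_{X_{[k]}Y}$ and $\uX_{\{2,3\}} \mc \tX^1_{\{2,3\}} \mc (\uX_1,\uY)$ holds (given the honest $(X_2,X_3)$, which reveals $U$, one can generate the resampled pair independently of $\uX_1 = U$ and $\uY$). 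Pairing this with a second subset, e.g.\ $\cA_2 = \{1\}$ under the identity (no tampering), gives $\bigcap = \varnothing$ and forces $g$ evaluated at the resampled view to equal $g$ evaluated at the honest view; varying which of $X_2,X_3$ is erased shows $g$ is constant on the two $v=1$ realizations with a given $u$. Finally, $g$ must be allowed to depend on $(u,v)$ since those are determined by $(\uX_{[k]},\uY)$ in every configuration, and no further collapsing is forced; hence $g$ is a function of $f$.

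The main obstacle I anticipate is part~(2): one has to produce, for each pair of realizations that a $2$-viable function must not separate, an explicit valid witness collection and joint distribution $Q$ satisfying the rather rigid constraints (a) (marginal preservation for \emph{every} $i$ simultaneously) and (b) (the Markov structure) — and simultaneously arrange $\bigcap_i \cA_i = \varnothing$. The delicate point is that the two honest ``branches'' being compared share the common view $(\uX_{[k]},\uY)$, so the witness must realize both the honest-$(X_2,X_3)$ interpretation \emph{and} the resampled interpretation as reductions of one and the same $Q_{\uX_{[k]},\uY,(\tX^i_{\cA_i})_i}$; checking that the candidate $Q$ I write down indeed has all the required marginals equal to $P_{X_{[k]}Y}$ is where the real work lies. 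A secondary subtlety in part~(1) is handling configurations where $1 \in \cA_i$ for some but not all $i$: one must chase through which coordinates are common/honest across configurations to conclude $U$ is consistently pinned down — this uses $\bigcap_i \cA_i = \varnothing$ essentially, since if all $\cA_i$ contained a common user the adversary could corrupt that user's symbol in a correlated way across interpretations.
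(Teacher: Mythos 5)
Your plan for part~(2) is correct and essentially formalizes the paper's informal argument: pair $\cA_1=\{2,3\}$ with the ``resample $W$'' channel against a second adversary set disjoint from it (the paper invokes the argument from Section~\ref{sec:example}, you use $\cA_2=\{1\}$ with identity; $\cA_2 = \varnothing$ also works), verify conditions (a)--(b) hold, and conclude that any $2$-viable $g$ cannot separate the two $v=1$ realizations $(u,\eras_2,u,\eras)$ and $(u,u,\eras_3,\eras)$. That is the right witness and the check that it satisfies the constraints is routine.

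The gap is in part~(1), which you call ``a secondary subtlety'' but which is in fact where almost all of the work lies. The paper's proof (Appendix~\ref{app:proof-for-(3,2)-example}) reduces to the single nontrivial assertion that, under any $Q$ satisfying (a)--(b) for the collection $\{\{2,3\},\{1,3\},\{1,2\}\}$, one has $\uX_1=\txb_1=\txc_1$ a.s.\ --- i.e.\ a colluding pair $\{1,2\}$ (or $\{1,3\}$) cannot alter user~1's report without breaking a marginal. Establishing this requires the case analysis of Claim~\ref{claim:example-subclaim-red}: (i) a colluding user~2 cannot report $\eras_2$ when it observed a bit, because $P_{X_2 Y}(\eras_2,0)=P_{X_2 Y}(\eras_2,1)=0$ and condition (a) for the honest set pins down the $(\uX_2,\uY)$ marginal; (ii) user~2 must report $\eras_2$ whenever it observes one, by a counting argument on $\pr(\uX_2=\eras_2)$; (iii) most delicately, users~$1,2$ cannot flip the bit reported by user~1 even when user~2's observation is $\eras_2$, because the resulting decoder view $(1-x,\eras_2,x,\eras)$ would have to be realizable simultaneously as a $\{2,3\}$-collusion (forcing the true $X_1=1-x$) and as a $\{1,2\}$-collusion with $\uX_3$ honest (forcing the true $X_3=x$, hence $X_1=x$), a contradiction. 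Your sentence ``the Markov conditions prevent a tampered coordinate from taking an erasure value inconsistent with the honest ones'' captures only step (i)--(ii); step (iii), which is about the bit value rather than the erasure pattern and which uses the interplay between two different collusion interpretations, is missing entirely and is not implied by any marginal argument alone. You also misstate the structure of conditions (a)--(b): you write that they force $\tX^i_{\cA_i}$ to be conditionally independent of the decoder's view given $\uX_{\cA_i^c}$, but the Markov chain in (b) runs the other way, $\uX_{\cA_i}\mc\tX^i_{\cA_i}\mc(\uX_{\cA_i^c},\uY)$; this misreading would derail the verification if carried out. Until step~(iii) is supplied, the proposal does not prove that $f$ is $2$-viable.
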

The second statement will follow from the first and the argument above that $W$ cannot be recovered robustly. 
To show that $U$ can be learned robustly, we argue in Appendix~\ref{app:proof-for-(3,2)-example} that the above $f$ is 2-viable. 
The intuition is that if user~1 colludes with user~2 (resp., 3), the reports of $X_2$ (resp. $X_3$) on whether an erasure happened for user~2 (resp., 3) is mostly (we omit such qualifiers below) trustworthy since, not knowing $V$ fully, user~2 (resp., 3) cannot convincingly lie as (a) reporting an erasure when there is no erasure is liable to be discovered by the decoder, and (b) the average number of erasures it reports must match the statistics. User~1, colluding with user~2, cannot lie unless user~2 sees an erasure (since otherwise they do not know for sure whether the decoder experienced an erasure). However, doing so (which results in conflicting reports from users~1 \&~3) leaves a distinct signature -- decoder observes a conflict between the reports of users 1 \&~3 mostly when user~2 reports an erasure. 
This could \emph{only} be the result of either users 1 \&~2 colluding as described above, or users 2 \&~3 colluding, both possibilities implicating user~2.
Therefore, user~1, when colluding with user~2, cannot lie. By symmetry user~1 cannot do so when colluding with user~3 either. Finally, if users 2 \&~3 collude, user~1's report are anyway correct. Thus, unless the decoder is able to detect a malicious user through the means above, it may trust user~1's reports and recover $U$.


\section{Proof of Theorem~\ref{thm:2user}}\label{sec:2user-proof}

\subsection{Converse}\label{sec:2user-converse-proof}

\textbf{Converse part of Theorem~\ref{thm:2user} (restated):} For a distribution $P_{X_1X_2Y}$, if a function $f:\cX_1 \times \cX_2 \times \cY \rightarrow \cZ$ is 1-robustly recoverable, then for any distribution $Q_{\uX_1\tX_1\uX_2\tX_2Y}$ 
such that
\begin{enumerate}[label=(\roman*)]
    \item   $Q_{\tX_1\uX_2\uY} = P_{X_1X_2Y}$ and $\uX_1 \mc \tX_1 \mc (\uX_2,\uY)$
    \item   $Q_{\uX_1 \tX_2 \uY} = P_{X_1 X_2 Y}$ and $\uX_2 \mc \tX_2 \mc (\uX_1,\uY)$,
\end{enumerate}
we must have (with probability 1)
$$f(\tX_1,\uX_2,\uY) = f(\uX_1,\tX_2,\uY).$$
\begin{proof}
    For any $Q_{\uX_1\tX_1\uX_2\tX_2\uY}$ satisfying the conditions (i) and (ii) above, the observations reported to the decoder and its own side-information are jointly distributed as $Q_{\uX_1\uX_2\uY}$ i.i.d. under two scenarios (see Figure~\ref{fig:2user-converse}):
    \begin{enumerate}[label=(\roman*)]
        \item The adversary controls user~1 (i.e., $\cA=\{1\}$). The underlying observations are $(\tX_1^n,\uX_2^n,\uY^n)$ distributed as $Q_{\tX_1 \uX_2 \uY} = P_{X_1 X_2 Y}$ i.i.d.  User~2 reports $\uX_2^n$ honestly and the side-information at the decoder is $\uY^n$. The adversary (user~1) produces its report $\uX_1^n$ by passing $\tX_1^n$ through the DMC $Q_{\uX_1|\tX_1}$. The fact that $\uX_1 \mc \tX_1 \mc (\uX_2,\uY)$ is a Markov chain ensures that the resulting $\uX_1^n$ is jointly distributed with the report $\uX_2^n$ from user~2 and decoder's side information $\uY^n$ as $Q_{\uX_1\uX_2\uY}$ i.i.d.
        \item  This is the analogous case with $\cA=\{2\}$. The underlying observations are $(\uX_1^n,\tX_2^n,\uY^n)$ distributed as $Q_{\uX_1 \tX_2 \uY} = P_{X_1 X_2 Y}$ i.i.d.  User~1 reports $\uX_1^n$ honestly while the adversary generates the report $\uX_2^n$ from user~2 by passing $\tX_2^n$ through the DMC $Q_{\uX_2|\tX_2}$. The fact that $\uX_2 \mc \tX_2 \mc (\uX_1,\uY)$ is a Markov chain ensures that the resulting $\uX_2^n$ is jointly distributed with the report $\uX_1^n$ from user~1 and decoder's side information $\uY^n$ as $Q_{\uX_1\uX_2\uY}$ i.i.d.
    \end{enumerate}

    Let $\gamma>0$ and $\phi_n$ be a decoder such that $\epsilon(\gamma,\phi_n)\leq \delta$. Then, for the first scenario, we have $\eta(\gamma,\cA,W_{\cA})\leq \delta$, where $\cA=\{1\}$ and $W_{\cA}=Q_{\uX_1|\tX_1}^{n}$. This implies that, under $Q_{\uX_1\tX_1\uX_2Y}$ i.i.d. and, therefore, also under $Q_{\uX_1\tX_1\uX_2\tX_2Y}$ i.i.d.,
    \begin{align}
        \pr(\phi_n(\uX_1^n,\uX_2^n,\uY^n)=2)&\leq \delta,\label{eq:twoconv1}\\
        \pr\big((\phi_n(\uX_1^n,\uX_2^n,\uY^n)\neq 1)\cap\quad\qquad\qquad\qquad\qquad\notag\\ (\ham(f(\tX_1^n,\uX_2^n,\uY^n),\phi(\uX_1^n,\uX_2^n,\uY^n)) > \gamma)\big)&\leq \delta.\label{eq:twoconv2}
    \end{align}
    Similarly, from the second scenario, under $Q_{\uX_1\tX_1\uX_2\tX_2Y}$ i.i.d.,
    \begin{align}
        \pr(\phi_n(\uX_1^n,\uX_2^n,\uY^n)=1)&\leq \delta,\label{eq:twoconv3}\\
        \pr\big((\phi_n(\uX_1^n,\uX_2^n,\uY^n)\neq 2)\cap\quad\qquad\qquad\qquad\qquad\notag\\ (\ham(f(\uX_1^n,\tX_2^n,\uY^n),\phi(\uX_1^n,\uX_2^n,\uY^n)) > \gamma)\big)&\leq \delta.\label{eq:twoconv4}
    \end{align}
    From \eqref{eq:twoconv1} and \eqref{eq:twoconv3}, we have
    \begin{align}
        \pr(\phi_n(\uX_1^n,\uX_2^n,\uY^n)\in\{1,2\})&\leq 2\delta. \label{eq:twoconv5}
    \end{align}
    Hence, with at least probability $1-2\delta$, the decoder outputs an estimate.
    Now observe that, under $Q_{\uX_1\tX_1\uX_2\tX_2Y}$ i.i.d.,
    \begin{align*}
        &\pr\left(\ham(f(\tX^n_1,\uX^n_2,\uY^n),f(\uX^n_1, \tX^n_2, \uY^n)) \leq 2\gamma \right)\\
        &\geq 
         \pr\big((\phi_n(\uX_1^n,\uX_2^n,\uY^n)\notin\{1,2\})\cap\\
         &\qquad\qquad (\ham(f(\tX_1^n,\uX_2^n,\uY^n),\phi(\uX_1^n,\uX_2^n,\uY^n)) \leq \gamma)\cap\\
         &\qquad\qquad\qquad (\ham(f(\uX_1^n,\tX_2^n,\uY^n),\phi(\uX_1^n,\uX_2^n,\uY^n)) \leq \gamma)\big)\\
        &\geq 1-4\delta,
    \end{align*}
    where the first inequality follows from the triangle inequality and the last inequality follows from \eqref{eq:twoconv2}, \eqref{eq:twoconv4}, and \eqref{eq:twoconv5} via the union bound.

    For any $\gamma>0$ and $\delta>0$, the above must hold for an increasing sequence of $n$. i.e., for any $\gamma>0$, passing to a subsequence if needed, $\pr(\ham(f(\tX^n_1,\uX^n_2,\uY^n),f(\uX^n_1, \tX^n_2, \uY^n)) \leq 2\gamma )\to0$ as $n\to\infty$. Since, by the law of large numbers, $\ham(f(\tX^n_1,\uX^n_2,\uY^n),f(\uX^n_1, \tX^n_2, \uY^n))\to \pr(f(\tX_1,\uX_2,\uY) \neq f(\uX_1,\tX_2,\uY))$ a.s., the result follows.

\end{proof}

\subsection{Achievability}
Here we flesh out the sketch of the proof in Section~\ref{sec:2user} for the achievability part of Theorem~\ref{thm:2user}. After a brief review of some properties of typical sets, we provide proofs of Lemmas~\ref{lem:decoding-function-2user},~\ref{lem:2user-technical-single-letter}, and~\ref{lem:2user-typicality} followed by the proof of achievability.

\subsubsection{Method of Types}
We recall some properties from \cite[Chapter~2]{csiszar2011information}. Let $X$ and $Y$ be two jointly distributed random variables according to a joint type $P_{XY}\in \cP_n(\cX\times \cY)$. For $(x^n, y^n)\in \cT^n_{XY}$, a distribution $Q$ on $\cX$, we have 
\begin{align}
|\cP_n(\cX)|&\leq (n+1)^{|\cX|}\label{eq:type1}\\
(n+1)^{-|\cX|}\exp\inp{nH(X)}&\leq \cT^n_{X}\leq \exp\inp{nH(X)}\label{eq:type2}\\
(n+1)^{-|\cX||\cY|}\exp\inp{nH(Y|X)}&\leq \cT^n_{Y|X}(x^n)\notag\\ &\leq \exp\inp{nH(Y|X)}\label{eq:type3}\\
(n+1)^{-|\cX|}\exp\inp{-nD(P_X||Q)}&\leq \sum_{\tilde{x}^n\in \cT^n_{X}}Q^n(\tilde{x}^n)\nonumber\\
& \leq \exp\inp{-nD(P_X||Q)}.\label{eq:type4}
\end{align}

\subsubsection{Proof of Lemma~\ref{lem:decoding-function-2user}}\label{sec:decoding-function-2user-proof}

\paragraph*{When $P_{X_1X_2Y}$ has full support} We start with a proof for the case when $P_{X_1X_2Y}$ has full support.
When $P_{X_1X_2Y}$ has full support, we define $g(x_1, x_2, y)=f(x_1, x_2, y)$ for all $x_1, x_2, y$. We will now show {\em (i)}.
Suppose
\begin{align}
&\bar{Q}_{\tX_1\uX_2\uY\uX_1}(\tx_1, \ux_2, \uy, \ux_1):=P_{X_1X_2Y}(\tx_1, \ux_2, \uy)\nonumber\\
&\qquad\qquad\inp{\inp{1-\lambda}W_{\bX_1|X_1}(\ux_1|\tx_1)+\lambda\mathbb{1}_{\ux_1=\tx_1}}, \text{ and}\label{eq:lem21}\\
&\tilde{Q}_{\uX_1\tX_2\uY\uX_2}(\ux_1, \tx_2, \uy, \ux_2):=P_{X_1X_2Y}(\ux_1, \tx_2, \uy)\nonumber\\
&\qquad\qquad\inp{\inp{1-\lambda}W_{\bX_2|X_2}(\ux_2|\tx_2)+\lambda\mathbb{1}_{\ux_2=\tx_2}}\label{eq:lem22}
\end{align} for some $0<\lambda<1$ and all $\tx_1, \tx_2, \ux_1, \ux_2$ and $\uy$. 
Then, $\bar{Q}_{\uX_1\uX_2\uY} = \tilde{Q}_{\uX_1\uX_2\uY}$ since, for any $\ux_1, \ux_2, \uy$, 
\begin{align*}
&\sum_{\tx_1}P_{X_1X_2Y}(\tx_1, \ux_2, \uy)\inp{\inp{1-\lambda}W_{\bX_1|X_1}(\ux_1|\tx_1)+\lambda\mathbb{1}_{\ux_1=\tx_1}}\\
&\stackrel{\text{(a)}}{=}\sum_{\tx_2}P_{X_1X_2Y}(\ux_1, \tx_2, \uy)\inp{\inp{1-\lambda}W_{\bX_2|X_2}(\ux_2|\tx_2)+\lambda\mathbb{1}_{\ux_2=\tx_2}}
\end{align*}
where $\text{(a)}$ follows from the assumption in the lemma and  because $\sum_{\tx_1}P_{X_1X_2Y}(\tx_1, \ux_2, \uy)\mathbb{1}_{\ux_1=\tx_1}  = \sum_{\tx_2}P_{X_1X_2Y}(\ux_1, \tx_2, \uy)\mathbb{1}_{\ux_2=\tx_2}$. 
Define ${Q}_{\uX_1\uX_2\uY}=\bar{Q}_{\uX_1\uX_2\uY} = \tilde{Q}_{\uX_1\uX_2\uY}$. 
Next,  define \begin{align}Q_{\uX_1\uX_2\uY\tX_1\tX_2}=Q_{\uX_1\uX_2\uY}\bar{Q}_{\tX_1|\uX_1\uX_2\uY}\tilde{Q}_{\tX_2|\uX_1\uX_2\uY}.\label{eq:lem23}
\end{align}$Q_{\uX_1\uX_2\uY\tX_1\tX_2}$ satisfies (i) and (ii) in Definition~\ref{def:1-viable}. Hence, by $1$-viability of $f$, we have $f(\tx_1,\ux_2,\uy) = f(\ux_1,\tx_2,\uy)$ for any $\ux_1, \ux_2,  \tx_1, \tx_2, \uy,$ in the support of $Q_{\uX_1\uX_2\tX_1\tX_2\uY}$.
Since $P_{X_1X_2Y}$ has full support, by definition of $\tilde{Q}_{\uX_1\tX_2\uY\uX_2}$, we have $\tilde{Q}_{\uX_1\uX_2\uY\tX_2}(\ux_1,\ux_2,\uy, \ux_2)>0$ for all $\ux_1,\ux_2,\uy$. This implies that $\tilde{Q}_{\tX_2|\uX_1\uX_2\uY}( \ux_2|\ux_1,\ux_2,\uy)>0$. Thus, for all $(\ux_1, \ux_2, \uy, \tx_1)$ for which $Q_{\uX_1\uX_2\uY\tX_1}(\ux_1, \ux_2, \uy, \tx_1)>0$, we have 
 $f(\tx_1,\ux_2,\uy) = f(\ux_1,\ux_2,\uy)$ . From \eqref{eq:lem21} and \eqref{eq:lem23}, this implies that 
\begin{align*}
1
&=\sum_{\ux_1, \tx_1, \ux_2, y}P_{X_1X_2Y}(\tx_1, \ux_2, \uy)\Big(\inp{1-\lambda}W_{\bX_1|X_1}(\ux_1|\tx_1)\\
&\qquad\qquad+\lambda\mathbb{1}_{\ux_1=\tx_1}\Big)\mathbb{1}_{f(\tx_1,\ux_2,\uy) = f(\ux_1,\ux_2,\uy)}\\
&\stackrel{\text{(a)}}{=}(1-\lambda)\sum_{\ux_1, \tx_1, \ux_2, y}P_{X_1X_2Y}(\tx_1, \ux_2, \uy)W_{\bX_1|X_1}(\ux_1|\tx_1)\\
&\qquad\qquad\mathbb{1}_{f(\tx_1,\ux_2,\uy) = f(\ux_1,\ux_2,\uy)}+\lambda
\end{align*}
where in $\text{(a)}$ we use the fact that $\sum_{\substack{\ux_1, \tx_1, \ux_2, y}}P_{X_1X_2Y}(\tx_1, \ux_2, \uy)\mathbb{1}_{\ux_1=\tx_1}\mathbb{1}_{f(\tx_1,\ux_2,\uy) = f(\ux_1,\ux_2,\uy)} =1$. Thus, $f(\tX_1,\uX_2,\uY) = f(\uX_1,\uX_2,\uY)$ under the distribution $P_{X_1X_2Y}(\tx_1, \ux_2, \uy)W_{\bX_1|X_1}(\ux_1|\tx_1)$. This proves part {\em (i)} of the lemma for the full-support case. Part {\em (ii)} can be proved similarly. 


\paragraph*{When $P_{X_1X_2Y}$ may not have full support}
For $P_{X_1 X_2 Y}$, define the set
    \begin{align*}
&\cQ_{P_{X_1X_2Y}} = \Bigg\{Q_{\uX_1 \uX_2 \uY\tX_1\tX_2}:\\
            & \qquad Q_{\uX_1 \uX_2 \uY\tX_1\tX_2} =  Q_{\uX_1 \uX_2 \uY}Q_{\tX_1|\uX_1 \uX_2 \uY}Q_{\tX_2|\uX_1 \uX_2 \uY}\\
            &\qquad \text{ satisfying} \quad Q_{\uX_1 \tX_2 \uY \uX_2}(\ux_1, \tx_2 ,\uy, \ux_2)\\
            &\qquad \qquad \qquad \qquad =P_{X_1X_2Y}(\ux_1, \tx_2, \uy)Q_{\uX_2|\tX_2}(\ux_2|\tx_2), \\
            &\qquad Q_{\tX_1 \uX_2 \uY \uX_1}(\tx_1, \ux_2, \uy, \ux_1) \\
            &\qquad \qquad \qquad =P_{X_1X_2Y}(\tx_1, \ux_2, \uy)Q_{\uX_1|\tX_1}(\ux_1|\tx_1) \text{ where }\\
            & Q_{\uX_1 \uX_2 \uY}(\ux_1, \ux_2, \uy) = \sum_{\tx_2}P_{X_1X_2Y}(\ux_1, \tx_2, \uy)Q_{\bX_2|\tX_2}(\ux_2|\tx_2) \\
            &\qquad \qquad\qquad = \sum_{\tx_1}P_{X_1X_2Y}(\tx_1, \ux_2, \uy)Q_{\uX_1|\tX_1}(\ux_1|\tx_1)\Bigg\}.
            \end{align*}

Notice that for any $(W_{\bX_1|X_1}, W_{\bX_2|X_2})$ satisfying the assumption in Lemma~\ref{lem:decoding-function-2user}, there exists a $Q_{\uX_1 \uX_2 \uY\tX_1\tX_2}\in \cQ_{P_{X_1X_2Y}}$.

For any $(\ux_1, \ux_2, \uy)$ such that $Q_{\uX_1\uX_2\uY}(\ux_1, \ux_2, \uy)>0$ for some $Q_{\uX_1 \uX_2 \uY\tX_1\tX_2}\in \cQ_{P_{X_1X_2Y}}$, we define $g(\ux_1, \ux_2, \uy) = f(\tx_1, \ux_2, \uy)$ for some $\tx_1$ such that $Q_{\tX_1|\uX_1 \uX_2 \uY}(\tx_1|\ux_1,\ux_2,\uy)>0$. By definition~\ref{def:1-viable}, this also implies that $g(\ux_1, \ux_2, \uy) = f(\tx_1, \ux_2, \uy) = f(\ux_1,\tx_2,\uy)$ for any $\tx_2$ such that $Q_{\tX_2|\uX_1 \uX_2 \uY}(\tx_2|\ux_1,\ux_2,\uy)>0$.

We will argue that the function $g$ as defined above is the same for every $Q_{\uX_1 \uX_2 \uY\tX_1\tX_2}\in \cQ_{P_{X_1X_2Y}}$ (and hence for every pair $(W_{\bX_1|X_1}, W_{\bX_2|X_2})$ satisfying the assumption in Lemma~\ref{lem:decoding-function-2user}). Suppose not, then there exists $(\ux_1, \ux_2, \uy)\in \cX_1\times\cX_2\times\cY$ such that $Q^{(1)}_{\uX_1\uX_2\uY}(\ux_1, \ux_2, \uy)>0$ and $Q^{(2)}_{\uX_1\uX_2\uY}(\ux_1, \ux_2, \uy)>0$ for some $Q^{(1)}_{\uX_1 \uX_2 \uY\tX_1\tX_2}, Q^{(2)}_{\uX_1 \uX_2 \uY\tX_1\tX_2}\in \cQ_{P_{X_1X_2Y}}$ resulting in distinct functions $g^{(1)}$ and $g^{(2)}$ such that $g^{(1)}(\ux_1, \ux_2, \uy) \neq g^{(2)}(\ux_1, \ux_2, \uy)$. This also implies that there exist $x_1, x_1'$ where $ x_1\neq x_1'$ such that $Q^{(1)}_{\tX_1|\uX_1\uX_2\uY}(x_1|\ux_1, \ux_2, \uy), Q^{(2)}_{\tX_1|\uX_1\uX_2\uY}(x_1'|\ux_1, \ux_2, \uy)>0$ and $g^{(1)}(\ux_1, \ux_2, \uy) = f(x_1, \ux_2, \uy)\neq f(x_1', \ux_2, \uy) = g^{(1)}(\ux_1, \ux_2, \uy)$.

We define $$\bar{Q}_{\uX_1 \uX_2 \uY\tX_1\tX_2} =  \bar{Q}_{\uX_1 \uX_2 \uY}\bar{Q}_{\tX_1|\uX_1 \uX_2 \uY}\bar{Q}_{\tX_2|\uX_1 \uX_2 \uY}$$ where  $$\bar{Q}_{\tX_1\uX_1 \uX_2 \uY}(\tx_1,\ux_1,\ux_2,\uy) = P_{X_1X_2Y}(\tx_1,\ux_2,\uy)\bar{Q}_{\uX_1|\tX_1}(\ux_1|\tx_1)$$  and $$\bar{Q}_{\tX_2\uX_1 \uX_2 \uY}(\tx_2,\ux_1,\ux_2,\uy) = P_{X_1X_2Y}(\ux_1,\tx_2,\uy)\bar{Q}_{\uX_2|\tX_2}(\ux_2|\tx_2)$$ for $\bar{Q}_{\uX_1|\tX_1}$ and $\bar{Q}_{\uX_2|\tX_2}$ defined as below.
\begin{align*}
\bar{Q}_{\uX_1|\tX_1} = \inp{1-\lambda}Q^{(1)}_{\uX_1|\tX_1}+\lambda Q^{(2)}_{\uX_1|\tX_1}\\
\bar{Q}_{\uX_2|\tX_2} = \inp{1-\lambda}Q^{(2)}_{\uX_2|\tX_2}+\lambda Q^{(2)}_{\uX_2|\tX_2}
\end{align*}
From the definitions of $\bar{Q}_{\uX_1 \uX_2 \uY\tX_1\tX_2}, {Q}^{(1)}_{\uX_1 \uX_2 \uY\tX_1\tX_2}$ and ${Q}^{(2)}_{\uX_1 \uX_2 \uY\tX_1\tX_2}$, it follows that  $\bar{Q}_{\uX_1 \uX_2 \uY\tX_1\tX_2}\in \cQ_{P_{X_1X_2Y}}$. As $Q^{(1)}_{\tX_1|\uX_1\uX_2\uY}(x_1|\ux_1, \ux_2, \uy)>0$ and $ Q^{(2)}_{\tX_1|\uX_1\uX_2\uY}(x_1'|\ux_1, \ux_2, \uy)>0$, we have $\bar{Q}_{\tX_1|\uX_1\uX_2\uY}(x_1|\ux_1, \ux_2, \uy),\bar{Q}_{\tX_1|\uX_1\uX_2\uY}(x_1'|\ux_1, \ux_2, \uy)>0$. Consider any $x_2$ such that $\bar{Q}_{\tX_2|\uX_1\uX_2\uY}(x_2|\ux_1, \ux_2, \uy)>0$. Then, from definition of $\bar{Q}_{\uX_1 \uX_2 \uY\tX_1\tX_2}$ and definition~\ref{def:1-viable}, we have $f(x_1, \ux_2, \uy) = f(\ux_1, x_2, \uy) = f(x_1', \ux_2, \uy)$, leading to a contradiction. Thus, $g$ is defined uniquely.  

Additionally, we can use the same argument as the one for case of full support $P_{X_1X_2Y}$ to further argue that $g(\ux_1, \ux_2, \uy)=f(\ux_1, \ux_2, \uy)$ for any $(\ux_1, \ux_2, \uy)$ such that $P_{X_1X_2Y}(\ux_1, \ux_2, \uy)>0$.

\subsubsection{Proof of Lemma~\ref{lem:2user-technical-single-letter}}   \label{sec:2user-technical-single-letter-proof}
We introduce some notation for the sake of brevity.
Define $\ttf_1: \cP(\cX_1|\cX_1) \rightarrow \cP(\cX_1 \times \cX_2 \times \cY)$ as $\ttf_1(Q_{\bX_1|X_1}) := R_{\bX_1 X_2 Y}$, where, for $\bx_1,x_1\in\cX_1$, $x_2\in\cX_2$, and $y\in\cY$, 
\begin{align*}
R_{\bX_1 X_2 Y}&(\bx_1, x_2, y) \\
&\qquad=\sum_{x_1 \in \cX_1} P_{X_1 X_2 Y}(x_1, x_2, y) Q_{\bX_1|X_1}(\bx_1|x_1).\end{align*}
For $R_{\bX_1 X_2 Y} \in \cP(\cX_1 \times \cX_2 \times \cY)$, let $\ttf_1^{-1}(R_{\bX_1 X_2 Y}) := \left\{Q_{\bX_1|X_1} \in \cP(\cX_1|\cX_1): \ttf_1(Q_{\bX_1|X_1}) = R_{\bX_1 X_2 Y}\right\}$. And for $\cQ \subseteq \cP(\cX_1 \times \cX_2 \times \cY)$, $\Phi_1^{-1}(\cQ) := \bigcup_{Q \in \cQ} \ttf^{-1}(Q)$. For $P \in \cP(\cX)$ and $\cQ \subseteq \cP(\cX)$, $\dtv(P,\cQ) := \inf_{Q \in \cQ} \dtv(P,Q)$. For $Q,Q' \in \cP(\cX|\cX)$, define $d(Q,Q') := \sum_{x \in \cX} \dtv\inp{Q(\cdot|x), Q'(\cdot|x)}$ where $Q(\cdot|x)$ is the distribution defined by $Q$ conditional on the character $x$ (and similarly for $Q'$). For a closed $\cQ \subseteq \cP(\cX|\cX)$ and $Q' \in \cP(\cX|\cX)$, define $d(Q',\cQ) = \min_{Q \in \cQ} d(Q',Q)$.

We first state and prove two helpful observations in Lemmas~\ref{lem:comp-int} and~\ref{lem:2continuous}.
    \begin{lemma}
        \label{lem:comp-int}
        There exists a function $\varepsilon: \mathbb{R}_{>0} \rightarrow \mathbb{R}_{\geq 0}$ such that
        \begin{enumerate}[label=(\alph*)]
            \item
            as $\rad \rightarrow 0$, $\varepsilon(\rad) \rightarrow 0$, and 
            \item
            for $\rad>0$, if $R \in \cV_{1}^{\rad} \cap \cV_{2}^{\rad}$, $\exists\ S \in \cV_1 \cap \cV_2$ such that $\dtv(R,S) \leq \varepsilon(\rad)$.
        \end{enumerate}
    \end{lemma}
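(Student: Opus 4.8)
The plan is to take $\varepsilon$ to be the \emph{tightest} possible modulus for the inclusion asserted in (b), so that (b) becomes true by definition, and then to deduce (a) by a soft compactness argument. The first step is to record the relevant topology. The set of channels $\cP(\cX_1|\cX_1)$ is a compact subset of a finite-dimensional Euclidean space, and the map sending $W_{\bX_1|X_1}$ to the induced p.m.f.\ $\ux_1\ux_2\uy\mapsto\sum_{x_1}P_{X_1X_2Y}(x_1,\ux_2,\uy)W_{\bX_1|X_1}(\ux_1|x_1)$ is linear, hence continuous; therefore $\cV_1$, being the image of $\cP(\cX_1|\cX_1)$ under this map, is a compact subset of the simplex $\cP(\cX_1\times\cX_2\times\cY)$, and likewise for $\cV_2$. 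Consequently $\cV_1\cap\cV_2$ is compact, and it is nonempty since $P_{X_1X_2Y}$ itself lies in both $\cV_1$ and $\cV_2$ (use the identity channel). Finally, for each $\rad>0$ the set $\cV_i^{\rad}=\{R:\dtv(R,\cV_i)\le\rad\}$ is the $\rad$-sublevel set of the continuous function $R\mapsto\dtv(R,\cV_i)$, hence closed, so $\cV_1^{\rad}\cap\cV_2^{\rad}$ is a (possibly empty) compact set.

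Next I would define
\[
\varepsilon(\rad):=\sup_{R\,\in\,\cV_1^{\rad}\cap\cV_2^{\rad}}\dtv\bigl(R,\ \cV_1\cap\cV_2\bigr),
\]
with the convention $\sup\varnothing=0$. This is a number in $[0,1]$, and it is nondecreasing in $\rad$ since $\cV_i^{\rad}$ grows with $\rad$. Property (b) is then immediate: given $R\in\cV_1^{\rad}\cap\cV_2^{\rad}$ we have $\dtv(R,\cV_1\cap\cV_2)\le\varepsilon(\rad)$, and because $\cV_1\cap\cV_2$ is compact and nonempty the infimum in $\dtv(R,\cV_1\cap\cV_2)=\inf_{S\in\cV_1\cap\cV_2}\dtv(R,S)$ is attained at some $S\in\cV_1\cap\cV_2$, which is the $S$ demanded by the lemma.

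It remains to establish (a). Since $\varepsilon$ is monotone, $c:=\lim_{\rad\downarrow 0}\varepsilon(\rad)$ exists in $[0,1]$, and I would show $c=0$ by contradiction. If $c>0$, then for each $n$ I can pick a p.m.f.\ $R_n\in\cV_1^{1/n}\cap\cV_2^{1/n}$ with $\dtv(R_n,\cV_1\cap\cV_2)>c/2$. The $R_n$ all lie in the compact simplex $\cP(\cX_1\times\cX_2\times\cY)$, so some subsequence $R_{n_j}$ converges to a p.m.f.\ $R^{\star}$. From $\dtv(R_{n_j},\cV_1)\le 1/n_j\to 0$ together with the closedness of $\cV_1$ we get $R^{\star}\in\cV_1$, and similarly $R^{\star}\in\cV_2$, so $R^{\star}\in\cV_1\cap\cV_2$ and $\dtv(R^{\star},\cV_1\cap\cV_2)=0$; but $R\mapsto\dtv(R,\cV_1\cap\cV_2)$ is continuous and stays above $c/2$ along the subsequence, which forces $\dtv(R^{\star},\cV_1\cap\cV_2)\ge c/2>0$, a contradiction. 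Hence $\varepsilon(\rad)\to 0$ as $\rad\to 0$.

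The argument is entirely soft, so I do not expect a genuine obstacle; the only points needing care are verifying that $\cV_1$ and $\cV_2$ are actually closed (so that limits of approximating p.m.f.s fall back into them) and that $\cV_1\cap\cV_2$ is nonempty and compact (so the nearest feasible $S$ in (b) exists). Conceptually the lemma merely says that, for two compact sets, the intersection of their shrinking closed neighbourhoods Hausdorff-converges to their intersection.
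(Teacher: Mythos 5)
Your proof is correct and follows essentially the same route as the paper: define $\varepsilon(\rad)$ as the worst-case distance $\sup_{R\in\cV_1^{\rad}\cap\cV_2^{\rad}}\dtv(R,\cV_1\cap\cV_2)$, observe (b) is then immediate, and establish (a) by a sequential-compactness contradiction. The only cosmetic difference is in the contradiction step: the paper extracts limit points via the nearest-point sequences $A_n=\arg\min_{S\in\cV_1}\dtv(R_n,S)$ and $B_n=\arg\min_{S\in\cV_2}\dtv(R_n,S)$ (using compactness of $\cV_1$), whereas you pass directly to a convergent subsequence of $R_n$ inside the compact simplex and then invoke closedness of $\cV_1$ and $\cV_2$, which is slightly cleaner but not a different idea.
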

    \begin{proof}
        We first note that $\cV_1 \cap \cV_2\neq \varnothing$ since $P_{X_1X_2Y}\in\cV_1\cap\cV_2$ (by choosing the channel $W_{\bX_1|X_1}$ as identity in the definition of $\cV_1$ and similarly for $\cV_2$). Furthermore, since both $\cV_1$ and $\cV_2$ are closed sets, $\cV_1 \cap \cV_2$ is a closed set. Hence, $\min_{S \in \cV_1 \cap \cV_2} \dtv(R,S)$ is well-defined for every $R \in \cV_{1}^{\rad} \cap \cV_{2}^{\rad}$.
        We will show that the function $\varepsilon(\rad) = \sup_{R \in \cV_{1}^{\rad} \cap \cV_{2}^{\rad}} \min_{S \in \cV_1 \cap \cV_2} d(R,S)$ satisfies both the given conditions. The fact that $\varepsilon(\rad)$ satisfies the condition (b) is obvious from its definition. Suppose it does not satisfy the condition (a), then, taking into account that $\varepsilon(\rad)$ is a non-negative, non-decreasing function of $\rad$,
        \begin{equation}\label{eq:comp-int}
            \exists\ \varepsilon_0>0 \text{ such that } \forall\ \rad>0,\ \sup_{R \in \cV_{1}^{\rad} \cap \cV_{2}^{\rad}} \dtv(R, \cV_1 \cap \cV_2) \geq \varepsilon_0.
        \end{equation}
        For $n \in \bb{N}$, setting $\rad = 1/n$ in \eqref{eq:comp-int}, we obtain a sequence $R_n$, $n \in \bb{N}$ such that, $\dtv(R_n,\cV_1)\leq 1/n$, $\dtv(R_n,\cV_2) \leq 1/n$ and $\dtv(R_n, \cV_1 \cap \cV_2) > \varepsilon_0/2$, $n \in \bb{N}$. 
        Appealing to the fact that $\cV_1$ and $\cV_2$ are closed, we define sequences $A_n \in \cV_1, n \in \bb{N}$ and $B_n \in \cV_2, n \in \bb{N}$ as follows:
        \begin{align*}
            A_n &= \arg \min_{S \in \cV_1} \dtv(R_n,S)\\
            B_n &= \arg \min_{S \in \cV_2} \dtv(R_n,S).
        \end{align*}        where $A_n$ ($B_n$, resp.) is chosen arbitrarily from among the minimizers in case more than one exists.
        Now, note that since $\cV_1$ is compact, the sequence $A_n$ has a limit point $A^{\ast} \in \cV_1$.
        Furthermore, since $\dtv(R_n,A_n) \rightarrow 0$ as $n \rightarrow \infty$, sequence $R_n$ also has $A^{\ast}$ as one of its limit points.
        This, along with the assumption that $\dtv(R_n, \cV_1 \cap \cV_2) > \varepsilon_0/2$, $\forall n \in \bb{N}$ implies that $\dtv(A^{\ast}, \cV_1 \cap \cV_2) \geq \varepsilon_0/2$.
        Now, in order to observe the contradiction, note that since $\dtv(R_n,B_n) \rightarrow 0$ as $n \rightarrow \infty$ as well, sequence $B_n$ also has $A^{\ast}$ as one of its limit points. But since $\cV_2$ is closed, $A^{\ast} \in \cV_2$ and therefore $A^{\ast} \in \cV_1 \cap \cV_2$, i.e., $\dtv(A^{\ast}, \cV_1 \cap \cV_2) = 0 < \varepsilon_0/2$.
    \end{proof}

    \begin{lemma}\label{lem:2continuous}
        Given a non-empty closed $\cV \subseteq \cV_1$, there exists $\eta_{\cV}: \bb{R}_{>0} \rightarrow \bb{R}_{\geq 0}$ such that
        \begin{enumerate}[label=(\alph*)]
            \item   as $\varepsilon \rightarrow 0$, $\eta_{\cV}(\varepsilon) \rightarrow 0$, and
            \item   for $\varepsilon>0$, if $Q\in \cP(\cX_1|\cX_1)$ and $\dtv(\ttf_1(Q), \cV) \leq \varepsilon$, then, there exists $Q' \in \ttf^{-1}_1(\cV)$ such that $d(Q,Q') \leq \eta_{\cV}(\varepsilon).$
        \end{enumerate}
    \end{lemma}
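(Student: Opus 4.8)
The plan is to reuse, essentially verbatim, the compactness argument already employed for Lemma~\ref{lem:comp-int}, now transported to the channel space $\cP(\cX_1|\cX_1)$. First I would record the structural facts that make this work: $\cP(\cX_1|\cX_1)$ is a compact subset of a finite-dimensional Euclidean space (a product of simplices); the map $\ttf_1$ is affine in $Q_{\bX_1|X_1}$ and hence continuous; therefore $\ttf_1^{-1}(\cV)$ is closed, in fact compact; and it is non-empty because $\cV\subseteq\cV_1=\ttf_1(\cP(\cX_1|\cX_1))$, so every point of $\cV$ has at least one preimage. Since $d(\cdot,\cdot)$ is a continuous, bounded function on $\cP(\cX_1|\cX_1)\times\cP(\cX_1|\cX_1)$ and $\ttf_1^{-1}(\cV)$ is compact, $d(Q,\ttf_1^{-1}(\cV))=\min_{Q'\in\ttf_1^{-1}(\cV)}d(Q,Q')$ is well-defined and attained for every $Q$.

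Then I would define the candidate modulus explicitly by
\[ \eta_{\cV}(\varepsilon) \;=\; \sup\bigl\{\, d\bigl(Q,\ttf_1^{-1}(\cV)\bigr) \;:\; Q\in\cP(\cX_1|\cX_1),\ \dtv\bigl(\ttf_1(Q),\cV\bigr)\leq\varepsilon \,\bigr\}. \]
By construction $\eta_{\cV}$ is finite (as $d$ is bounded on the channel space), non-negative, and non-decreasing in $\varepsilon$, and condition (b) is immediate: if $\dtv(\ttf_1(Q),\cV)\leq\varepsilon$ then $d(Q,\ttf_1^{-1}(\cV))\leq\eta_{\cV}(\varepsilon)$, so one may take $Q'$ to be a minimizer achieving this distance. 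The only real content is condition (a), which I would prove by contradiction exactly as in Lemma~\ref{lem:comp-int}: if $\eta_{\cV}(\varepsilon)\not\to0$ then, being monotone and non-negative, there is $\eta_0>0$ with $\eta_{\cV}(\varepsilon)\geq\eta_0$ for all $\varepsilon>0$; choosing $\varepsilon=1/n$ yields channels $Q_n$ with $\dtv(\ttf_1(Q_n),\cV)\leq 1/n$ but $d(Q_n,\ttf_1^{-1}(\cV))>\eta_0/2$. By compactness of $\cP(\cX_1|\cX_1)$, pass to a convergent subsequence $Q_n\to Q^{\ast}$. Continuity of $\ttf_1$ gives $\ttf_1(Q_n)\to\ttf_1(Q^{\ast})$, and since $\dtv(\ttf_1(Q_n),\cV)\to0$ with $\cV$ closed, $\ttf_1(Q^{\ast})\in\cV$, i.e.\ $Q^{\ast}\in\ttf_1^{-1}(\cV)$ and $d(Q^{\ast},\ttf_1^{-1}(\cV))=0$. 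But $Q\mapsto d(Q,\ttf_1^{-1}(\cV))$ is continuous, so $d(Q^{\ast},\ttf_1^{-1}(\cV))=\lim_n d(Q_n,\ttf_1^{-1}(\cV))\geq\eta_0/2>0$ — the desired contradiction.

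I do not expect a genuine obstacle; the statement is a routine transfer of the Lemma~\ref{lem:comp-int} argument from the p.m.f.\ side to the channel side. The only points requiring a little care are purely bookkeeping: checking $\ttf_1^{-1}(\cV)\neq\varnothing$ (which is where the hypothesis $\cV\subseteq\cV_1$ is used) and noting that the total-variation metric $\dtv$ used on the image $\cP(\cX_1\times\cX_2\times\cY)$ and the metric $d$ used on $\cP(\cX_1|\cX_1)$ both induce the one finite-dimensional topology in play, so that all the continuity and convergence statements above are unambiguous — this is the same observation underlying the proof of Lemma~\ref{lem:comp-int}.
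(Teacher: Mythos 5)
Your proof is correct and takes essentially the same route as the paper's: the same explicit definition of $\eta_{\cV}$ as a supremum of distances, the same observation that $\ttf_1^{-1}(\cV)$ is closed and non-empty, and the same compactness/contradiction argument for condition (a), differing only in that you invoke continuity of $d(\cdot,\ttf_1^{-1}(\cV))$ directly while the paper reaches the equivalent contradiction through $Q^{\ast}\notin\ttf_1^{-1}(\cV)$ versus $\ttf_1(Q^{\ast})\in\cV$.
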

    \begin{proof}
        We will show that
        $$\eta_{\cV}(\varepsilon) = \sup_{\substack{Q \in \cP(\cX_1|\cX_1):\\ \dtv(\ttf_1(Q), \cV) \leq \varepsilon}} \min_{Q' \in \ttf_1^{-1}(\cV)}  d(Q, Q')$$
        satisfies both the conditions given in the lemma. 
        Notice that the $\min$ above is well-defined because $\ttf_1^{-1}(\cV)$ is closed (since $\ttf$ is continuous, $\cV$ is closed and the domain of $\ttf_1$ is closed) and non-empty (since $\cV \subseteq \cV_1$ is non-empty) 
        Condition (b) is implied by the definition of $\eta_{\cV}(\varepsilon)$.
        Suppose it does not satisfy condition (a), then, taking into account that $\eta_{\cV}(\varepsilon)$ is a non-negative, non-decreasing function of $\varepsilon$, 
        \begin{multline}\label{eq:2userconteq}
            \exists\ \eta_0>0 \text{ such that } \forall\ \varepsilon>0,\\ \sup_{\substack{Q \in \cP(\cX_1|\cX_1):\\ \dtv(\ttf_1(Q), \cV) \leq \varepsilon}} d(Q, \ttf_1^{-1}(\cV)) \geq \eta_0\ .
        \end{multline}
        For $n \in \bb{N}$, setting $\varepsilon = 1/n$ in \eqref{eq:2userconteq}, we obtain a sequence $Q_n, n \in \bb{N}$ such that 
        (i) $\dtv(\ttf_1(Q_n), \cV) \leq 1/n$ and 
        (ii) $\forall\ n \in \bb{N},\ d(Q_n, \ttf_1^{-1}(\cV)) \geq \eta_0/2$.
        Since $\cP(\cX_1|\cX_1)$ is compact, $Q_n$ has a limit point $Q^{\ast} \in \cP(\cX_1|\cX_1)$.
        From (ii), we have that $d(Q^{\ast}, \ttf_1^{-1}(\cV)) \geq \eta_0/2$ and therefore $Q^{\ast} \not\in \ttf_1^{-1}(\cV)$,i.e., $\ttf_1(Q^{\ast}) \not\in \cV$.
        Now, since $\ttf_1$ is a continuous map, $\ttf_1(Q^{\ast})$ should be a limit point of $\ttf_1(Q_n)$.
        From (i), we know that $\dtv(\ttf_1(Q_n), \cV) \rightarrow 0$ as $n \rightarrow \infty$, therefore, all limit points of $\ttf_1(Q_n)$ must lie in $\cV$, and therefore $\ttf_1(Q^{\ast}) \in \cV$, which contradicts the fact that $\ttf_1(Q^{\ast}) \not\in \cV$.
    \end{proof}

Next we prove a lemma which will imply Lemma~\ref{lem:2user-technical-single-letter} as we argue further below.
    \begin{lemma}\label{lem:old-2user-technical-single-letter}
        For a 1-viable $f$ along with $g$ as in Lemma~\ref{lem:decoding-function-2user}, there is a $\gamma':\bbR^+\to\bbR^+$ such that $\gamma'(\delta)\to0$ as $\delta\downarrow0$ satisfying the following: for $\delta>0$, if $W_{\bX_1|X_1}$ is such that the p.m.f. 
        \\$Q_{\uX_1\uX_2\uY}(\ux_1,\ux_2,\uy):=\sum_{x_1}P_{X_1X_2Y}(x_1,\ux_2,\uy)W_{\bX_1|X_1}(\ux_1|x_1)$\\
        belongs to $\cV_1^{\delta}\cap\cV_2^{\delta}$, then $\pr(f(X_1,X_2,Y)\neq g(\bX_1,X_2,Y))\leq \gamma'(\delta)$ under the p.m.f.  $P_{X_1X_2Y}W_{\bX_1|X_1}$.
    \end{lemma}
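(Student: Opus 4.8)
The plan is to chain the two ``perturbation‑robustness of feasibility'' statements (Lemmas~\ref{lem:comp-int} and~\ref{lem:2continuous}) with the exact, zero‑distortion conclusion of Lemma~\ref{lem:decoding-function-2user}, and then transfer that zero‑distortion guarantee from a nearby channel back to $W_{\bX_1|X_1}$ at the cost of a total‑variation term that vanishes as $\delta\downarrow0$.

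Concretely, suppose $W_{\bX_1|X_1}$ is such that $Q_{\uX_1\uX_2\uY}=\ttf_1(W_{\bX_1|X_1})\in\cV_1^{\delta}\cap\cV_2^{\delta}$. First I would apply Lemma~\ref{lem:comp-int} to obtain $S\in\cV_1\cap\cV_2$ with $\dtv(\ttf_1(W_{\bX_1|X_1}),S)\leq\varepsilon(\delta)$, so that $\dtv(\ttf_1(W_{\bX_1|X_1}),\cV_1\cap\cV_2)\leq\varepsilon(\delta)$. As noted in the proof of Lemma~\ref{lem:comp-int}, $\cV_1\cap\cV_2$ is a non‑empty closed subset of $\cV_1$, so Lemma~\ref{lem:2continuous} with $\cV=\cV_1\cap\cV_2$ produces a channel $W'_{\bX_1|X_1}\in\ttf_1^{-1}(\cV_1\cap\cV_2)$ with $d(W_{\bX_1|X_1},W'_{\bX_1|X_1})\leq\eta_{\cV_1\cap\cV_2}(\varepsilon(\delta))$. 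Since $\ttf_1(W'_{\bX_1|X_1})\in\cV_2$, the definition of $\cV_2$ supplies a $W_{\bX_2|X_2}$ inducing the same joint p.m.f. on $(\uX_1,\uX_2,\uY)$ as $W'_{\bX_1|X_1}$; that is, the pair $(W'_{\bX_1|X_1},W_{\bX_2|X_2})$ satisfies the hypothesis of Lemma~\ref{lem:decoding-function-2user}. Part~(i) of that lemma then gives $\pr(f(X_1,X_2,Y)\neq g(\bX_1,X_2,Y))=0$ under $P_{X_1X_2Y}W'_{\bX_1|X_1}$.

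It remains to pass from $W'_{\bX_1|X_1}$ back to $W_{\bX_1|X_1}$. The joint p.m.f.s $P_{X_1X_2Y}W_{\bX_1|X_1}$ and $P_{X_1X_2Y}W'_{\bX_1|X_1}$ on $(X_1,X_2,Y,\bX_1)$ differ in total variation by $\sum_{x_1}P_{X_1}(x_1)\,\dtv\!\big(W_{\bX_1|X_1}(\cdot|x_1),W'_{\bX_1|X_1}(\cdot|x_1)\big)\leq d(W_{\bX_1|X_1},W'_{\bX_1|X_1})$, since multiplying the conditional p.m.f.s by the common marginal $P_{X_1X_2Y}$ only shrinks distances. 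Because $\{f(X_1,X_2,Y)\neq g(\bX_1,X_2,Y)\}$ is a fixed event in this space, its probability under $P_{X_1X_2Y}W_{\bX_1|X_1}$ exceeds its probability under $P_{X_1X_2Y}W'_{\bX_1|X_1}$ (which is $0$) by at most that total variation. Setting $\gamma'(\delta):=\eta_{\cV_1\cap\cV_2}(\varepsilon(\delta))$ thus yields $\pr(f(X_1,X_2,Y)\neq g(\bX_1,X_2,Y))\leq\gamma'(\delta)$ under $P_{X_1X_2Y}W_{\bX_1|X_1}$, and $\gamma'(\delta)\to0$ as $\delta\downarrow0$ because $\varepsilon(\delta)\to0$ (Lemma~\ref{lem:comp-int}(a)) and $\eta_{\cV_1\cap\cV_2}(\cdot)\to0$ at $0$ (Lemma~\ref{lem:2continuous}(a)).

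The real work has been front‑loaded into Lemmas~\ref{lem:comp-int} and~\ref{lem:2continuous}: the genuine difficulty is showing that an \emph{approximately} feasible attack channel can be nudged, by a controlled amount, into an \emph{exactly} feasible one lying in $\ttf_1^{-1}(\cV_1\cap\cV_2)$, which is where compactness of $\cP(\cX_1|\cX_1)$ and closedness of $\cV_1,\cV_2$ are used. Given those, the present lemma is just their composition with the zero‑distortion guarantee of Lemma~\ref{lem:decoding-function-2user} together with an elementary data‑processing/total‑variation estimate; the only points requiring care are that $\cV_1\cap\cV_2$ is non‑empty and closed (so both auxiliary lemmas apply) and that the channel metric $d(\cdot,\cdot)$ dominates the total‑variation distance between the induced joint laws.
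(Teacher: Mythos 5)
Your proof is correct and follows essentially the same route as the paper's: apply Lemma~\ref{lem:comp-int} and then Lemma~\ref{lem:2continuous} to produce a nearby exactly-feasible channel $C_{\bX_1|X_1}\in\ttf_1^{-1}(\cV_1\cap\cV_2)$, invoke Lemma~\ref{lem:decoding-function-2user}(i) to get zero error probability under $P_{X_1X_2Y}C_{\bX_1|X_1}$, and then transfer to $W_{\bX_1|X_1}$. The paper carries out this last transfer by an explicit $W=(W-C)+C$ decomposition and bounding positive parts, whereas you invoke the equivalent ``probability difference is bounded by total variation'' fact; your bound $\gamma'(\delta)=\eta_{\cV_1\cap\cV_2}(\varepsilon(\delta))$ is in fact a factor $|\cX_1|$ tighter than the paper's, but the two arguments are the same estimate in different notation.
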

    \begin{proof}

        We will show that $\gamma'(\delta) := |\cX_1|\eta_{\cV_1 \cap \cV_2}(\varepsilon(\delta))$ satisfies the requirements, where $\varepsilon$ and $\eta_{\cV_1 \cap \cV_2}$ functions are obtained from Lemma~\ref{lem:comp-int} and Lemma~\ref{lem:2continuous} respectively (recall that $\cV_1\cap\cV_2$ is non-empty and closed as we noted at the beginning of proof of Lemma~\ref{lem:comp-int}). Clearly, $\gamma'(\delta) \rightarrow 0$ as $\delta \rightarrow 0$. Now, consider any $W_{\bX_1|X_1}$ such that $\sum_{x_1} P_{X_1 X_2 Y}(x_1,x_2,y) W_{\bX_1|X_1}(\bx_1|x_1) = R_{\bX_1 X_2 Y}(\bx_1,x_2,y)$ and $R_{\bX_1 X_2 Y} \in \cV^{\delta}_1 \cap \cV^{\delta}_2$, i.e., $\ttf_1(W_{\bX|X}) \in \cV_1^{\delta} \cap \cV_2^{\delta}$.    
        
        By Lemma~\ref{lem:comp-int},
        \begin{equation} \label{eq:2user-eps}
            \dtv\inp{\ttf_1(W_{\bX_1|X_1}), \cV_1 \cap \cV_2} \leq \varepsilon(\delta).
        \end{equation}
        Hence, by Lemma~\ref{lem:2continuous},
        \begin{equation}\label{eq:2user-eta}
            d\left(W_{\bX_1|X_1}, C_{\bX_1|X_1}\right) \leq \eta_{\cV_1 \cap \cV_2}(\varepsilon(\rad))
        \end{equation}
        for some $C_{\bX_1|X_1} \in \ttf^{-1}_1(\cV_1 \cap \cV_2)$. 
        Now, recall that 
        \begin{multline*}
            \cV_2 = \{ Q_{\uX_1 \uX_2 \uY} : \exists\ W_{\bX_2|X_2} \text{ s.t. } Q_{\uX_1 \uX_2 \uY}(\ux_1,\ux_2,\uy) \\= \sum_{x_2} P_{X_1 X_2 Y}(\ux_1, x_2, \uy) W_{\bX_2|X_2}(\ux_2|x_2), \forall\ \ux_1,\ux_2,\uy \}.
        \end{multline*}
        Since $\ttf_1\inp{C_{\bX_1|X_1}} \in \cV_1 \cap \cV_2$, there exist some $C_{\bX_2|X_2}$ such that 
        \begin{multline*}
            \sum_{x_2} P_{X_1 X_2 Y}(\ux_1, x_2, \uy) C_{\bX_2|X_2}(\ux_2|x_2)\\
            = \sum_{x_1} P_{X_1 X_2 Y}(x_1, \ux_2, \uy) C_{\bX_1|X_1}(\ux_1|x_1)
        \end{multline*}
        Consider the function $g$ from Lemma~\ref{lem:decoding-function-2user} (recall that $f$ is 1-viable). We have,
        \begin{align}
            \sum_{x_1,x_2,z,\bx_1} P_{X_1 X_2 Z}&(x_1,x_2,z) C_{\bX_1|X_1}(\bx_1|x_1)\notag\\ &\bb{1} \left\{f(x_1,x_2,z) \neq g(\bx_1,x_2,z)\right\} = 0, \label{eq:old-tech-lemma1}
        \end{align}
        using which, we get that under $P_{X_1 X_2 Y}W_{\bX_1|X_1}$,
        \begin{align*}
            \MoveEqLeft[2] \bbP[f(X_1,X_2,Y) \neq g(\bX_1,X_2,Y)]\\
            =& \sum_{x_1,\bx_1,x_2,y} P_{X_1 X_2 Y}(x_1,x_2,y) W_{\bX_1|X_1} \\ 
            &\qquad\qquad \times \bb{1} \left\{ f(x_1,x_2,y) \neq g(\bx_1,x_2,y)\right\}\\
            =& \sum_{x_1,\bx_1,x_2, z} P_{X_1 X_2 Y}(x_1,x_2,y) \left(W_{\bX_1|X_1}(\bx_1|x_1) \right.\\ 
            &\qquad\qquad\left.- C_{\bX_1|X_1}(\bx_1|x_1) + C_{\bX_1|X_1}(\bx_1|x_1) \right) \\
            &\qquad\qquad\qquad\times \bb{1}\left\{ f(x_1,x_2,y) \neq g(\bx_1,x_2,y)\right\}\\
            \stackrel{\text{(a)}}{=}& \sum_{x_1,\bx_1,x_2,y} P_{X_1 X_2 Y}(x_1,x_2,y) \left(W_{\bX_1|X_1}(\bx_1|x_1) - \right.\\ &\qquad\quad\left. C_{\bX_1|X_1}(\bx_1|x_1)\right) \bb{1}\left\{ f(x_1,x_2,y) \neq g(\bx_1,x_2,y)\right\}\\
            \leq& \sum_{x_1,\bx_1,x_2,y} P_{X_1 X_2 Y}(x_1,x_2,y)\\&\qquad\qquad\quad\left[W_{\bX_1|X_1}(\bx_1|x_1) - C_{\bX_1|X_1}(\bx_1|x_1)\right]_+\\
            =& \sum_{x_1,\bx_1} P_{X_1}(x_1)\left[W_{\bX_1|X_1}(\bx_1|x_1) - C_{\bX_1|X_1}(\bx_1|x_1)\right]_+\\
            \leq& \sum_{x_1,\bx_1} \left[W_{\bX_1|X_1}(\bx_1|x_1) - C_{\bX_1|X_1}(\bx_1|x_1)\right]_+\\
            =& \sum_{x_1} \sum_{\bx_1} \left[W_{\bX_1|X_1}(\bx_1|x_1) - C_{\bX_1|X_1}(\bx_1|x_1)\right]_+\\
            =& \sum_{x_1} d(W_{\bX_1|X_1}, C_{\bX_1|X_1})\\
            \leq& |\cX_1| \eta_{\cV_1\cap\cV_2}(\varepsilon(\rad))=\gamma'(\delta),
        \end{align*}
        where (a) follows from \eqref{eq:old-tech-lemma1} and the last inequality is due to \eqref{eq:2user-eta}. This completes the proof.
    \end{proof}

Finally we argue that Lemma~\ref{lem:old-2user-technical-single-letter} implies Lemma~\ref{lem:2user-technical-single-letter} (which is restated below)\\
    \textbf{Lemma~\ref{lem:2user-technical-single-letter} (restated):} 
        For a 1-viable $f$ along with $g$ as in Lemma~\ref{lem:decoding-function-2user}, 
        there is a 
        $\gamma:\bbR^+\to\bbR^+$ such that $\gamma(\delta)\to0$ as $\delta\to0$ 
        satisfying the following: for $\delta>0$, if 
        $Q_{X_1\uX_2\uY\uX_1}$ is such that $Q_{\uX_1\uX_2\uY}\in\cV_1^{\delta}\cap\cV_2^{\delta}$ and $\exists$ $W_{\bX_1|X_1}$ for which $\dtv(Q_{X_1\uX_2\uY\uX_1}, P_{X_1X_2Y}W_{\bX_1|X_1})\leq \delta$, then $\pr(f(X_1,\uX_2,\uY)\neq g(\uX_1,\uX_2,\uY))\leq \gamma(\delta)$ under the p.m.f. $Q_{X_1\uX_2\uY\uX_1}$. 
    \begin{proof}[Proof of Lemma~\ref{lem:2user-technical-single-letter}]
Define $Q'_{X_1'\uX_2'\uY'\uX_1'}$ as
\begin{align}
    Q'_{X_1'\uX_2'\uY'\uX_1'}&(x_1,\ux_2,\uy,\ux_1)\notag\\
    &:=P_{X_1X_2Y}(x_1,\ux_2,\uy)W_{\bX_1|X_1}(\ux_1|x_1), \label{eq:new-2user-tech1}
\end{align}
$x_1,\ux_1\in\cX_1$, $\ux_2\in\cX_2$, $\uy\in\cY$. We first note that 
$\dtv(Q_{X_1\uX_2\uY\uX_1},Q'_{X_1'\uX_2'\uY'\uX_1'})\leq \delta$ implies  $\dtv(Q_{\uX_1\uX_2\uY},Q'_{\uX_1'\uX_2'\uY'})\leq \delta$. Also, since $Q_{\uX_1\uX_2\uY}\in \cV_1^{\delta}$, there is a $Q''\in\cV_1$ such that $\dtv(Q_{\uX_1\uX_2\uY},Q'')\leq\delta$. Hence, $\dtv(Q'_{\uX_1'\uX_2'\uY'},Q'')\leq 2\delta$, i.e., $Q'_{\uX_1'\uX_2'\uY'}\in\cV_1^{2\delta}$.
Similarly, since $Q_{\uX_1\uX_2\uY}\in \cV_2^{\delta}$, we have $Q'_{\uX_1'\uX_2'\uY'}\in\cV_2^{2\delta}$. Hence, $Q'_{\uX_1'\uX_2'\uY'}\in\cV_1^{2\delta}\cap\cV_2^{2\delta}$.
Moreover, since $Q'_{X_1'\uX_2'\uY'\uX_1'}$ is of the form in \eqref{eq:new-2user-tech1}, by Lemma~\ref{lem:old-2user-technical-single-letter}, 
\[
\pr(f(X_1', \uX_2', \uY')\neq g(\uX_1', \uX_2',\uY'))\leq \gamma'(2\delta)
\]
under $Q'_{X_1'\uX_2'\uY'\uX_1'}$. Therefore, under $Q_{X_1\uX_2\uY\uX_1}$,
\begin{align*}
&\pr\inp{f(X_1, \uX_2, \uY)\neq g(\uX_1, \uX_2, \uY)} \\
&= \sum_{x_1, \ux_1, \ux_2, \uy}Q_{X_1\uX_2\uY\uX_1}(x_1, \ux_2, \uy, \ux_1)\mathbb{1}_{{f(x_1, \ux_2, \uy)\neq g(\ux_1, \ux_2, \uy)}}\\
& \leq \sum_{x_1, \ux_1, \ux_2, \uy}Q'_{X_1\uX_2\uY\uX_1}(x_1, \ux_2, \uy, \ux_1)\mathbb{1}_{{f(x_1, \ux_2, \uy)\neq g(\ux_1, \ux_2, \uy)}}\\
 &\quad\quad + \dtv(Q_{X_1\uX_2\uY\uX_1},Q'_{X_1'\uX_2'\uY'\uX_1'})\\
        &\leq \gamma'(2\delta) + \delta=:\gamma(\delta).
        \end{align*} 
        This concludes the proof.%
    \end{proof}

\subsubsection{Proof of Lemma~\ref{lem:2user-typicality}}\label{sec:2user-typicality-proof}
For $\delta>0$, define
\begin{align*}
\cQ_n(\delta):=\{&Q_{X_1'X_2'Y'\bX_1'}\in\cP_n : \\
 &\qquad D(Q_{X_1'X_2'Y'\bX_1'}||P_{X_1X_2Y}Q_{\bX_1'|X_1'})\leq 2{\delta}^2\}.
\end{align*}
We first show that 
\begin{align}
\pr\inp{\type{X_1^n,X_2^n,Y^n,\bX_1^n}\in \cQ_n(\delta)}\geq 1-2^{-\Omega(n)}.\label{eq:lemma3parta}
\end{align} This will imply that, with high probability, there exists a conditional distribution $Q_{\bX_1'|X_1'}$ such that
\begin{align*}
2{\delta}^2&\geq D(\type{X_1^n,X_2^n,Y^n,\bX_1^n}\|P_{X_1X_2Y}Q_{\bX_1'|X_1'})\\
&\geq D\inp{\type{X_2^n,Y^n,\bX_1^n}\middle\|\sum_{x_1}P_{X_1X_2Y}(x_1,\cdot,\cdot)Q_{\bX_1'|X_1'}(\cdot|x_1)}\\
&\stackrel{\text{(a)}}{\geq} 2\dtv^2\inp{\type{X_2^n,Y^n,\bX_1^n},\sum_{x_1}P_{X_1X_2Y}(x_1,\cdot,\cdot)Q_{\bX_1'|X_1'}(\cdot|x_1)},
\end{align*} where (a) follows from Pinsker's inequality. This will prove part {\em (i)} of the lemma. We now  show \eqref{eq:lemma3parta} using a symmetrization trick which was also used in~\cite{narayanan2023complete}.
\begin{align*}
&\pr\inp{\type{X_1^n,X_2^n,Y^n,\bX_1^n}\notin \cQ_n(\delta)}\\
&=\sum_{\substack{{x_1^n, x_2^n, y^n, \bx_1^n}:\\\type{x_1^n, x_2^n, y^n, \bx_1^n}\in \cQ_n(\delta)}} P^n_{X_1 X_2 Y}(x_1^n,x_2^n,y^n)  W_1(\bx_1^n|x_1^n)\\
& \stackrel{\text{(a)}}{=} \sum_{\pi\in\Pi_n}\frac{1}{|\Pi_n|}\sum_{\substack{{x_1^n, x_2^n, y^n, \bx_1^n}:\\\type{x_1^n, x_2^n, y^n, \bx_1^n}\notin \cQ_n(\delta)}}
\\&\qquad\quad
P^n_{X_1 X_2 Y}(\pi\inp{x_1^n},\pi\inp{x_2^n},\pi\inp{y^n}) 
W_1(\pi\inp{\bx_1^n}|\pi\inp{x_2^n})\\
& \stackrel{\text{(b)}}{=} \sum_{\substack{{x_1^n, x_2^n, y^n, \bx_1^n}:\\\type{x_1^n, x_2^n, y^n, \bx_1^n}\notin \cQ_n(\delta)}}P^n_{X_1 X_2 Y}({x_1^n},{x_2^n},{y^n})\\
&\qquad \qquad \inp{\sum_{\pi\in\Pi_n}\frac{1}{|\Pi_n|} W_1(\pi\inp{\bx_1^n}|\pi\inp{x_2^n})}\\
& \stackrel{\text{(c)}}{=} \sum_{\substack{{x_1^n, x_2^n, y^n, \bx_1^n}:\\\type{x_1^n, x_2^n, y^n, \bx_1^n}\notin \cQ_n(\delta)}}P^n_{X_1 X_2 Y}({x_1^n},{x_2^n},{y^n})W_{\textsf{sym}}({\bx_1^n}|{x_1^n})\\
&\stackrel{\text{(d)}}{\leq} \sum_{\substack{Q_{X_1'X_2'Y'} \in \cP_n
:\\D(Q_{X_1'X_2'Y'}||P_{X_1X_2Y})> \delta^2}}\sum_{\substack{(x_1^n, x_2^n, y^n)\\\in \cT^n_{X_1'X_2'Y'}}}P^n_{X_1 X_2 Y}(x_1^n,x_2^n,y^n)\\
    &\qquad + \sum_{\substack{Q_{X_1'X_2'Y\bX_1'} \in \cP_n:
    \\
    I(X_2'Y';\bX_1'|X_1')> \delta^2}}\sum_{\substack{(x_1^n, x_2^n, y^n)\\\in \cT^n_{X_1'X_2'Y'}}}P^n_{X_1 X_2 Y}(x_1^n,x_2^n,y^n)\\
    &\qquad\qquad \qquad\qquad \qquad\qquad \sum_{\bx_1^n\in \cT^n_{\bX_1'|X_1'X_2'Y'}}W_{\textsf{sym}}(\bx^n_1|x^n_1)\\
    & \stackrel{\text{(e)}}{=}: A + B,
\end{align*} where, in (a), $\Pi_n$ denotes the set of all permutations of $(1,2, \ldots, n)$ and for $\pi\in \Pi_n$, $\pi(x^n)=\pi\inp{x_1, x_2, \ldots, x_n}$ denotes $\inp{x_{\pi(1)}, x_{\pi(2)},\ldots,x_{\pi(n)}}$; 
(b) follows by noticing that $P^n_{X_1 X_2 Y}$ is invariant under permutations; in (c), we define $W_{\textsf{sym}}({\bx_1^n}|{x_1^n}) := \inp{\sum_{\pi\in \Pi_n}\frac{1}{|\Pi_n|}W_1(\pi\inp{\bx_1^n}|\pi\inp{x_1^n})}$, which we note is a channel since summing over all $\bx_1^n$ for a fixed $x_1^n$ yields 1; and (d) follows by noticing that $Q_{X_1'X_2'Y'\bX_1'}\notin \cQ_n(\delta)$ only if either $D(Q_{X_1'X_2'Y'}||P_{X_1X_2Y})> \delta^2$ or $I(X_2'Y';\bX_1'|X_1')> \delta^2$ (or both). This is because $D(Q_{X_1'X_2'Y'}||P_{X_1X_2Y}) + I(X_2'Y';\bX_1'|X_1') = D(Q_{X_1'X_2'Y'\bX_1'}||P_{X_1X_2Y}Q_{\bX_1'|X_1'})$. We also used the fact that $W_{\textsf{sym}}$ is a channel in step (d) in writing the upper bound in the first term. In step~(e), we define the two terms in the previous step as $A$ and $B$, respectively. We first analyze $A$.
\begin{align*}
&\sum_{\substack{Q_{X_1'X_2'Y'} \in \cP_n
:\\D(Q_{X_1'X_2'Y'}||P_{X_1X_2Y})> \delta^2}}\sum_{\substack{(x_1^n, x_2^n, y^n)\\\in \cT^n_{X_1'X_2'Y'}}}P^n_{X_1 X_2 Y}(x_1^n,x_2^n,y^n)\\
&\stackrel{\text{(a)}}{\leq} \sum_{\substack{Q_{X_1'X_2'Y'} \in \cP_n
:\\D(Q_{X_1'X_2'Y'}||P_{X_1X_2Y})> \delta^2}}\exp{\inp{-nD\inp{Q_{X_1'X_2'Y'}||P_{X_1X_2Y}}}}\\
&\stackrel{\text{(b)}}{\leq} (n+1)^{|\cX_1||\cX_2||\cY|}\exp\inp{-n\delta^2}\\
&= 2^{-\Omega(n)},
\end{align*} where (a) follows from \eqref{eq:type1} and (b) follows from \eqref{eq:type4}. Now, we analyse $B$.
\begin{align*}
&\sum_{\substack{Q_{X_1'X_2'Y'\bX_1'}
\in \cP_n:
\\I(X_2'Y';\bX_1'|X_1')> \delta^2}}\sum_{\substack{(x_1^n, x_2^n, y^n)\\\in \cT^n_{X_1'X_2'Y'}}}P^n_{X_1 X_2 Y}(x_1^n,x_2^n,y^n)\\
    &\qquad\qquad \qquad\quad \sum_{\bx_1^n\in \cT^n_{\bX_1'|X_1'X_2'Y'}(x_1^n, x_2^n, y^n)}W_{\textsf{sym}}(\bx^n_1|x^n_1)\\
&\stackrel{\text{(a)}}{\leq}    \sum_{\substack{Q_{X_1'X_2'Y'\bX_1'} 
\in \cP_n:
\\I(X_2'Y';\bX_1'|X_1')> \delta^2}}\sum_{\substack{(x_1^n, x_2^n, y^n)\\\in \cT^n_{X_1'X_2'Y'}}}P^n_{X_1 X_2 Y}(x_1^n,x_2^n,y^n)\\
    &\quad\exp\inp{nH(\bX_1'|X_1'X_2'Y')} (n+1)^{|\cX_1|^2}\exp\inp{-nH(\bX_1'|X_1')}\\
&\stackrel{\text{(b)}}{\leq}\sum_{\substack{Q_{X_1'X_2'Y'\bX_1'} 
\in \cP_n:
\\I(X_2'Y';\bX_1'|X_1')> \delta^2}}(n+1)^{|\cX_1|^2}\exp\inp{-n\delta^2}\\
    &\leq (n+1)^{|\cX_1||\cX_2||\cY||\cX_1|}(n+1)^{|\cX_1|^2}\exp\inp{-n\delta^2}\\
    & = 2^{-\Omega(n)},
\end{align*} where (a) follows by noting that $W_{\textsf{sym}}(\bx^n_1|x^n_1)$ is the same for all $\bx^n_1\in \cT^n_{\bX_1|X_1}(x^n_1)$ and hence, by using $\sum_{\bx^n_1\in \cT^n_{\bX_1'|X_1'}(x^n_1)}W_{\textsf{sym}}(\bx^n_1|x^n_1)\leq 1$, we have  $W_{\textsf{sym}}(\bx^n_1|x^n_1)\leq \frac{1}{|\cT^n_{\bX_1'|X_1'}(x^n_1)|}\leq (n+1)^{|\cX_1|^2}\exp\inp{-H(\bX_1'|X_1')}$ from \eqref{eq:type3}; moreover, the size of $\cT^n_{\bX_1'|X_1'X_2'Y'}(x_1^n, x_2^n, y^n)$ is at most $\exp\inp{nH(\bX_1'|X_1'X_2'Y')}$ by \eqref{eq:type3}. The inequality (b) follows by noticing that  $H(\bX_1'|X_1') - H(\bX_1'|X_1'X_2'Y') = I(\bX_1';X_2'Y'|X_1')> \delta^2$. Thus, we have shown~\eqref{eq:lemma3parta}.

We show part~(ii) of the lemma now.
\begin{align*}
&\pr\left((\type{\bX_1^n X_2^n Y^n} \in \cV_1^{\delta} \cap \cV_2^{\delta}) \cap \right.\\
&\quad \left. (\ham(f(X_1^n,X_2^n,Y^n),g(\bX_1^n,X_2^n,Y^n))>\gamma(\delta))\right)\\
&\stackrel{\text{(a)}}{=} \pr\left((\type{\bX_1^n X_2^n Y^n} \in \cV_1^{\delta} \cap \cV_2^{\delta}) \cap \right.\\ 
&\qquad\left.(\ham(f(X_1^n,X_2^n,Y^n),g(\bX_1^n,X_2^n,Y^n))>\gamma(\delta))\cap\right.\\
&\qquad\left.(\type{X_1^n X_2^n Y^n\bX_1^n}\in \cQ_n(\delta))\right)
+ 2^{-\Omega(n)}\\
&\stackrel{\text{(b)}}{=}: C + 2^{-\Omega(n)},
\end{align*} where $(a)$ follows from \eqref{eq:lemma3parta} and in (b) we define $C$ as the first term. We will show that $C = 0$.
This will be a consequence of Lemma~\ref{lem:2user-technical-single-letter} (restated below). 
\begin{lemma*}
For a 1-viable $f$ along with $g$ as in Lemma~\ref{lem:decoding-function-2user}, there is a ${\gamma}:\bbR^+\to\bbR^+$ such that $\lim_{\delta\to 0} {\gamma}(\delta)=0$
satisfying the following: if 
$Q_{X_1\uX_2\uY,\uX_1}$ is such that $Q_{\uX_1\uX_2\uY}$ belongs to $\cV_1^{\delta}\cap\cV_2^{\delta}$ and there is a $W_{\bX_1|X_1}$ for which $\dtv(Q_{X_1\uX_2\uY,\uX_1}, P_{X_1X_2Y}W_{\bX_1|X_1})\leq \delta$, then $\pr(f(X_1,\uX_2,\uY)\neq g(\uX_1,\uX_2,\uY))\leq {\gamma}(\delta)$ under the p.m.f. $Q_{X_1\uX_2\uY,\uX_1}$. 
\end{lemma*}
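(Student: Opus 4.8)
The restated lemma is precisely Lemma~\ref{lem:2user-technical-single-letter}, whose proof was carried out in Section~\ref{sec:2user-technical-single-letter-proof}; the plan here is to recall the structure of that argument and then use the lemma to show $C=0$, completing part~(ii) of Lemma~\ref{lem:2user-typicality}. For the lemma itself, the plan is to first settle the special case in which the hypothesis p.m.f.\ is \emph{exactly} the product $P_{X_1X_2Y}W_{\bX_1|X_1}$ (this is Lemma~\ref{lem:old-2user-technical-single-letter}), and then reduce the general $Q_{X_1\uX_2\uY\uX_1}$ to it by a total-variation perturbation. For the reduction: if $\dtv(Q_{X_1\uX_2\uY\uX_1},P_{X_1X_2Y}W_{\bX_1|X_1})\le\delta$, then the induced view-marginal of $P_{X_1X_2Y}W_{\bX_1|X_1}$ is within $\delta$ of $Q_{\uX_1\uX_2\uY}\in\cV_1^\delta\cap\cV_2^\delta$, hence lies in $\cV_1^{2\delta}\cap\cV_2^{2\delta}$; applying the special case with parameter $2\delta$ bounds $\pr(f\neq g)$ by $\gamma'(2\delta)$ under $P_{X_1X_2Y}W_{\bX_1|X_1}$, and transporting the indicator event back to $Q_{X_1\uX_2\uY\uX_1}$ costs at most $\delta$ more in total variation, so $\gamma(\delta):=\gamma'(2\delta)+\delta$ works.

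For the special case, the key steps would be: (i) note that $\cV_1\cap\cV_2\ni P_{X_1X_2Y}$ is nonempty and, being an intersection of closed sets, closed and hence compact; (ii) by a sequence/limit-point compactness argument (Lemma~\ref{lem:comp-int}) produce a modulus $\varepsilon(\cdot)$ with $\varepsilon(\delta)\to0$ so that every point of $\cV_1^\delta\cap\cV_2^\delta$ is within $\varepsilon(\delta)$ of $\cV_1\cap\cV_2$; (iii) by the same kind of argument applied to the continuous map $\ttf_1$ and its closed preimages (Lemma~\ref{lem:2continuous}) produce a modulus $\eta(\cdot)$ with $\eta(\varepsilon)\to0$ so that any channel $W_{\bX_1|X_1}$ with $\ttf_1(W_{\bX_1|X_1})$ within $\varepsilon$ of $\cV_1\cap\cV_2$ is, in the per-letter total-variation metric $d$, within $\eta(\varepsilon)$ of some channel $C_{\bX_1|X_1}$ with $\ttf_1(C_{\bX_1|X_1})\in\cV_1\cap\cV_2$; (iv) since $\ttf_1(C_{\bX_1|X_1})\in\cV_1\cap\cV_2$ there is a matching channel $C_{\bX_2|X_2}$ inducing the same view-distribution, so Lemma~\ref{lem:decoding-function-2user}(i) gives $f(X_1,X_2,Y)=g(\bX_1,X_2,Y)$ almost surely under $P_{X_1X_2Y}C_{\bX_1|X_1}$; (v) swap $C_{\bX_1|X_1}$ for $W_{\bX_1|X_1}$, incurring an extra error at most $\sum_{x_1}P_{X_1}(x_1)\,\dtv(W_{\bX_1|X_1}(\cdot|x_1),C_{\bX_1|X_1}(\cdot|x_1))\le d(W_{\bX_1|X_1},C_{\bX_1|X_1})\le\eta_{\cV_1\cap\cV_2}(\varepsilon(\delta))$, so that $\gamma'(\delta):=|\cX_1|\,\eta_{\cV_1\cap\cV_2}(\varepsilon(\delta))$ suffices.

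To finish part~(ii), I would apply the lemma at the level of types. Fix any $(x_1^n,x_2^n,y^n,\bx_1^n)$ with $\type{\bx_1^n,x_2^n,y^n}\in\cV_1^\delta\cap\cV_2^\delta$ and $Q:=\type{x_1^n,x_2^n,y^n,\bx_1^n}\in\cQ_n(\delta)$ (both conditions are part of the event whose probability is $C$). Then $Q$, read as a p.m.f.\ on $(X_1,\uX_2,\uY,\uX_1)$, meets the lemma's hypotheses: $Q_{\uX_1\uX_2\uY}\in\cV_1^\delta\cap\cV_2^\delta$ is immediate, and $D(Q\,\|\,P_{X_1X_2Y}Q_{\bX_1'|X_1'})\le2\delta^2$ gives, by Pinsker, $\dtv(Q,P_{X_1X_2Y}Q_{\bX_1'|X_1'})\le\delta$, so the perturbation hypothesis holds with $W_{\bX_1|X_1}=Q_{\bX_1'|X_1'}$. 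Hence $\pr(f(X_1,\uX_2,\uY)\neq g(\uX_1,\uX_2,\uY))\le\gamma(\delta)$ under $Q$, and this probability equals $\ham(f(x_1^n,x_2^n,y^n),g(\bx_1^n,x_2^n,y^n))$ exactly. Thus the event defining $C$ (these two type conditions together with distortion $>\gamma(\delta)$) is empty, so $C=0$, and part~(ii) follows with this $\gamma$.

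The main obstacle is steps (ii)--(iii): one must be careful that all the sets in play ($\cV_1,\cV_2$, $\cV_1\cap\cV_2$, the map $\ttf_1$, the preimages $\ttf_1^{-1}(\cdot)$) are closed and sit inside compact spaces, so the limit-point arguments go through and produce moduli $\varepsilon(\delta),\eta(\varepsilon)$ that are \emph{uniform} over the hypothesis distribution and vanish with $\delta$; this uniformity is exactly what yields a single $\gamma$ independent of $W_1$ and $n$ in Lemma~\ref{lem:2user-typicality}.
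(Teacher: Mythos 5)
Your proposal is correct and tracks the paper's argument essentially step for step: the reduction $\gamma(\delta)=\gamma'(2\delta)+\delta$ to the exact-DMC case, the two compactness/continuity moduli of Lemma~\ref{lem:comp-int} and Lemma~\ref{lem:2continuous}, the channel-swap bound giving $\gamma'(\delta)=|\cX_1|\,\eta_{\cV_1\cap\cV_2}(\varepsilon(\delta))$, and the type-level application via Pinsker to conclude $C=0$. There is nothing to flag beyond noting that, as in the paper, the factor $|\cX_1|$ in $\gamma'$ is looser than necessary since your own chain already gives the tighter bound $d(W_{\bX_1|X_1},C_{\bX_1|X_1})\le\eta_{\cV_1\cap\cV_2}(\varepsilon(\delta))$.
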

 For any joint type $Q_{X_1'X_2'Y'\bX_1'}\in \cQ_n\inp{\delta}$, we have $D(Q_{X_1'X_2'Y'\bX_1'}||P_{X_1X_2Y}Q_{\bX_1'|X_1'})\leq 2{\delta}^2$ and hence $d_{TV}(Q_{X_1'X_2'Y'\bX_1'},P_{X_1X_2Y}Q_{\bX_1'|X_1'})\leq \delta$ (using Pinsker's inequality). We conclude from Lemma~\ref{lem:2user-technical-single-letter} that 
\begin{align*}
{\gamma}&( \delta)\geq \bbP\inp{f(X_1', X_2', Y')\neq g(\bX_1', X_2', Y')}\\
&=\sum_{x_1,  x_2, y, \bx_1}Q_{X_1'X_2'Y'\bX_1'}(x_1, x_2, y, \bx_1)\mathbb{1}_{\inp{f(x_1, x_2, y)\neq g(\bx_1, x_2, y)}}.
\end{align*}
Since, for any $(x_1^n,x_2^n,y^n,\bx_1^n)$, the average distortion $\ham(f(x_1^n,x_2^n,y^n),g(\bx_1^n,x_2^n,y^n)$ coincides with $\pr(f(X_1,\uX_2,\uY)\neq g(\uX_1,\uX_2,\uY))$ under the p.m.f. $Q_{X_1\uX_2\uY\uX_1}:=\type{x_1^n,x_2^n,y^n,\bx_1^n}$, 
We conclude that for all sequences $(x_1^n, x_2^n, y^n, \bx_1^n)$ of joint type $Q_{X_1'X_2'Y'\bX_1'}\in \cQ_n(\delta)$ such that $Q_{\bX_1'X_2'Y'}\in\cV_1^{\delta}\cap\cV_2^{\delta}$, we have $\ham(f(x_1^n,x_2^n,y^n),g(\bx_1^n,x_2^n,y^n)\leq \gamma(\delta)$. Hence, $C= 0$.

\subsubsection{Achievability proof of Theorem~\ref{thm:2user}}

The achievability of Theorem~\ref{thm:2user} follows from Lemma~\ref{lem:2user-typicality} (and its analog for user~2) by the discussion following the statement of that lemma in page~\pageref{lem:2user-typicality}.

\section*{Acknowledgment}
We thank Aaron Wagner for calling our attention to the works by Kosut and Tong on byzantine distributed source coding~\cite{KosutTTransIT08,KosutTISIT08,KosutTAllerton08,KosutTISIT09}.

\appendices

\section{General Adversary Structure}\label{app:general-adversary-structure}
    Let $k \in \mathbb{N}$ and $\cX_1, \dots, \cX_{k}, \cY$ be finite alphabets.
    Consider a joint distribution $P_{X_{[k]},Y} := P_{X_1 \dots X_k Y}$ defined on $\cX_1 \times \dots \times \cX_k \times \cY$.
    Let $(X_{1,t}, \dots, X_{k,t}, Y_t) \sim P_{X_{[k]},Y}$, $t \in [n]$ be independent and identically distributed (i.i.d).

    \begin{definition}\label{def:adversary-structure} 
     We call a collection $\coll = \{\cA_1, \cA_2 ,\dots, \cA_m\} \subseteq 2^{[k]}$ of subsets of $[k]$ such that ${\varnothing} \in \coll$ an {\em adversary structure}. The subsets $A\in\coll$ are called {\em adversary sets}.
    \end{definition}
    
    The \emph{$(k,\coll)$-byzantine distributed source coding} problem for an adversarial structure $\coll = \{\cA_1, \cA_2 ,\dots, \cA_m\} \subseteq 2^{[k]}$, where ${\varnothing} \in \coll$, is defined as follows: $\pa_1, \dots, \pa_{k}$ are nodes connected to a decoder via noiseless channels and a set of nodes $\cA_i \in \coll$ is controlled by the adversary. For each $i \in [k]$, node $\pa_{i}$ observes $X_i^{n}:=(X_{i,1}, \dots, X_{i,n})$ and the decoder observes $Y^n:=(Y_1, \dots, Y_n)$.
    The decoder is interested in recovering a function $f$ of the observations (i.e., the domain of $f$ is $\cX_1 \times \dots \times \cX_k \times \cY$). Specifically, if $Z_t = f(X_{1,t}, \dots, X_{k,t},Y_t), t \in [n]$, the decoder desires to recover $Z^n$ (with a vanishing average Hamming distortion) or correctly identify {\em one} of the users controlled by the adversary. 
    The nodes are required to send their observations to the decoder. Let the decoder be $\phi:\cX_1^n\times\ldots\times\cX_k^n\times\cY^n\to [k]\cup\cZ^n$, where $\cZ$ is the co-domain of $f$.
    
    When an adversary controls a set of nodes $\cA \in \coll$, it generates $\bX^n_{\cA}$ by using a channel $W_{\cA}$ (both $\cA$ and $W_{\cA}$ are unknown to the decoder, other than the fact that $\cA \in \coll$). Denoting $\bcA = [k] \setminus \cA$, for $\gamma>0$, the error event when the adversary controls the set $\cA$ is $\cE(\gamma,\cA)=\cE_1(\cA) \cup \cE_2(\gamma,\cA)$ where
    \begin{align*}
    \cE_1(\cA) &= \inp{\phi\inp{\ina{\bX^n_{\cA},X^n_{\bcA}},Y^n} \in \bcA},
    \text{ and}\\ 
    \cE_2(\gamma,\cA) &= \inp{\phi\inp{\ina{\bX^n_{\cA},X^n_{\bcA}},Y^n} \not\in [k]} \bigcap\\ &\qquad\inp{\ham\inp{\phi\inp{\ina{\bX^n_{\cA},X^n_{\bcA}},Y^n},Z^n}>\gamma}.
    \end{align*}
    Notice that for $\cA=\varnothing$ (i.e., when the adversary is absent; recall that we always have $\varnothing\in\coll$), an error occurs unless the decoder outputs an estimate $\hZ^n$ and it is of average Hamming distortion no larger than $\gamma$.
    Denote the probability of the error event by $\eta(\gamma, \cA, W_{\cA}) = \pr(\cE(\gamma,\cA))$, where the probability is evaluated under the joint distribution 
    $$P(x^n_{[k]},y^n,\bx^n_{\cA}) = \inp{\prod_{t=1}^n P_{X_{[k]}, Y}(x_{[k],t},y_t)} W_{\cA}(\bx^n_{\cA}|x^n_{\cA}).$$
    For $\gamma>0$, the error probability of decoder $\phi$ is defined as
    $$\epsilon(\gamma,\phi) = \max_{\cA \in \coll} \sup_{W_{\cA}} \eta(\gamma,\cA,W_{\cA}).$$
    \begin{definition}
        For a distribution $P_{X_{[k]},Y}$, $f:\cX_1 \times \dots \times \cX_{k} \times \cY \rightarrow \cZ$ is \emph{$\coll$-robustly recoverable} if for all $\gamma>0$, there exists a sequence of decoders $\phi_n, n \in \mathbb{N}$, such that $\lim \inf_{n \rightarrow \infty} \epsilon(\gamma,\phi_n) =0$.
    \end{definition}

    \begin{definition}\label{def:non-intersecting}
         For an adversary structure $\coll = \{\cA_1, \cA_2, \dots, \cA_m\}$, we say that a non-empty subset $\cI \subseteq [m]$ (corresponding to adversary sets $\cA_i,i\in\cI$) is {\em non-intersecting} if $\bigcap_{i \in \cI} \cA_i = \varnothing$. We denote the set of all such non-intersecting subsets by $\mathbb{I}$.
    \end{definition}

    We now define the class of functions which are analogous to Definition~\ref{def:s-viable} of $s$-viable functions for the  $(k,s)$-byzantine distributed source coding problem.
    \begin{definition}\label{lem:A-viable}
        We say a function $f$ with domain $\cX_1 \times \cX_2 \times \dots \times \cX_k \times \cY$ is \emph{$\coll$-viable} 
        if for a given collection $\coll = \{\cA_1 \dots, \cA_m\}$, for any non-intersecting (and hence also non-empty; see Definition~\ref{def:non-intersecting} above) $\cI \subseteq [m]$, under every joint p.m.f. $Q_{\uX_{[k]}, \uY, \inp{\tX^i_{\cA_i}}_{i \in \cI}}$ over $\cX_1 \times \dots \times \cX_k \times \cY \times \prod_{i \in \cI} \prod_{j \in \cA_i} \cX_j$ 
        satisfying, for each $i \in \cI$, 
        \begin{enumerate}
            \item[(a)]   $Q_{\ina{\tX^i_{\cA_i} \uX_{\bcA_i}},Y} = P_{X_{[k]},Y}$, and
            \item[(b)]   $\uX_{\cA_i} \mc \tX^i_{\cA_i} \mc \inp{\uX_{\bcA_i},Y}$, 
        \end{enumerate}
        we have for all $i,i' \in \cI$ (with probability $1$), 
        $$f \inp{\ina{\tX_{\cA_i}^i, \uX_{\bcA_i}},Y} = f \inp{\ina{\tX_{\cA_{i'}}^{i'}, \uX_{\bcA_{i'}}},\uY}.$$
    \end{definition}

We have the following characterization which we prove in the following subsections.
    \begin{theorem}\label{thm:kuser-main}
        For $P_{X_{[k]}Y}$ and an adversarial structure $\coll$, $f:\cX_1 \times \dots \times \cX_k \times \cY \rightarrow \cZ$ is $\coll-$robustly recoverable if and only if $f$ is $\coll$-viable.
    \end{theorem}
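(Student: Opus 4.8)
The plan is to mirror the proof of Theorem~\ref{thm:2user} in Sections~\ref{sec:2user}--\ref{sec:2user-proof} almost line for line, replacing the two distinguished singletons $\{1\},\{2\}$ by the adversary sets of $\coll$ and the two-user condition by the non-intersecting condition of Definition~\ref{def:non-intersecting}. \emph{Converse.} Suppose $f$ is $\coll$-robustly recoverable; fix a non-intersecting $\cI\subseteq[m]$ and a joint p.m.f.\ $Q_{\uX_{[k]},\uY,(\tX^i_{\cA_i})_{i\in\cI}}$ satisfying (a),(b) of Definition~\ref{lem:A-viable}. For each $i\in\cI$ consider the scenario in which the adversary controls $\cA_i$, the true sources $(\ina{\tX^i_{\cA_i},\uX_{\bcA_i}},\uY)$ are $P_{X_{[k]}Y}$ i.i.d.\ (a legitimate instance by (a)), the honest users report $\uX^n_{\bcA_i}$, the decoder sees $\uY^n$, and the adversary applies the memoryless channel $Q_{\uX_{\cA_i}|\tX^i_{\cA_i}}$; by the Markov chain (b) the inputs to the decoder are $Q_{\uX_{[k]}\uY}$ i.i.d.\ in \emph{every} one of these $|\cI|$ scenarios, so the decoder cannot distinguish them. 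Since $\bigcap_{i\in\cI}\cA_i=\varnothing$ we have $\bigcup_{i\in\cI}\bcA_i=[k]$, hence $\{\phi_n\in[k]\}=\bigcup_{i\in\cI}\{\phi_n\in\bcA_i\}$; if $\epsilon(\gamma,\phi_n)\le\delta$, a union bound over $i\in\cI$ gives $\pr(\phi_n\in[k])\le|\cI|\delta$ and, together with the bounds on the events $\cE_2(\gamma,\cA_i)$, that with probability at least $1-2|\cI|\delta$ the decoder outputs an estimate lying within Hamming distortion $\gamma$ of $f(\ina{\tX^{i,n}_{\cA_i},\uX^n_{\bcA_i}},\uY^n)$ simultaneously for all $i\in\cI$. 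The triangle inequality then forces, for every $i,i'\in\cI$, $\ham\inp{f(\ina{\tX^{i,n}_{\cA_i},\uX^n_{\bcA_i}},\uY^n),\,f(\ina{\tX^{i',n}_{\cA_{i'}},\uX^n_{\bcA_{i'}}},\uY^n)}\le 2\gamma$ with probability tending to $1$; letting $n\to\infty$ and invoking the law of large numbers yields $\pr\inp{f(\ina{\tX^i_{\cA_i},\uX_{\bcA_i}},\uY)\ne f(\ina{\tX^{i'}_{\cA_{i'}},\uX_{\bcA_{i'}}},\uY)}\le 2\gamma$, and since $\gamma>0$ is arbitrary this probability is $0$, i.e.\ $f$ is $\coll$-viable.

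\emph{Achievability --- the decoder.} For $\cA\in\coll$ let $\cV_\cA=\inb{Q_{\uX_{[k]}\uY}:\exists\,W_{\bX_\cA|X_\cA},\ Q_{\uX_{[k]}\uY}(\ux_{[k]},\uy)=\sum_{x_\cA}P_{X_{[k]}Y}(\ina{x_\cA,\ux_{\bcA}},\uy)W_{\bX_\cA|X_\cA}(\ux_\cA|x_\cA)}$ be the set of single-letter input distributions an adversary controlling $\cA$ can induce, and $\cV_\cA^\delta=\bigcup_{Q\in\cV_\cA}\cB(Q,\delta)$; each $\cV_\cA$ is compact and contains $P_{X_{[k]}Y}$. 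Given decoder inputs of type $\tau$, set $\cS(\tau)=\inb{\cA\in\coll:\tau\in\cV_\cA^\delta}$. The decoder outputs some $j\in\bigcap_{\cA\in\cS(\tau)}\cA$ if that intersection is non-empty, and otherwise outputs $g$ applied coordinatewise, where $g$ is supplied by the structural lemma below (and agrees with $f$ on $\supp(P_{X_{[k]}Y})$). When the adversary controls $\cA^\ast\in\coll$ (possibly $\cA^\ast=\varnothing$), the typicality lemma below gives $\cA^\ast\in\cS(\tau)$ w.h.p., so any output $j$ lies in $\cA^\ast$ and is correctly flagged; and $\bigcap_{\cA\in\cS(\tau)}\cA=\varnothing$ forces $\cS(\tau)$ to be non-intersecting, which is precisely what allows viability to be invoked for the distortion bound on the $g$-estimate. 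With the adversary absent, $\varnothing\in\cS(\tau)$ w.h.p., so the decoder outputs $g=f$ on the (w.h.p.\ typical) inputs and recovers $Z^n$ with vanishing distortion.

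\emph{The supporting lemmas, and the obstacle.} Three lemmas, each the analogue of its two-user counterpart, complete the proof. (i)~A \emph{structural lemma} generalizing Lemma~\ref{lem:decoding-function-2user}: for $\coll$-viable $f$ there is $g$, equal to $f$ on $\supp(P_{X_{[k]}Y})$, such that for every non-intersecting $\cS\subseteq\coll$, every $Q_{\uX_{[k]}\uY}\in\bigcap_{\cA\in\cS}\cV_\cA$, every $\cA^\ast\in\cS$, and every $W_{\bX_{\cA^\ast}|X_{\cA^\ast}}$ realizing $Q_{\uX_{[k]}\uY}$, applying $g$ to an input drawn from $P_{X_{[k]}Y}W_{\bX_{\cA^\ast}|X_{\cA^\ast}}$ recovers $f$ of the true sources with probability $1$: one forms mutually conditionally-independent ``true-source'' variables $\tX^\cA_\cA$, $\cA\in\cS$, realizing $Q_{\uX_{[k]}\uY}$, applies Definition~\ref{lem:A-viable} to the non-intersecting $\cS$ to conclude that all coordinatewise values coincide, and defines $g$ as that common value. (ii)~A \emph{robust single-letter lemma} generalizing Lemmas~\ref{lem:comp-int},~\ref{lem:2continuous}, and~\ref{lem:2user-technical-single-letter}: there is $\gamma(\delta)\to 0$ as $\delta\to 0$ such that the conclusion of (i) degrades only to $\pr(f\ne g)\le\gamma(\delta)$ once the hypotheses are loosened by $\delta$ in total variation; this follows, just as in Section~\ref{sec:2user-technical-single-letter-proof}, from compactness of the (finite, non-empty) intersection $\bigcap_{\cA\in\cS}\cV_\cA$ and uniform continuity of the channel-to-induced-distribution maps $\ttf_{\cA}$. (iii)~A \emph{typicality lemma} generalizing Lemma~\ref{lem:2user-typicality}: for any, possibly non-memoryless, $W_{\cA^\ast}$, the input type lies in $\cV_{\cA^\ast}^\delta$ except with probability $2^{-\Omega(n)}$, and, on the event that the decoder outputs $g$, the distortion exceeds $\gamma(\delta)$ only with probability $2^{-\Omega(n)}$; the proof is the method-of-types-plus-random-permutation symmetrization of Section~\ref{sec:2user-typicality-proof}, with a final union bound over the finitely many possible values of $\cS(\tau)$. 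Combining (i)--(iii) and letting $\delta\downarrow 0$ gives $\liminf_n\epsilon(\gamma,\phi_n)=0$ for every $\gamma>0$. The single step that is genuinely more than bookkeeping is the well-definedness of $g$ in (i): in the two-user case there is only one relevant non-intersecting collection, $\inb{\inb{1},\inb{2}}$, whereas for a general $\coll$ one must reconcile the coordinatewise values forced by many overlapping non-intersecting sub-collections (none of which need contain $\varnothing$) and then extend $g$ consistently off $\supp(P_{X_{[k]}Y})$; this is exactly where the union-of-supports and convex-combination argument of Section~\ref{sec:decoding-function-2user-proof} has to be set up with care, while everything else transcribes directly.
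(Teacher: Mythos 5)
Your overall approach --- the converse by instantiating $|\cI|$ indistinguishable attack scenarios and invoking the triangle inequality, and the achievability via a type-test decoder backed by a structural lemma, a continuity lemma, and a symmetrization/method-of-types lemma --- is exactly the paper's route, and your converse is essentially word-for-word the paper's argument. The one genuine gap is in your structural lemma (i). You claim a \emph{single} function $g$ that works simultaneously for \emph{every} non-intersecting $\cS\subseteq\coll$, and you correctly flag this global well-definedness as the step that ``has to be set up with care,'' but you do not actually establish it, and the convex-combination argument of Section~\ref{sec:decoding-function-2user-proof} does not transcribe to the cross-$\cS$ setting: if $W^{(1)}$ is consistent with $\cS_1$ and $W^{(2)}$ with $\cS_2\ne\cS_1$ and both reach the same off-support input, the mixture $\tfrac12 W^{(1)}+\tfrac12 W^{(2)}$ yields a view distribution that need not lie in $\bigcap_{\cA\in\cS_1\cup\cS_2}\cV_\cA$, and $\cS_1\cap\cS_2$ need not be non-intersecting, so the paper's argument for ``union of conditional supports all agree under $f$'' cannot be invoked. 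The paper avoids this difficulty altogether: Lemma~\ref{lem:kuser-single-letter} constructs a \emph{separate} $g_{\cI}$ for each non-intersecting $\cI$, proves well-definedness only within a fixed $\cI$ (where the convex-combination trick does apply), and the decoder uses $g_{\cJ}$ for the data-dependent $\cJ=\cS(\tau)$ it computes; since any one realization of the input selects exactly one $\cJ$, no reconciliation across different non-intersecting families is ever needed. The fix to your proposal is therefore simple --- replace your global $g$ by $g_{\cS(\tau)}$ --- but as written the claim of a single consistent $g$ is unproven and is not needed. Everything else (the definition of $\cV_\cA$ and its $\delta$-blowup, the decoder's blame rule, the role of $\bigcup_{i\in\cI}\bcA_i=[k]$ in the converse union bound, Lemmas (ii)--(iii) and the final union bound over the finitely many possible $\cS(\tau)$) matches the paper's Lemmas~\ref{lem:kcompact}--\ref{lem:kuser-hamming-bound}.
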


    \subsection{Converse}
    \textbf{Converse part of Theorem~\ref{thm:kuser-main} (restated):} For a distribution $P_{X_{[k]}Y}$ and $\coll = \{\cA_1, \dots, \cA_m\}$,  if a function $f$ is $\coll$-robustly recoverable, then for any non-intersecting $\cI \subseteq [m]$ and for any distribution $Q_{\uX_{[k]},\uY, \inp{\tX^i_{\cA_i}}_{i \in \cI}}$ 
    satisfying, for each $i \in \cI$, 
        \begin{enumerate}
            \item[(a)]   $Q_{\ina{\tX^i_{\cA_i} \uX_{\bcA_i}}Y} = P_{X_{[k]}Y}$ and
            \item[(b)]   $\uX_{\cA_i} \mc \tX^i_{\cA_i} \mc \inp{\uX_{\bcA_i},Y}$,
        \end{enumerate}
        we have for all $i,i' \in \cI$ (with probability $1$), 
        $$f \inp{\ina{\tX_{\cA_i}^i, \uX_{\bcA_i}},Y} = f \inp{\ina{\tX_{\cA_{i'}}^{i'}, \uX_{\bcA_{i'}}},\uY}.$$

    \begin{proof}
        Fix any non-intersecting $\cI \subseteq [m]$. For any $Q_{\uX_{[k]},\uY, \inp{\tX^i_{\cA_i}}_{i \in \cI}}$ satisfying conditions (a) and (b) for every $i\in \cI$, the observations reported to the decoder and its side information are jointly distributed as $Q_{\uX_{[k]},\uY}$ i.i.d. under $|\cI|$ different scenarios, each corresponding to one of the adversarial sets $\cA_i$, $i \in \cI$. In more detail, for $i \in \cI$, scenario $i$ is realized as follows: Adversary controls the set $\cA_i$. The underlying observations are $(\langle {\tX^{i,n}_{\cA_i},\uX_{\bcA_i}^n}\rangle, \uY^n)$ distributed as $Q_{\ina{\tX^i_{\cA_i}, \uX_{\bcA_i}},\uY} = P_{X_{[k]},Y}$ i.i.d. Users in the set $\bcA_i$ report their observations $\uX^n_{\bcA_i}$ honestly and the side information at the decoder is $\uY^n$. The adversary produces the report $\uX_{\cA_i}^n$ (for the nodes in $\cA_i$) by passing $\tX^{i,n}_{\cA_i}$ through the DMC $Q_{\uX_{\cA_i}|\tX^i_{\cA_i}}$ and therefore the reported variables follow the Markov chain $\uX_{\cA_i} \mc \tX^i_{\cA_i} \mc \inp{\uX_{\bcA_i},\uY}$. 
        Thus, the reported variables and the side information are jointly distributed as $Q_{\uX_{[k]},\uY}$ i.i.d. 

        Let $\gamma>0$ and $\phi_n$ be a decoder with  error probability $\epsilon(\gamma,\phi_n) \leq \delta$ for some $\delta>0$. Therefore, $\eta(\gamma, \cA, W_{\cA}) \leq \delta$ for $\cA = \cA_i$ and $W_{\cA} = Q_{\bX_{\cA_i}|X_{\cA_i}}$ for all $i \in \cI$. This implies that for all $i \in \cI$ under $Q_{\ina{\uX_{\cA_i} \uX_{\bcA_i}} Y \tX^i_{\cA_i} } = Q_{\uX_{[k]}Y \tX^i_{\cA_i}}$, and therefore under $Q_{\uX_{[k]} \uY \inp{\tX^i_{\cA_i}}_{i \in \cI}}$ i.i.d., 
        \begin{align} \label{eq:kuser-blame-error-conv}
            \pr{
                \inp{
                    \phi \inp{
                        \uX^n_{[k]}, \uY^n
                    } \not\in \cA_i
                }
            } \leq \delta
        \end{align}
        and 
        \begin{align} 
        \pr&\Big(
            ({\phi({\uX^n_{[k]},\uY^n}) \not\in [k]}) \cap 
            \notag\\ &\;\;(\ham({\phi({\uX^n_{[k]},\uY^n}),f({\langle{\tX^{i,n}_{\cA_i}, \uX^n_{\bcA_i}}\rangle,\uY^n})})>\gamma) \Big) \leq \delta. \label{eq:kuser-ham-error-conv}
        \end{align}
        We have
        \begin{align*}
            \MoveEqLeft[2] \pr \inp{ \phi\inp{\uX^n_{[k]}, \uY^n} \not\in [k]}\\
            &\stackrel{(a)}{=} 1 - \pr \inp{\phi\inp{\uX^n_{[k]}, \uY^n} \in \bigcup_{i \in \cI} \inp{\bcA_i}}\\
            &\geq 1 - \sum_{i \in \cI} \pr \inp{\phi\inp{\uX^n_{[k]}, \uY^n} \not\in \cA_i}\\
            &\stackrel{(b)}{\geq} 1- |\cI|\delta, \addtocounter{equation}{1}\tag{\theequation}\label{eq:kuser-blame-err2-conv}
        \end{align*}
        where (a) is because $\bigcap_{i \in \cI} \cA_i = \varnothing$ and (b) is from \eqref{eq:kuser-blame-error-conv}. This gives that for any $i,j \in \cI$,
        \begin{align*}
            \MoveEqLeft[2] \pr \inp{\ham \inp{f\inp{\ina{\tX^{i,n}_{\cA_i}, \uX^n_{\bcA_i}}, \uY^n}, f\inp{\ina{\tX^{j,n}_{\cA_{j}}, \uX^n_{\bcA_{j}}}, \uY^n}}\leq 2\gamma}\\
            \geq& \pr 
            \left( 
                {\inp{\ham \inp{{\phi\inp{\uX^n_{[k]},\uY^n},f\inp{\ina{\tX^{j,n}_{\cA_j}, \uX^n_{\bcA_j}},\uY^n}}} \leq \gamma}} \right.\\
                &\cap 
                \left. \inp{\ham \inp{{\phi\inp{\uX^n_{[k]},\uY^n},f\inp{\ina{\tX^{i,n}_{\cA_i}, \uX^n_{\bcA_i}},\uY^n}}} \leq \gamma} \right.\\
                &\cap 
                \left. \inp{\phi\inp{\uX^n_{[k]}, \uY^n} \not\in [k]}
            \right)\\
            \geq& 1 - (|\cI|\delta + 2 \delta),
        \end{align*}
        where the first inequality is due to the triangle inequality and the last inequality follows from \eqref{eq:kuser-ham-error-conv} and \eqref{eq:kuser-blame-err2-conv} using union bound. For any $\gamma$ and $\delta$, the above holds for an increasing sequence of $n$, i.e., for any $\gamma>0$, passing to a subsequence if needed, for any $i,j \in \cI$, $\pr\inp{\ham \inp{f\inp{\ina{\tX^{i,n}_{\cA_i}, \uX^n_{\bcA_i}}, \uY^n}, f\inp{\ina{\tX^{j,n}_{\cA_{j}}, \uX^n_{\bcA_{j}}}, \uY^n}}\leq 2\gamma} \rightarrow 0$ as $n \rightarrow \infty$. Since, by the law of large numbers, $\ham \inp{f\inp{\ina{\tX^{i,n}_{\cA_i}, \uX^n_{\bcA_i}}, \uY^n}, f\inp{\ina{\tX^{j,n}_{\cA_{j}}, \uX^n_{\bcA_{j}}}, \uY^n}} \rightarrow \pr \inp{ f\inp{\ina{\tX_{\cA_i}^i, \uX_{\bcA_i}},Y} \neq f \inp{\ina{\tX_{\cA_{j}}^{j}, \uX_{\bcA_{j}}},\uY}}$ a.s., the result follows. 
    \end{proof}
    \subsection{Achievability}
The achievability proof is along the lines of that of Theorem~\ref{thm:2user} for the $k=2$, $s=1$ case.

    We start with an analog of Lemma~\ref{lem:decoding-function-2user} which is proved further ahead (in Appendix~\ref{sec:proof-of-kuser-single-letter}).
    \begin{lemma}\label{lem:kuser-single-letter}
        Let $\coll$ be an adversary structure and $f: \cX_1 \times \dots \times \cX_k \times \cY \rightarrow \cZ$ be $\coll$-viable. For every non-intersecting $\cI \subseteq [m]$, there is a $g_{\cI}: \cX_1 \times \dots \times \cX_k \times \cY \rightarrow \cZ$ such that if for any set of conditional distributions $\left\{W_{\bX_{\cA_i}|X_{\cA_i}}\right\}_{i \in \cI}$ satisfying, for all $\ux_{[k]},\uy$, and every $i,i' \in \cI$,
        \begin{multline*}
            \sum_{x_{A_i}} P_{X_{[k]},Y}(\langle x_{A_i},\ux_{{A_i}^c}\rangle,y) W_{\bX_{\cA_i}|X_{\cA_i}}(\ux_{\cA_i}|x_{\cA_i})\\
            =\sum_{x_{A_i'}} P_{X_{[k]},Y}(\langle x_{A_i'},\ux_{{A_i'}^c}\rangle,y) W_{\bX_{\cA_{i'}}|X_{\cA_{i'}}}(\ux_{\cA_{i'}}|x_{\cA_{i'}}),
        \end{multline*}
        then, it holds for every $i \in \cI$ that
        \begin{equation*}
        f(X_{[k]},Y) = g_{\cI} \left( \ina{\bX_{\cA_i}, X_{\bcA_i}} , Y \right) \text{ under } P_{X_{[k]},Y}W_{\bX_{\cA_i}|X_{\cA_i}}.
        \end{equation*}
    \end{lemma}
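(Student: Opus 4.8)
The plan is to mimic the proof of Lemma~\ref{lem:decoding-function-2user} at the level of the $k$-user, general-adversary-structure notation. Fix a non-intersecting $\cI\subseteq[m]$. First I would define the set of all joint p.m.f.s that arise from a compatible family of attack channels: let $\cQ_{\cI}$ be the collection of all $Q_{\uX_{[k]},\uY,(\tX^i_{\cA_i})_{i\in\cI}}$ that factor as $Q_{\uX_{[k]}\uY}\prod_{i\in\cI}Q_{\tX^i_{\cA_i}\mid\uX_{[k]}\uY}$, each marginal $Q_{\ina{\tX^i_{\cA_i},\uX_{\bcA_i}}\uY}=P_{X_{[k]}Y}$ with the Markov chain $\uX_{\cA_i}\mc\tX^i_{\cA_i}\mc(\uX_{\bcA_i},\uY)$, and such that all $i\in\cI$ induce the \emph{same} $Q_{\uX_{[k]}\uY}$. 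The hypothesis on $\{W_{\bX_{\cA_i}\mid X_{\cA_i}}\}_{i\in\cI}$ (the displayed consistency equations) says exactly that these channels jointly induce a common view distribution $Q_{\uX_{[k]}\uY}$, so — by defining $Q_{\tX^i_{\cA_i}\mid\uX_{[k]}\uY}$ via Bayes from $P_{X_{[k]}Y}W_{\bX_{\cA_i}\mid X_{\cA_i}}$ — there is a corresponding element of $\cQ_{\cI}$. Then I would define $g_{\cI}(\ux_{[k]},\uy):=f(\ina{\tx^i_{\cA_i},\ux_{\bcA_i}},\uy)$ for \emph{any} $i\in\cI$ and any $\tx^i_{\cA_i}$ with $Q_{\tX^i_{\cA_i}\mid\uX_{[k]}\uY}(\tx^i_{\cA_i}\mid\ux_{[k]},\uy)>0$, for some fixed $Q\in\cQ_{\cI}$ with $Q_{\uX_{[k]}\uY}(\ux_{[k]},\uy)>0$ (and arbitrarily elsewhere). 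By $\coll$-viability (Definition~\ref{lem:A-viable}) this value does not depend on the choice of $i$ or of the witness $\tx^i_{\cA_i}$.

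The crux is showing $g_{\cI}$ is \emph{well-defined independently of which} $Q\in\cQ_{\cI}$ \emph{is used} — this is the step I expect to be the main obstacle, exactly as in the $k=2$ case. The trick is convex combination: given $Q^{(1)},Q^{(2)}\in\cQ_{\cI}$ with a common $(\ux_{[k]},\uy)$ in both supports, and a fixed reference index $i_0\in\cI$, form for each $i\in\cI$ the mixed attack channel $\bar Q_{\uX_{\cA_i}\mid\tX^i_{\cA_i}}=(1-\lambda)Q^{(1)}_{\uX_{\cA_i}\mid\tX^i_{\cA_i}}+\lambda Q^{(2)}_{\uX_{\cA_i}\mid\tX^i_{\cA_i}}$; one checks these are still mutually consistent (the induced view distributions are the corresponding mixtures of two equal distributions, hence still all equal), so the mixture lies in $\cQ_{\cI}$, and its conditional supports are the unions of the two constituents' supports. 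Applying $\coll$-viability to this mixture forces $f(\ina{x_{\cA_{i_0}},\ux_{\bcA_{i_0}}},\uy)$ to take a single value across all $x_{\cA_{i_0}}$ in $\supp(Q^{(1)}_{\tX^{i_0}\mid\cdot})\cup\supp(Q^{(2)}_{\tX^{i_0}\mid\cdot})$, which kills the potential discrepancy $g^{(1)}\neq g^{(2)}$. This is the direct generalization of the contradiction argument in Section~\ref{sec:decoding-function-2user-proof}.

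Finally I would establish the conclusion: given $\{W_{\bX_{\cA_i}\mid X_{\cA_i}}\}_{i\in\cI}$ satisfying the hypothesis, pick the associated $Q\in\cQ_{\cI}$. For a fixed $i$, under $P_{X_{[k]}Y}W_{\bX_{\cA_i}\mid X_{\cA_i}}$, a triple $(x_{[k]},y,\ux_{\cA_i})$ with positive probability gives $\ux_{[k]}=\ina{\ux_{\cA_i},x_{\bcA_i}}$ a point of $\supp(Q_{\uX_{[k]}\uY})$, and $x_{\cA_i}\in\supp(Q_{\tX^i_{\cA_i}\mid\uX_{[k]}\uY}(\cdot\mid\ux_{[k]},y))$, so by the definition of $g_{\cI}$, $g_{\cI}(\ux_{[k]},y)=f(\ina{x_{\cA_i},x_{\bcA_i}},y)=f(x_{[k]},y)$, as required. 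The only subtlety relative to $k=2$ is purely notational — the consistency hypothesis couples \emph{all} pairs $i,i'\in\cI$ simultaneously rather than just two channels, and non-intersecting $\cI$ (rather than $\bigcap\cA_i=\varnothing$ over the whole collection) is what Definition~\ref{lem:A-viable} feeds on — but the convexity argument and the Bayes-rule bookkeeping go through verbatim; I would also note, as in the $k=2$ proof, that when $P_{X_{[k]}Y}$ has full support one may simply take $g_{\cI}=f$, and handle the non-full-support case by the argument above.
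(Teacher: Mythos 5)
Your proposal is correct and follows essentially the same route as the paper's proof in Appendix~\ref{sec:proof-of-kuser-single-letter}: introduce the set $\cQ_{P_{X_{[k]}Y},\cI}$ of compatible joint laws, define $g_{\cI}$ by picking any support point of $Q_{\tX^i_{\cA_i}\mid\uX_{[k]}\uY}$ (consistency within one $Q$ coming directly from $\coll$-viability), and then show independence of the choice of $Q$ via the $\lambda$-mixture of the per-index attack channels, which stays in $\cQ_{P_{X_{[k]}Y},\cI}$ and has conditional supports equal to the unions. The only notable omission is that the paper does not separately treat the full-support case here (as it did for Lemma~\ref{lem:decoding-function-2user}), but that remark is harmless.
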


Analogous to the definition in Section~\ref{sec:2user}, we define the set of possible single-letter ``view'' distributions that an adversary $\cA_i\in\coll$ can induce: 
\begin{definition}
    For $i\in[m]$, 
    \begin{align*}
        \cV_i:=\big\{& Q_{\uX_{[k]},\uY}: \exists W_{\bX_{\cA_{i}}|X_{\cA_{i}}} \text{s.t. }\forall \ux_{[k]},\uy,\; 
              Q_{\uX_{[k]},\uY}(\ux_{[k]},\uy)=
\\&\quad\quad\sum_{x_{A_i}} P_{X_{[k]},Y}(\langle x_{A_i},\ux_{{A_i}^c}\rangle,\uy)W_{\bX_{\cA_{i}}|X_{\cA_{i}}}(\ux_{\cA_{i}}|x_{\cA_{i}})\big\}.
\end{align*}
For $\delta>0$, let $\cV_i^{\delta}=\bigcup_{Q\in\cV_i} \cB(Q,\delta)$, where $\cB(Q,\delta)=\{ Q': \dtv(Q,Q')\leq\delta\}$ is the set of p.m.f.s within total-variation distance $\delta$ of $Q$.
\end{definition}
We note that $P_{X_{[k]},Y}\in\cV_i$ for all $i\in[m]$ (by choosing $W_{\bX_{\cA_{i}}|X_{\cA_{i}}}$ to be the identity channel in the above definition). Hence, $\cap_{i\in\cI}\cV_i\neq\varnothing$  for any non-empty $\cI$ (recall that all non-intersecting $\cI$ are non-empty by definition).

\paragraph*{Decoder} Let $\delta>0$. We use the following decoder $\phi_n$ to prove the achievability. It receives the reported sequences $\ux^n_{[k]}$ and side information $\uy^n$ and computes the set $\cJ = \left\{i \in [m]: \type{\ux^n_{[k]},\uy^n} \in \cV_{i}^{\delta} \right\}$.
                Then, it outputs 
                $$\phi_n\left(\ux^n_{[k]},\uy^n\right) = \begin{cases}
                    \text{error}, & \cJ = \varnothing,\\
                    \bigcap_{i \in \cJ} \cA_i, & \cJ \neq \varnothing, \bigcap_{i \in \cJ} \cA_i \neq \varnothing,\\
                    g_{\cJ}(\ux^n_{[k]},\uy^n), & \cJ \neq \varnothing, \bigcap_{i \in \cJ} \cA_i = \varnothing,
                \end{cases}$$
                where $g_{\cJ}$ is from Lemma~\ref{lem:kuser-single-letter} (which provides a $g_\cI$ for every non-intersecting $\cI$, i.e., if $\cI$ is non-empty and $\bigcap_{i \in \cI} \cA_i$ is empty). Notice that if $\cJ \neq \varnothing$ and $\bigcap_{i \in \cJ} \cA_i$ is non-empty, the decoder above is shown to output the entire intersection; however it suffices to output any element in the intersection (say, the smallest) to fit the problem definition.

                To aid the proof of achievability, we prove the following technical lemma which is the analog of Lemma~\ref{lem:2user-technical-single-letter}. See further ahead (Appendix~\ref{sec:proof-of-kuser-technical-single-letter}) for a proof.
\begin{lemma}\label{lem:kuser-technical-single-letter}
For an $\coll$-viable $f$ and a non-intersecting $\cI\subseteq[m]$ along with a corresponding $g_{\cI}$ as in Lemma~\ref{lem:kuser-single-letter}, for every $i\in\cI$, there is a $\gamma_{i,\cI}:\bb{R}^{+} \rightarrow \bb{R}^{+}$, such that  $\gamma_{i,\cI}(\delta)\to 0$ as $\delta\to 0$, satisfying the following: for $\delta>0$, if $Q_{\uX_{[k]},\uY,X_{\cA_i}}$ is such that  $Q_{\uX_{[k]},\uY}\in  \cap_{j \in \cI} \cV^{\delta}_j$ and there is a $W_{\bX_{\cA_i}|X_{\cA_i}}$ such that  
$\dtv({Q_{\langle{X_{\cA_i}, \uX_{\bcA_i}}\rangle, \uY, \uX_{\cA_i}}, P_{X_{[k]}Y}W_{\bX_{\cA_i}|X_{\cA_i}}}) \leq \delta$, then
$\pr(f(\langle{X_{\cA_i}, \uX_{\bcA_i}}\rangle, \uY)\neq g_{i,\cI}(\uX_{[k]},\uY))\leq \gamma_{i,\cI}(\delta)$ under the p.m.f. $Q_{\uX_{[k]},\uY,X_{\cA_i}}$.
\end{lemma}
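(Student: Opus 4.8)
The plan is to follow the three-step structure used for Lemma~\ref{lem:2user-technical-single-letter} in the $k=2,s=1$ case, with the single manipulated coordinate $X_1$ replaced by the block $X_{\cA_i}$ and $\cV_1\cap\cV_2$ replaced by the common intersection $\bigcap_{j\in\cI}\cV_j$. The first and main step is to establish the ``idealized'' version, the analogue of Lemma~\ref{lem:old-2user-technical-single-letter}: for a $\coll$-viable $f$, a non-intersecting $\cI\subseteq[m]$, the function $g_\cI$ from Lemma~\ref{lem:kuser-single-letter}, and each $i\in\cI$, there is $\gamma'_{i,\cI}:\bbR^+\to\bbR^+$ with $\gamma'_{i,\cI}(\delta)\to0$ as $\delta\to0$ such that if $W_{\bX_{\cA_i}|X_{\cA_i}}$ is any channel whose induced view distribution $\ttf_i(W_{\bX_{\cA_i}|X_{\cA_i}})$, defined by $\ina{\ux_{\cA_i},\ux_{\bcA_i}},\uy\mapsto\sum_{x_{\cA_i}}P_{X_{[k]}Y}(\ina{x_{\cA_i},\ux_{\bcA_i}},\uy)W_{\bX_{\cA_i}|X_{\cA_i}}(\ux_{\cA_i}|x_{\cA_i})$, lies in $\bigcap_{j\in\cI}\cV_j^\delta$, then $\pr(f(X_{[k]},Y)\neq g_\cI(\ina{\bX_{\cA_i},X_{\bcA_i}},Y))\le\gamma'_{i,\cI}(\delta)$ under $P_{X_{[k]}Y}W_{\bX_{\cA_i}|X_{\cA_i}}$.

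For this step I would first generalize Lemmas~\ref{lem:comp-int} and~\ref{lem:2continuous}. Each $\cV_j$ is the image of the compact set of channels $W_{\bX_{\cA_j}|X_{\cA_j}}$ under the continuous (indeed affine) map $\ttf_j$, hence compact, and $\bigcap_{j\in\cI}\cV_j$ is closed and nonempty since it contains $P_{X_{[k]}Y}$ (take identity channels). The compactness argument of Lemma~\ref{lem:comp-int} then applies essentially verbatim to the family $\{\cV_j\}_{j\in\cI}$, producing $\varepsilon_\cI:\bbR_{>0}\to\bbR_{\ge0}$ with $\varepsilon_\cI(\delta)\to0$ and $\dtv(R,\bigcap_{j\in\cI}\cV_j)\le\varepsilon_\cI(\delta)$ whenever $R\in\bigcap_{j\in\cI}\cV_j^\delta$ (the proof extracts a common limit point of the $|\cI|$ minimizing sequences, one per $j$). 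Applying Lemma~\ref{lem:2continuous} verbatim with $\ttf_1$ replaced by $\ttf_i$ and $\cV=\bigcap_{j\in\cI}\cV_j$ (which is nonempty, closed, and contained in $\cV_i$) gives $\eta_\cI$ with $\eta_\cI(\varepsilon)\to0$ such that, whenever $\dtv(\ttf_i(W_{\bX_{\cA_i}|X_{\cA_i}}),\bigcap_{j\in\cI}\cV_j)\le\varepsilon$, there is $C_{\bX_{\cA_i}|X_{\cA_i}}\in\ttf_i^{-1}\big(\bigcap_{j\in\cI}\cV_j\big)$ with $d(W_{\bX_{\cA_i}|X_{\cA_i}},C_{\bX_{\cA_i}|X_{\cA_i}})\le\eta_\cI(\varepsilon)$.

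The $\coll$-viability of $f$ now enters. Since $\ttf_i(C_{\bX_{\cA_i}|X_{\cA_i}})\in\cV_j$ for every $j\in\cI$, for each such $j$ there is a channel $C_{\bX_{\cA_j}|X_{\cA_j}}$ inducing the same view distribution; the family $\{C_{\bX_{\cA_j}|X_{\cA_j}}\}_{j\in\cI}$ then satisfies the pairwise-consistency hypothesis of Lemma~\ref{lem:kuser-single-letter} automatically (all its members induce the one common view), so $f(X_{[k]},Y)=g_\cI(\ina{\bX_{\cA_i},X_{\bcA_i}},Y)$ with probability $1$ under $P_{X_{[k]}Y}C_{\bX_{\cA_i}|X_{\cA_i}}$ — the ``zero-distortion baseline.'' The perturbation estimate of Lemma~\ref{lem:old-2user-technical-single-letter} then carries over line by line: writing $W_{\bX_{\cA_i}|X_{\cA_i}}=(W_{\bX_{\cA_i}|X_{\cA_i}}-C_{\bX_{\cA_i}|X_{\cA_i}})+C_{\bX_{\cA_i}|X_{\cA_i}}$ in the relevant expectation and discarding the $C$ term via the baseline, $\pr(f\neq g_\cI)$ under $P_{X_{[k]}Y}W_{\bX_{\cA_i}|X_{\cA_i}}$ is at most $\sum_{x_{\cA_i},\bx_{\cA_i}}\big[W_{\bX_{\cA_i}|X_{\cA_i}}(\bx_{\cA_i}|x_{\cA_i})-C_{\bX_{\cA_i}|X_{\cA_i}}(\bx_{\cA_i}|x_{\cA_i})\big]_+\le\big(\prod_{j\in\cA_i}|\cX_j|\big)\,d(W_{\bX_{\cA_i}|X_{\cA_i}},C_{\bX_{\cA_i}|X_{\cA_i}})$, so $\gamma'_{i,\cI}(\delta):=\big(\prod_{j\in\cA_i}|\cX_j|\big)\,\eta_\cI(\varepsilon_\cI(\delta))$ works. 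Finally I would deduce the stated Lemma~\ref{lem:kuser-technical-single-letter} from this idealized version exactly as Lemma~\ref{lem:2user-technical-single-letter} followed from Lemma~\ref{lem:old-2user-technical-single-letter}: given $Q_{\uX_{[k]},\uY,X_{\cA_i}}$ with $Q_{\uX_{[k]},\uY}\in\bigcap_{j\in\cI}\cV_j^\delta$ and a $W_{\bX_{\cA_i}|X_{\cA_i}}$ with $\dtv\big(Q_{\ina{X_{\cA_i},\uX_{\bcA_i}},\uY,\uX_{\cA_i}},P_{X_{[k]}Y}W_{\bX_{\cA_i}|X_{\cA_i}}\big)\le\delta$, set $Q'=P_{X_{[k]}Y}W_{\bX_{\cA_i}|X_{\cA_i}}$; its view marginal is within total-variation distance $\delta$ of $Q_{\uX_{[k]},\uY}$, hence lies in $\bigcap_{j\in\cI}\cV_j^{2\delta}$, and it equals $\ttf_i(W_{\bX_{\cA_i}|X_{\cA_i}})$, so the idealized version applied with $2\delta$ gives $\pr(f\neq g_\cI)\le\gamma'_{i,\cI}(2\delta)$ under $Q'$; one more triangle inequality on total variation yields $\pr(f\neq g_\cI)\le\gamma'_{i,\cI}(2\delta)+\delta=:\gamma_{i,\cI}(\delta)$ under $Q$, and $\gamma_{i,\cI}(\delta)\to0$ as $\delta\to0$.

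The step I expect to be the real obstacle is the bookkeeping around the $\coll$-viability application rather than any single estimate: one must verify that membership of one channel's view distribution in the \emph{common} intersection $\bigcap_{j\in\cI}\cV_j$ is exactly what is needed to instantiate Lemma~\ref{lem:kuser-single-letter} for the whole index set $\cI$, not merely for a single pair, and that the two continuity lemmas must be localized to $\cV=\bigcap_{j\in\cI}\cV_j$ rather than to an individual $\cV_j$ — this is what forces the strengthened form of Lemma~\ref{lem:comp-int} for arbitrary finite intersections of the compact sets $\cV_j$. Once these ingredients are in place, the remainder is a mechanical transcription of the two-user argument.
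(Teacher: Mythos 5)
Your proposal is correct and follows essentially the same route as the paper: the paper likewise generalizes Lemma~\ref{lem:comp-int} to the common intersection $\bigcap_{j\in\cI}\cV_j$ (its Lemma~\ref{lem:kcompact}), reuses Lemma~\ref{lem:2continuous} with $\cX_{\cA_i}$ in place of $\cX_1$ (its Lemma~\ref{lem:kcontinuous}), obtains the idealized bound $\gamma'_{i,\cI}(\delta)=|\cX_{\cA_i}|\,\eta_{i,\bigcap_{j\in\cI}\cV_j}(\varepsilon_\cI(\delta))$ via the zero-distortion baseline from Lemma~\ref{lem:kuser-single-letter} followed by the $[\,\cdot\,]_+$ perturbation estimate (its Lemma~\ref{lem:old-kuser-technical-single-letter}), and then passes from the idealized version to the stated lemma by the same $2\delta$ triangle-inequality argument used for Lemma~\ref{lem:2user-technical-single-letter}. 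Your explicit remark that having the view distribution of $C_{\bX_{\cA_i}|X_{\cA_i}}$ in the full intersection automatically supplies the pairwise-consistent family $\{C_{\bX_{\cA_j}|X_{\cA_j}}\}_{j\in\cI}$ required by Lemma~\ref{lem:kuser-single-letter} is exactly the (terse, implicit) step in the paper's argument.
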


Using this lemma, we prove (further ahead in Appendix~\ref{sec:proofs-of-kuser-blame-hamming-bounds}) the following two lemmas which are the analogs of parts~(i) and~(ii), respectively, of Lemma~\ref{lem:2user-typicality}.
\begin{lemma}\label{lem:kuser-blame-bound}
For $i \in [m]$, if $X^n_{[k]}, Y^n, \bX^n_{\cA_i}$ are jointly distributed as 
\begin{multline*}
    Q_{X^n_{[k]}Y^n\bX^n_{\cA_i}}(x^n_{[k]},y^n,\bx^n_{\cA_i}) \\= \left( \prod_{t \in [n]} P_{X_{[k]}Y}(x_{[k],t},y_t)\right)W_{\cA_i}(\bx^n_{\cA_i}|x^n_{\cA_i})
\end{multline*}
for some (not necessarily memoryless) channel $W_{\cA_i}$,
 \[\pr\big(\type{\langle{\bX^n_{\cA_i}, X^n_{\bcA_i}}\rangle,Y^n}\notin \cV_i^{\delta} \big) \leq 2^{-\Omega(n)},\]
where the factors hidden in $\Omega(n)$ do not depend on $W_{\cA_i}$.
\end{lemma}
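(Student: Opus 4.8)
This is the $k$-user counterpart of part~(i) of Lemma~\ref{lem:2user-typicality}, and the plan is to reproduce that argument with the single manipulated coordinate replaced by the block of coordinates in $\cA_i$ and the honest user's coordinate replaced by the block $\bcA_i$. Write $\cX_{\cA_i}=\prod_{j\in\cA_i}\cX_j$; since permuting coordinates into the order $1,\ldots,k$ does not alter empirical types, it suffices to control the joint type of $(\bX^n_{\cA_i},X^n_{\bcA_i},Y^n)$ and compare it to $\cV_i$ (a set of distributions over $\cX_1\times\cdots\times\cX_k\times\cY$, read in the reordered coordinates). Set
\[
\cQ_n(\delta):=\big\{Q_{X'_{\cA_i}X'_{\bcA_i}Y'\bX'_{\cA_i}}\in\cP_n : D\big(Q_{X'_{\cA_i}X'_{\bcA_i}Y'\bX'_{\cA_i}}\,\big\|\,P_{X_{[k]}Y}\,Q_{\bX'_{\cA_i}|X'_{\cA_i}}\big)\leq 2\delta^2\big\},
\]
where $P_{X_{[k]}Y}$ is read as the law of $(X'_{\cA_i},X'_{\bcA_i},Y')$.

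First I would establish $\pr\!\big(\type{X^n_{\cA_i},X^n_{\bcA_i},Y^n,\bX^n_{\cA_i}}\in\cQ_n(\delta)\big)\geq 1-2^{-\Omega(n)}$ by the same symmetrization trick as in Section~\ref{sec:2user-typicality-proof} (also used in \cite{narayanan2023complete}): average over all permutations $\pi\in\Pi_n$ of the time indices, use that $P^n_{X_{[k]}Y}$ is permutation invariant to pull the average onto $W_{\cA_i}$ only, and replace $W_{\cA_i}$ by the symmetrized channel $W_{\textsf{sym},\cA_i}(\bx^n_{\cA_i}|x^n_{\cA_i}):=|\Pi_n|^{-1}\sum_{\pi}W_{\cA_i}(\pi(\bx^n_{\cA_i})|\pi(x^n_{\cA_i}))$, which is again a channel and which assigns the same probability to every sequence in a conditional type class $\cT^n_{\bX'_{\cA_i}|X'_{\cA_i}}(x^n_{\cA_i})$. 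Since $D(Q_{X'_{\cA_i}X'_{\bcA_i}Y'}\|P_{X_{[k]}Y})+I(X'_{\bcA_i}Y';\bX'_{\cA_i}|X'_{\cA_i})=D(Q_{X'_{\cA_i}X'_{\bcA_i}Y'\bX'_{\cA_i}}\|P_{X_{[k]}Y}Q_{\bX'_{\cA_i}|X'_{\cA_i}})$, the event $\type{\cdots}\notin\cQ_n(\delta)$ forces one of these two summands to exceed $\delta^2$. The contribution of the first alternative is at most a polynomial in $n$ times $\exp(-n\delta^2)$ by \eqref{eq:type1} and \eqref{eq:type4}; for the second, bound $W_{\textsf{sym},\cA_i}(\bx^n_{\cA_i}|x^n_{\cA_i})\leq |\cT^n_{\bX'_{\cA_i}|X'_{\cA_i}}(x^n_{\cA_i})|^{-1}\leq (n+1)^{|\cX_{\cA_i}|^2}\exp(-nH(\bX'_{\cA_i}|X'_{\cA_i}))$ and count conditional type classes via \eqref{eq:type3}, exactly as in the estimate of the term $B$ in Section~\ref{sec:2user-typicality-proof}, again getting a polynomial in $n$ times $\exp(-n\delta^2)$. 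Summing the two contributions gives $2^{-\Omega(n)}$; crucially every factor here is a pure method-of-types bound depending only on $|\cX_1|,\ldots,|\cX_k|,|\cY|$ and $\delta$, so it is uniform in $W_{\cA_i}$.

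Second, on the event $\type{X^n_{\cA_i},X^n_{\bcA_i},Y^n,\bX^n_{\cA_i}}\in\cQ_n(\delta)$ there is a conditional law $Q_{\bX'_{\cA_i}|X'_{\cA_i}}$ with $D(\type{\cdots}\,\|\,P_{X_{[k]}Y}Q_{\bX'_{\cA_i}|X'_{\cA_i}})\leq 2\delta^2$. Restricting to the $(\bX'_{\cA_i},X'_{\bcA_i},Y')$-marginal, the data-processing inequality for relative entropy followed by Pinsker's inequality gives
\[
\dtv\!\Big(\type{\bX^n_{\cA_i},X^n_{\bcA_i},Y^n},\ \textstyle\sum_{x_{\cA_i}}P_{X_{[k]}Y}(\ina{x_{\cA_i},\,\cdot},\cdot)\,Q_{\bX'_{\cA_i}|X'_{\cA_i}}(\cdot|x_{\cA_i})\Big)\leq\delta .
\]
By the definition of $\cV_i$, the distribution on the right lies in $\cV_i$, hence $\type{\ina{\bX^n_{\cA_i},X^n_{\bcA_i}},Y^n}\in\cV_i^\delta$ on this event. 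Combining with the first step yields $\pr\big(\type{\ina{\bX^n_{\cA_i},X^n_{\bcA_i}},Y^n}\notin\cV_i^\delta\big)\leq 2^{-\Omega(n)}$ with constants not depending on $W_{\cA_i}$, as claimed.

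The only genuinely delicate point is the first step, where the permutation-averaging is what lets us handle an arbitrary (non-memoryless) $W_{\cA_i}$; but this is verbatim the argument already carried out for $k=2$, and the replacement of a single symbol by the block $X_{\cA_i}$ is purely notational (the exponent $|\cX_{\cA_i}|^2=\prod_{j\in\cA_i}|\cX_j|^2$ now appears in place of $|\cX_1|^2$). Everything else reduces to the standard type estimates \eqref{eq:type1}--\eqref{eq:type4} and Pinsker's inequality.
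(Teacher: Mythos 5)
Your proposal is correct and follows exactly the route the paper takes: the paper's proof of this lemma is a one-line reference to the $k=2$ argument in Section~\ref{sec:2user-typicality-proof} under the substitution $\cX_{\cA_i}\leftrightarrow\cX_1$, $\cX_{\cA_i^c}\leftrightarrow\cX_2$, which is precisely what you spell out (permutation symmetrization, the chain-rule split into the marginal KL and the conditional mutual information, the type-counting bounds, then data-processing plus Pinsker to pass from $\cQ_n(\delta)$ to $\cV_i^\delta$). You've merely written out the steps the paper leaves implicit, so there is nothing new to compare.
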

\begin{lemma}\label{lem:kuser-hamming-bound}
For $i \in [m]$ and a non-intersecting $\cI$ which contains $i$, there is a function $\gamma_{i,\cI}:\bb{R}^{+} \rightarrow \bb{R}^{+}$, such that $\gamma_{i,\cI}(\delta)\to 0$ as $\delta\to 0$ and, if $X^n_{[k]}, Y^n, \bX^n_{\cA_i}$ are jointly distributed as 
\begin{multline*}
    Q_{X^n_{[k]}Y^n\bX^n_{\cA_i}}(x^n_{[k]},y^n,\bx^n_{\cA_i}) \\= \left( \prod_{t \in [n]} P_{X_{[k]}Y}(x_{[k],t},y_t)\right)W_{\cA_i}(\bx^n_{\cA_i}|x^n_{\cA_i})
\end{multline*}
for some (not necessarily memoryless) channel $W_{\cA_i}$, then
\begin{align*}
\pr\Big(\big(\type{\langle{\bX^n_{\cA_i}, X^n_{\bcA_i}}\rangle,Y^n} \in &\bigcap_{j \in \cI} \cV^{\delta}_j\big) \bigcap\big(\ham\big(g_{\cI}\big(\langle{\bX^n_{\cA_i}, X^n_{\bcA_i}}\rangle, Y^n\big),
    \\&\quad 
    f\big(X^n_{[k]},Y^n\big)\big)>\gamma_{i,\cI}(\delta)\big)\Big) \leq 2^{-\Omega(n)},
\end{align*}
where the factors hidden in $\Omega(n)$ do not depend on $W_{\cA_i}$.

\end{lemma}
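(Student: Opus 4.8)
The plan is to transcribe the proof of part~(ii) of Lemma~\ref{lem:2user-typicality}, with Lemma~\ref{lem:kuser-technical-single-letter} playing the role of Lemma~\ref{lem:2user-technical-single-letter}. Introduce the set of ``good'' joint types
\[
\cQ_n(\delta) := \big\{ Q_{X'_{[k]},Y',\bX'_{\cA_i}} \in \cP_n : D\big(Q_{X'_{[k]},Y',\bX'_{\cA_i}} \,\big\|\, P_{X_{[k]}Y}\, Q_{\bX'_{\cA_i}|X'_{\cA_i}}\big) \leq 2\delta^2 \big\},
\]
where $Q_{\bX'_{\cA_i}|X'_{\cA_i}}$ is the conditional p.m.f.\ induced by $Q_{X'_{[k]},Y',\bX'_{\cA_i}}$. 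First I would record that the same symmetrization-over-permutations argument already used to prove Lemma~\ref{lem:kuser-blame-bound} shows
\[
\pr\big(\type{X^n_{[k]},Y^n,\bX^n_{\cA_i}} \notin \cQ_n(\delta)\big) \leq 2^{-\Omega(n)},
\]
with the hidden constants independent of $W_{\cA_i}$: one splits $D(Q_{X'_{[k]},Y',\bX'_{\cA_i}}\|P_{X_{[k]}Y}Q_{\bX'_{\cA_i}|X'_{\cA_i}}) = D(Q_{X'_{[k]}Y'}\|P_{X_{[k]}Y}) + I(\bX'_{\cA_i};X'_{\bcA_i}Y'\mid X'_{\cA_i})$, replaces the non-memoryless $W_{\cA_i}$ by its permutation-average $W_{\mathsf{sym}}$ (which is still a channel, and constant on each conditional type class), and bounds the two resulting bad-type probabilities via \eqref{eq:type1}--\eqref{eq:type4}.

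Next I would intersect the event in Lemma~\ref{lem:kuser-hamming-bound} with $\{\type{X^n_{[k]},Y^n,\bX^n_{\cA_i}} \in \cQ_n(\delta)\}$, which costs only $2^{-\Omega(n)}$, so that
\[
\pr\big(\cdots\big) \;\leq\; \underbrace{\pr\big(\cdots \cap (\type{X^n_{[k]},Y^n,\bX^n_{\cA_i}} \in \cQ_n(\delta))\big)}_{=:C} \;+\; 2^{-\Omega(n)},
\]
and it remains to show $C = 0$ for $\gamma_{i,\cI}$ the function furnished by Lemma~\ref{lem:kuser-technical-single-letter}. Fix a joint type $Q = Q_{X'_{[k]},Y',\bX'_{\cA_i}} \in \cQ_n(\delta)$ whose decoder-view marginal $Q_{\bX'_{\cA_i},X'_{\bcA_i},Y'}$ lies in $\bigcap_{j\in\cI}\cV^\delta_j$ (only such types contribute to $C$, since for a length-$n$ sequence of type $Q$ this marginal is exactly $\type{\langle\bx^n_{\cA_i},x^n_{\bcA_i}\rangle,y^n}$). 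By Pinsker's inequality, $\dtv(Q, P_{X_{[k]}Y}Q_{\bX'_{\cA_i}|X'_{\cA_i}}) \leq \delta$, so taking $W_{\bX_{\cA_i}|X_{\cA_i}} := Q_{\bX'_{\cA_i}|X'_{\cA_i}}$ verifies the hypotheses of Lemma~\ref{lem:kuser-technical-single-letter} with $Q_{\uX_{[k]},\uY,X_{\cA_i}}$ there instantiated as $Q$ (identifying $\uX_{\cA_i}\leftrightarrow\bX'_{\cA_i}$, $\uX_{\bcA_i}\leftrightarrow X'_{\bcA_i}$, $\uY\leftrightarrow Y'$, and $X_{\cA_i}\leftrightarrow$ the true $X'_{\cA_i}$). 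Hence $\pr\big(f(\langle X'_{\cA_i},\uX'_{\bcA_i}\rangle,\uY') \neq g_\cI(\uX'_{[k]},\uY')\big) \leq \gamma_{i,\cI}(\delta)$ under $Q$. But this disagreement probability equals $\ham\big(f(x^n_{[k]},y^n),\, g_\cI(\langle\bx^n_{\cA_i},x^n_{\bcA_i}\rangle,y^n)\big)$ for every length-$n$ sequence of type $Q$; so on the entire type class of $Q$ the distortion is at most $\gamma_{i,\cI}(\delta)$, and no sequence of such a type contributes to $C$. Therefore $C=0$, which proves the lemma.

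The step requiring the most care is purely bookkeeping: one must match variables so that the empirical-type marginal onto the decoder's view $(\langle\bX_{\cA_i},X_{\bcA_i}\rangle,Y)$ is precisely the object constrained to lie in $\bigcap_{j\in\cI}\cV^\delta_j$, while the conditional $Q_{\bX'_{\cA_i}|X'_{\cA_i}}$ simultaneously supplies the attack channel and the reference p.m.f.\ against which $\cQ_n(\delta)$ is measured. There is no genuinely new difficulty: the only non-trivial content---the compactness/continuity argument converting $\coll$-viability into a uniform single-letter distortion bound---has already been absorbed into Lemma~\ref{lem:kuser-technical-single-letter} (via the $k$-user analogues of Lemmas~\ref{lem:comp-int}, \ref{lem:2continuous}, and \ref{lem:old-2user-technical-single-letter}), so the remainder is a faithful transcription of the $k=2$, $s=1$ argument.
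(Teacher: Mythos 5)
Your proof is correct and follows essentially the same route as the paper: introduce the set $\cQ_n(\delta)$ of joint types within small KL divergence of $P_{X_{[k]}Y}Q_{\bX'_{\cA_i}|X'_{\cA_i}}$, argue via permutation-symmetrization and the method of types (as in the proof of Lemma~\ref{lem:2user-typicality}) that the sample joint type lands in $\cQ_n(\delta)$ with probability $1-2^{-\Omega(n)}$, intersect the bad event with this good event, and then apply Lemma~\ref{lem:kuser-technical-single-letter} type-by-type to force the residual term to zero. The variable bookkeeping—identifying the joint type's conditional with the attack channel, using Pinsker to get the total-variation hypothesis, and matching the type's decoder-view marginal with the constraint in $\bigcap_{j\in\cI}\cV^\delta_j$—is exactly the paper's argument.
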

These two lemmas imply the achievability of Theorem~\ref{thm:kuser-main} as we argue now. Suppose adversary $\cA_i\in\coll$ applies a (not necessarily memoryless) $W_{\cA_i}$ to produce a purported $\bX_{\cA_i}^n$ from $X_{\cA_i}^n$. Then, Lemma~\ref{lem:kuser-blame-bound} asserts that the empirical distribution of the reported observations and side-information will (w.h.p.) lie in $\cV_i^{\delta}$. Under this event the $\cI$ computed by the decoder includes $i$. Hence, it will neither declare ``error'' nor will it name a user a who is not in $\cA_i$ as malicious. Now consider the event where the decoder outputs an estimate of $f(X_{[k]}^n,Y^n)$; note that the $\cJ$ below is the subset of $[m]$ computed by the decoder; since this based on the reports it received and the side-information, it is random. For $\gamma>0$,
\begin{align*}
&\pr \big( (\cJ\neq\varnothing)\cap( \cap_{j \in \cJ} \cA_j = \varnothing ) \cap\\
&\qquad\ham(g_{\cJ}(\langle{\bX^n_{\cA_i}, X^n_{\bcA_i}}\rangle,Y^n), f(X^n_{[k]},Y^n)) > \gamma \big)\\
&\leq \pr \big( (\type{\langle{\bX^n_{\cA_i}, X^n_{\bcA_i}}\rangle,Y^n}\in \cV_i^{\delta})\cap
(\cap_{j \in \cJ} \cA_j = \varnothing ) \cap\\
&\qquad(\ham(g_{\cJ}(\langle{\bX^n_{\cA_i}, X^n_{\bcA_i}}\rangle,Y^n), f(X^n_{[k]},Y^n)) > \gamma) \big) + 2^{-\Omega(n)},
\end{align*}
where the inequality follows from Lemma~\ref{lem:kuser-blame-bound}. To bound the first term, recalling that the set of all non-intersecting (and hence also non-empty; see Definition~\ref{def:non-intersecting}) subsets $\cI$ of [m] is denoted by $\mathbb{I}$, we note that the following two events are the same.
\begin{align*}
 &\left(  \left(\type{\langle{\bX^n_{\cA_i}, X^n_{\bcA_i}}\rangle,Y^n}\in \cV_i^{\delta}\right)\bigcap
\left(\bigcap_{j \in \cJ} \cA_j = \varnothing \right) \right)\\
&\qquad=
\bigcup_{\cI \in\mathbb{I}: i\in\cI} (\cJ=\cI),\\
&\qquad=
\Big(\bigcup_{\cI \in\mathbb{I}: i\in\cI}
 \big(\type{\langle{\bX^n_{\cA_i}, X^n_{\bcA_i}}\rangle,Y^n}\in \bigcap_{j\in\cI}\cV_j^{\delta}\big)\Big)\cap (\cJ=\cI)
\end{align*}
where note that $\cJ$ is the (random) subset of $[m]$ computed by the decoder. Plugging this into the first probability term above, we have
\begin{align*}
&\pr \big( (\type{\langle{\bX^n_{\cA_i}, X^n_{\bcA_i}}\rangle,Y^n}\in \cV_i^{\delta})\cap
(\cap_{j \in \cJ} \cA_j = \varnothing ) \cap\\
&\qquad\qquad(\ham(g_{\cJ}(\langle{\bX^n_{\cA_i}, X^n_{\bcA_i}}\rangle,Y^n), f(X^n_{[k]},Y^n)) > \gamma) \big)\\
&=\pr \Big(\Big(\bigcup_{\cI \in\mathbb{I}: i\in\cI}
 \big(\type{\langle{\bX^n_{\cA_i}, X^n_{\bcA_i}}\rangle,Y^n}\in \bigcap_{j\in\cI}\cV_j^{\delta}\big)\Big)\cap (\cJ=\cI)\cap\\
&\qquad\qquad(\ham(g_{\cJ}(\langle{\bX^n_{\cA_i}, X^n_{\bcA_i}}\rangle,Y^n), f(X^n_{[k]},Y^n)) > \gamma) \Big)\\
&\leq \sum_{\cI \in\mathbb{I}: i\in\cI} \pr\Big( \big(\type{\langle{\bX^n_{\cA_i}, X^n_{\bcA_i}}\rangle,Y^n}\in \bigcap_{j\in\cI}\cV_j^{\delta}\big)\Big)\cap\\
&\qquad\qquad(\ham(g_{\cI}(\langle{\bX^n_{\cA_i}, X^n_{\bcA_i}}\rangle,Y^n), f(X^n_{[k]},Y^n)) > \gamma) \Big)\\
&\leq 2^{-\Omega(n)},
\end{align*}
where the last step follows from Lemma~\ref{lem:kuser-hamming-bound} if we take $\gamma$ as $\max_{\cI\in\mathbb{I}:i\in\cI} \gamma_{i,\cI}(\delta)$. This completes the proof of achievability of Theorem~\ref{thm:kuser-main}. The proofs of supporting lemmas follow.
\subsection{Proofs of Lemmas used in the  proof of achievability of Theorem~\ref{thm:kuser-main}}\label{sec:proofs-of-general-achievability-lemmas}
    \subsubsection{Proof of Lemma~\ref{lem:kuser-single-letter}}\label{sec:proof-of-kuser-single-letter}
The proof proceeeds along the lines of the proof of Lemma~\ref{lem:decoding-function-2user} (for two users). 
For $P_{X_{[k]} Y}$ and $\cI$, define the set
    \begin{align*}
&\cQ_{P_{X_{[k]}Y}, \cI} = \Bigg\{Q_{\uX_{[k]} \uY\inp{\tX^{i}_{\cA_i}}_{i\in \cI}}:\\
            & \qquad Q_{\uX_{[k]} \uY\inp{\tX^{i}_{\cA_i}}_{i\in \cI}} =  Q_{\uX_{[k]}\uY}\prod_{i\in\cI}Q_{\tX^{i}_{\cA_i}|\uX_{[k]} \uY}\\
            &\qquad \text{ satisfying} \quad Q_{\uX_{[k]}  \uY \tX^{i}_{\cA_i}}(\ux_{[k]}, \uy, \tx^{i}_{\cA_i})\\
            &\qquad \qquad  =P_{{{X_{\cA_i} X_{\bcA_i}}}Y}({\tx^{i}_{\cA_i}, \ux_{\bcA_i}}, \uy)Q_{\uX_{\cA_i}|\tX^i_{\cA_i}}(\ux_{\cA_i}|\tx^{i}_{\cA_i}),  \\
            &\qquad \text{ where }  Q_{\uX_{[k]} \uY}(\ux_{[k]}, \uy) = \\
            &\qquad \sum_{\tx^{i}_{\cA_i}}P_{{{X_{\cA_i} X_{\bcA_i}}}Y}({\tx^{i}_{\cA_i}, \ux_{\bcA_i}}, \uy)Q_{\uX_{\cA_i}|\tX^i_{\cA_i}}(\ux_{\cA_i}|\tx^{i}_{\cA_i}).\Bigg\}
            \end{align*}

Notice that for any $\left\{W_{\bX_{\cA_i}|X_{\cA_i}}\right\}_{i \in \cI}$ satisfying the assumption in the lemma, there exists a $Q_{\uX_{[k]} \uY\inp{\tX^{i}_{\cA_i}}_{i\in \cI}}\in \cQ_{P_{X_{[k]}Y}, \cI}$.

For any $(\ux_{[k]}, \uy)$ such that $Q_{\uX_{[k]}\uY}(\ux_{[k]}, \uy)>0$ for some $Q_{\uX_{[k]} \uY\inp{\tX^{i}_{\cA_i}}_{i\in\cI}}\in \cQ_{P_{X_{[k]}Y}, \cI}$, we define $g(\ux_{[k]}, \uy) = f(\langle\tx^{i}_{\cA_i}, \ux_{\bcA_i}\rangle, \uy)$ for some $i\in \cI$ and $\tx^{i}_{\cA_i}$ such that $Q_{\tX^i_{\cA_i}|\uX_{[k]} \uY}(\tx^i_{\cA_i}|\ux_{[k]},\uy)>0$. By definition~\ref{lem:A-viable}, this also implies that $g(\ux_{[k]}, \uy) = f(\langle\tx^{i}_{\cA_i}, \ux_{\bcA_i}\rangle, \uy) = f(\langle\tx^{j}_{\cA_j}, \ux_{\bcA_j}\rangle, \uy)$ for any $j\in \cI$ and $\tx^{j}_{\cA_j}$ such that $Q_{\tX^j_{\cA_j}|\uX_{[k]} \uY}(\tx^j_{\cA_j}|\ux_{[k]},\uy)>0$.

We will argue that the function $g$ as defined above is the same for every $Q_{\uX_{[k]} \uY\inp{\tX^{i}_{\cA_i}}_{i\in \cI}}\in \cQ_{P_{X_{[k]}Y}, \cI}$ (and hence for every  $\left\{W_{\bX_{\cA_i}|X_{\cA_i}}\right\}_{i \in \cI}$ satisfying the assumption in the lemma). Suppose not, then there exists $(\ux_{[k]}, \uy)\in \cX_{[k]}\times\cY$ such that $Q^{(a)}_{\uX_{[k]}\uY}(\ux_{[k]}, \uy)>0$ and $Q^{(b)}_{\uX_{[k]}\uY}(\ux_{[k]}, \uy)>0$ for some $Q^{(a)}_{\uX_{[k]} \uY\inp{\tX^{i}_{\cA_i}}_{i\in \cI}}, Q^{(b)}_{\uX_{[k]} \uY\inp{\tX^{i}_{\cA_i}}_{i\in \cI}}\in \cQ_{P_{X_{[k]}Y}, \cI}$ resulting in distinct functions $g^{(a)}$ and $g^{(b)}$ such that $g^{(a)}(\ux_{[k]}, \uy) \neq g^{(b)}(\ux_{[k]}, \uy)$. This also implies that for some $i\in \cI$, there exist $\tx^{i}_{\cA_i}, \hat{x}^{i}_{\cA_i}$ where $ \tx^{i}_{\cA_i}\neq  \hat{x}^{i}_{\cA_i}$ such that $Q^{(a)}_{\tX^{i}_{\cA_i}|\uX_{[k]}\uY}(\tx^{i}_{\cA_i}|\ux_{[k]}, \uy), Q^{(b)}_{\tX^{i}_{\cA_i}|\uX_{[k]}\uY}(\hat{x}^{i}_{\cA_i}|\ux_{[k]}, \uy)>0$ and $g^{(a)}(\ux_{[k]}, \uy) = f(\langle \tx^{i}_{\cA_i}, \ux_{\bcA_i}\rangle, \uy)\neq f(\langle \hat{x}^{i}_{\cA_i}, \ux_{\bcA_i}\rangle, \uy) = g^{(b)}(\ux_{[k]}, \uy)$.

We define $$\bar{Q}_{\uX_{[k]} \uY\inp{\tX^{i}_{\cA_i}}_{i\in \cI}} =  \bar{Q}_{\uX_{[k]}\uY}\prod_{i\in\cI}\bar{Q}_{\tX^{i}_{\cA_i}|\uX_{[k]} \uY}$$ where  for $i\in \cI$, 
\begin{align*}
   &\bar{Q}_{\uX_{[k]}  \uY \tX^{i}_{\cA_i}}(\ux_{[k]}, \uy, \tx^{i}_{\cA_i})  =\\
   &P_{{{X_{\cA_i} X_{\bcA_i}}}Y}({\tx^{i}_{\cA_i}, \ux_{\bcA_i}}, \uy)\bar{Q}_{\uX_{\cA_i}|\tX^i_{\cA_i}}(\ux_{\cA_i}|\tx^{i}_{\cA_i}) 
\end{align*}
 for $\bar{Q}_{\uX_{\cA_i}|\tX^i_{\cA_i}}$  defined as below.
\begin{align*}
\bar{Q}_{\uX_{\cA_i}|\tX^i_{\cA_i}} = \inp{1-\lambda}{Q}^{(a)}_{\uX_{\cA_i}|\tX^i_{\cA_i}}+\lambda {Q}^{(b)}_{\uX_{\cA_i}|\tX^i_{\cA_i}}.
\end{align*}
From the definitions of $\bar{Q}_{\uX_{[k]} \uY\inp{\tX^{i}_{\cA_i}}_{i\in \cI}}, {Q}^{(a)}_{\uX_{[k]} \uY\inp{\tX^{i}_{\cA_i}}_{i\in \cI}}$ and ${Q}^{(b)}_{\uX_{[k]} \uY\inp{\tX^{i}_{\cA_i}}_{i\in \cI}}$, it follows that  $\bar{Q}_{\uX_{[k]} \uY\inp{\tX^{i}_{\cA_i}}_{i\in \cI}}\in \cQ_{P_{X_{[k]}Y}, \cI}$. As  $Q^{(a)}_{\tX^{i}_{\cA_i}|\uX_{[k]}\uY}(\tx^{i}_{\cA_i}|\ux_{[k]}, \uy)>0$ and $Q^{(b)}_{\tX^{i}_{\cA_i}|\uX_{[k]}\uY}(\hat{x}^{i}_{\cA_i}|\ux_{[k]}, \uy)>0$, we have $\bar{Q}_{\tX^{i}_{\cA_i}|\uX_{[k]}\uY}(\tx^{i}_{\cA_i}|\ux_{[k]}, \uy)>0$ and  $\bar{Q}_{\tX^{i}_{\cA_i}|\uX_{[k]}\uY}(\hat{x}^{i}_{\cA_i}|\ux_{[k]}, \uy)>0$. Consider any $j\neq i, j\in \cI$  and $\tx^{j}_{\cA_j}\in \cX_{\cA_j}$ such that $\bar{Q}_{\tX^{j}_{\cA_j}|\uX_{[k]}\uY}(\tx^{j}_{\cA_j}|\ux_{[k]}, \uy)>0$. Then, from definition of $\bar{Q}_{\uX_{[k]} \uY\inp{\tX^{i}_{\cA_i}}_{i\in \cI}}$ and definition~\ref{lem:A-viable}, we have $f(\langle \tx^{i}_{\cA_i}, \ux_{\bcA_i}\rangle, \uy) = f(\langle \tx^{j}_{\cA_j}, \ux_{\bcA_j}\rangle, \uy) = f(\langle \hat{x}^{i}_{\cA_i}, \ux_{\bcA_i}\rangle, \uy)$, leading to a contradiction. Thus, $g$ is defined uniquely.  


\subsubsection{Proof of Lemma~\ref{lem:kuser-technical-single-letter}} \label{sec:proof-of-kuser-technical-single-letter}
The proof of Lemma~\ref{lem:kuser-technical-single-letter} very closely follows that of Lemma~\ref{lem:2user-technical-single-letter} in Section~\ref{sec:2user-technical-single-letter-proof}. We first introduce some notation analogous to what was used there.
For $\cA \subseteq [k]$, define $\ttf_{\cA}:\cP(\cX_{\cA}|\cX_{\cA}) \rightarrow \cP(\cX_{[k]} \times \cY)$ as $\ttf_{\cA}(W_{\bX_{\cA}|X_{\cA}})=R_{\ina{\bX_{\cA},X_{\bcA}}, Y}$ where 
\begin{multline*}
        R_{\ina{\bX_{\cA}, X_{\bcA}}, Y}(\ina{\bx_{\cA}, x_{\bcA}}, y) \\
        = \sum_{x_{\cA}} P_{X_{[K]}Y}(\ina{x_{\cA}, x_{\bcA}}, y) W_{\bX_{\cA}|X_{\cA}}(\bx_{\cA}|x_{\cA}).
\end{multline*}
For any set $\cR \subseteq \cP(\cX_{[k]} \times \cY)$, define $\ttf_{\cA}^{-1}(\cR) = \bigcup_{R \in \cR} \ttf_{\cA}^{-1}(R)$ where $$\ttf_{\cA}^{-1}(R) = \left\{ W_{\bX_{\cA}|X_{\cA}} \in \cP(\cX_{\cA}|\cX_{\cA}): \ttf_{\cA}(W_{\bX_{\cA}|X_{\cA}}) = R\right\}.$$
        
The following lemma is the analog of Lemma~\ref{lem:comp-int}. 
\begin{lemma}\label{lem:kcompact}
For a non-intersecting $\cI\subseteq[m]$, there is a function $\varepsilon_{\cI}: \bb{R}_{>0} \rightarrow \bb{R}_{\geq 0}$ such that
\begin{enumerate}[label=(\alph*)]
    \item as $\rad \rightarrow 0$, $\varepsilon_{\cI}(\rad) \rightarrow 0$, and
    \item for $\rad>0$, if $R \in \bigcap_{i \in \cI}\cV_{i}^{\rad}$, $\exists$ $S \in \bigcap_{i \in \cI}\cV_i$ such that $\dtv(R,S)\leq \varepsilon_{\cI}(\rad)$.
\end{enumerate}
\end{lemma}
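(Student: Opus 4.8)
The plan is to mimic the proof of Lemma~\ref{lem:comp-int} almost verbatim, replacing the two sets $\cV_1,\cV_2$ by the finite family $\{\cV_i\}_{i\in\cI}$. First I would record the basic topological facts: each $\cV_i$ is the image of the compact set $\cP(\cX_{\cA_i}|\cX_{\cA_i})$ under the continuous (indeed affine) map $W_{\bX_{\cA_i}|X_{\cA_i}}\mapsto\ttf_{\cA_i}(W_{\bX_{\cA_i}|X_{\cA_i}})$, hence each $\cV_i$ is compact and in particular closed; and $\bigcap_{i\in\cI}\cV_i$ is nonempty because $P_{X_{[k]}Y}\in\cV_i$ for every $i$ (take each channel to be the identity, as already noted before the statement of the decoder). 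So $\bigcap_{i\in\cI}\cV_i$ is a nonempty compact set and $\min_{S\in\bigcap_{i\in\cI}\cV_i}\dtv(R,S)$ is well-defined for every $R$.

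Then I would set
\[
\varepsilon_{\cI}(\rad):=\sup_{R\in\bigcap_{i\in\cI}\cV_i^{\rad}}\ \min_{S\in\bigcap_{i\in\cI}\cV_i}\dtv(R,S),
\]
which satisfies (b) by construction and is a non-negative, non-decreasing function of $\rad$. Suppose (a) fails. Then there is $\varepsilon_0>0$ with $\varepsilon_{\cI}(1/n)\ge\varepsilon_0$ for every $n\in\bb{N}$, so for each $n$ we can pick $R_n\in\bigcap_{i\in\cI}\cV_i^{1/n}$ with $\dtv(R_n,\cV_i)\le 1/n$ for every $i\in\cI$ and $\dtv\big(R_n,\bigcap_{i\in\cI}\cV_i\big)>\varepsilon_0/2$. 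Fix any $i_0\in\cI$ and let $A_n\in\cV_{i_0}$ be a closest point of $\cV_{i_0}$ to $R_n$ (it exists since $\cV_{i_0}$ is closed). Since $\cV_{i_0}$ is compact, along a subsequence $A_n\to A^{\ast}\in\cV_{i_0}$; since $\dtv(R_n,A_n)\le 1/n\to 0$, the same subsequence has $R_n\to A^{\ast}$.

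Finally I would extract the contradiction: for every $i\in\cI$ and every $n$ in the subsequence, $\dtv(A^{\ast},\cV_i)\le\dtv(A^{\ast},R_n)+\dtv(R_n,\cV_i)\le\dtv(A^{\ast},R_n)+1/n\to 0$, so $\dtv(A^{\ast},\cV_i)=0$ and hence $A^{\ast}\in\cV_i$ by closedness; therefore $A^{\ast}\in\bigcap_{i\in\cI}\cV_i$. On the other hand, $\dtv(R_n,\bigcap_{i\in\cI}\cV_i)>\varepsilon_0/2$ together with $R_n\to A^{\ast}$ forces $\dtv(A^{\ast},\bigcap_{i\in\cI}\cV_i)\ge\varepsilon_0/2>0$, contradicting $A^{\ast}\in\bigcap_{i\in\cI}\cV_i$. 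I do not expect a genuine obstacle here; the only point worth stating explicitly is that because $\cI\subseteq[m]$ is \emph{finite}, a single limit point $A^{\ast}$ of $(R_n)$ can be forced to lie in all the $\cV_i$ simultaneously, which is precisely why the two-set argument of Lemma~\ref{lem:comp-int} goes through unchanged (an infinite family would require a diagonalization/compactness argument instead).
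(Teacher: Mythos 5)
Your proof is correct and essentially reproduces the paper's argument: you define $\varepsilon_{\cI}$ as the same $\sup$-$\min$, and derive a contradiction from the failure of (a) by extracting a subsequence of witnesses $R_n$ that converges to a point $A^\ast$ which must lie in $\bigcap_{i\in\cI}\cV_i$ and yet be bounded away from it. The only cosmetic difference is that you project onto a single $\cV_{i_0}$ to extract the convergent subsequence and then use the triangle inequality to place $A^\ast$ in every $\cV_i$, whereas the paper introduces nearest-point sequences $A_{i,n}$ for each $i\in\cI$; these are minor rephrasings of the same argument (and, incidentally, neither version actually relies on $\cI$ being finite, since only one Bolzano--Weierstrass extraction is ever performed).
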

The proof is along the same lines as for Lemma~\ref{lem:comp-int}. We include it here for the sake of completeness.
\begin{proof}
    Since $\bigcap_{i \in \cI} \cV_i$ is closed and non-empty, $\min_{S \in \bigcap_{i \in \cI} \cV_i} \dtv(R,S)$ is well-defined for every $R \in \bigcap_{i \in \cI} \cV_{i}^{\rad}$.
    We will show that the function $\varepsilon_{\cI}(\rad) = \sup_{R \in \bigcap_{i \in \cI} \cV^{\rad}_{i}} \min_{S \in \bigcap_{i \in \cI} \cV_i} \dtv(R,S)$ satisfies both the given conditions.
    The fact that $\varepsilon_{\cI}$ satisfies condition (b) is obvious from its definition.
    Now, suppose it doesn't satisfy condition (a), then, taking into account that $\varepsilon_{\cI}(\rad)$ is a non-negative, non-decreasing function of $\rad$, we obtain that 
    \begin{equation} \label{eq:compactcontra}
                \exists\ \varepsilon_0 > 0 \text{ such that } \forall\ \rad > 0, \sup_{R \in \bigcap_{i \in \cI} \cV_{i}^{\rad}} \dtv\left( R, \bigcap_{i \in \cI} \cV_i \right) \geq \varepsilon_0.
    \end{equation}
    For $n \in \bb{N}$, setting $\rad = 1/n$ in \eqref{eq:compactcontra}, we obtain a sequence $R_n$ such that for every $i \in \cI$, $\dtv(R_n, \cV_i) \leq 1/n$ and $\dtv(R_n, \bigcap_{i \in \cI} \cV_i) > \varepsilon_0/2$, $n \in \bb{N}$.
    Appealing to the fact that $\cV_i$ is closed for every $i \in \cI$, we define the sequences $A_{i,n} \in \cV_i, n \in \bb{N}$ as follows:
    \begin{align}
                A_{i,n} = \arg \min_{S \in \cV_i} \dtv(R_n,S),
    \end{align}
    where $A_{i,n}$ is arbitrarily chosen to be one of the minimizers in case more than one exists.
    Now, fix some $i \in \cI$.
    Since $\cV_i$ is compact, the sequence $A_{i,n}$ has a limit point $A^{\ast} \in \cV_i$. 
    Furthermore, since $\dtv(R_n,A_{i,n}) \rightarrow 0$ as $n \rightarrow \infty$, sequence $R_n$ also has $A^{\ast}$ as one of its limit points.
    This, along with the assumption that $\dtv(R_n, \bigcap_{i \in \cI}\cV_i) > \varepsilon_0/2\ \forall\ n \in \bb{N}$ implies that $\dtv(A^{\ast},\bigcap_{i \in \cI}\cV_i) > \varepsilon_0/2$. 
    Now, in order to observe the contradiction, note that since for every $j \in \cI, j \neq i$, $\dtv(R_n,A_{j,n}) \rightarrow 0$ as $n \rightarrow \infty$ as well, all such sequences $A_{j,n}$ also have $A^{\ast}$ as one of their limit points. 
    But since $\cV_j$ is closed for all $j \in \cI$, $A^{\ast}\in \cV_j\ \forall\ j \in \cI$ and therefore $A^{\ast} \in \bigcap_{i \in \cI}V_{i}$, i.e., $\dtv(A^{\ast}, \bigcap_{i \in \cI} \cV_i) = 0 < \varepsilon_0/2$.
\end{proof}
Analogous to Lemma~\ref{lem:2continuous}, we have    
        \begin{lemma} \label{lem:kcontinuous}
            For $i \in [m]$ and non-empty $\cV \subseteq \cV_i$, there exists an $\eta_{i,\cV}:\bb{R}_{>0} \rightarrow \bb{R}_{\geq 0}$ such that
            \begin{enumerate}[label = (\alph*)]
                \item
                as $\varepsilon \rightarrow 0$, $\eta_{i,\cV}(\varepsilon) \rightarrow 0$, and
                \item
                for $\varepsilon >0$, if $Q \in \cP(\cX_{\cA_i}|\cX_{\cA_i})$ and $\dtv(\ttf_{\cA_i}(Q),\cV) \leq \varepsilon$, then, there exists $Q' \in \ttf_{\cA_i}^{-1}(\cV)$, such that $d(Q,Q') \leq \eta_{i,\cV}(\varepsilon)$.
            \end{enumerate}
        \end{lemma}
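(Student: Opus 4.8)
\textbf{Proof sketch for Lemma~\ref{lem:kcontinuous}.}

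The plan is to mimic the proof of Lemma~\ref{lem:2continuous} verbatim, now with $\ttf_1$ replaced by $\ttf_{\cA_i}$ and the domain $\cP(\cX_1|\cX_1)$ replaced by the compact set $\cP(\cX_{\cA_i}|\cX_{\cA_i})$. Concretely, I would define
\[
\eta_{i,\cV}(\varepsilon) = \sup_{\substack{Q \in \cP(\cX_{\cA_i}|\cX_{\cA_i}):\\ \dtv(\ttf_{\cA_i}(Q),\cV) \leq \varepsilon}} \min_{Q' \in \ttf_{\cA_i}^{-1}(\cV)} d(Q,Q')
\]
and argue that this function satisfies both (a) and (b). First I would check the $\min$ is well-defined: $\ttf_{\cA_i}$ is continuous (it is linear in $Q$, being a marginalization against the fixed kernel $P_{X_{[k]}Y}$), $\cV$ is closed, and $\cP(\cX_{\cA_i}|\cX_{\cA_i})$ is compact, so $\ttf_{\cA_i}^{-1}(\cV)$ is closed; it is non-empty because $\cV$ is non-empty and $\cV \subseteq \cV_i = \ttf_{\cA_i}(\cP(\cX_{\cA_i}|\cX_{\cA_i}))$, so every point of $\cV$ has a preimage. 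Condition (b) is then immediate from the definition of $\eta_{i,\cV}$ as a supremum.

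For condition (a), I would argue by contradiction exactly as in Lemma~\ref{lem:2continuous}. Note $\eta_{i,\cV}$ is non-negative and non-decreasing in $\varepsilon$; if it did not tend to $0$, there would be an $\eta_0>0$ with $\sup_{Q: \dtv(\ttf_{\cA_i}(Q),\cV)\leq\varepsilon} d(Q,\ttf_{\cA_i}^{-1}(\cV)) \geq \eta_0$ for all $\varepsilon>0$. Taking $\varepsilon = 1/n$ produces a sequence $Q_n$ with $\dtv(\ttf_{\cA_i}(Q_n),\cV)\leq 1/n$ and $d(Q_n,\ttf_{\cA_i}^{-1}(\cV)) \geq \eta_0/2$. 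By compactness of $\cP(\cX_{\cA_i}|\cX_{\cA_i})$, pass to a limit point $Q^{\ast}$; the second property forces $d(Q^{\ast},\ttf_{\cA_i}^{-1}(\cV)) \geq \eta_0/2$, hence $Q^{\ast}\notin\ttf_{\cA_i}^{-1}(\cV)$, i.e.\ $\ttf_{\cA_i}(Q^{\ast})\notin\cV$. But $\ttf_{\cA_i}$ is continuous, so $\ttf_{\cA_i}(Q^{\ast})$ is a limit point of $\ttf_{\cA_i}(Q_n)$, and since $\dtv(\ttf_{\cA_i}(Q_n),\cV)\to0$ with $\cV$ closed, every limit point of $\ttf_{\cA_i}(Q_n)$ lies in $\cV$; thus $\ttf_{\cA_i}(Q^{\ast})\in\cV$, a contradiction.

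The one place to be slightly careful is the claim $d(Q_n,\ttf_{\cA_i}^{-1}(\cV)) \geq \eta_0/2 \Rightarrow d(Q^{\ast},\ttf_{\cA_i}^{-1}(\cV)) \geq \eta_0/2$ for a limit point: this uses that $Q\mapsto d(Q,\ttf_{\cA_i}^{-1}(\cV))$ is continuous (indeed $1$-Lipschitz) in $Q$ with respect to $d$, which follows from the triangle inequality for $d$ together with closedness of $\ttf_{\cA_i}^{-1}(\cV)$ — and implicitly that $d$ is a genuine metric on $\cP(\cX_{\cA_i}|\cX_{\cA_i})$ inducing the same topology as the ambient Euclidean one, so ``limit point'' is unambiguous. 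None of this is hard; there is no real obstacle here, as the argument is a routine transcription of the two-user case, and the only genuinely new ingredient — compactness of the larger conditional-distribution simplex $\cP(\cX_{\cA_i}|\cX_{\cA_i})$ — is automatic since $\cX_{\cA_i}=\prod_{j\in\cA_i}\cX_j$ is finite.
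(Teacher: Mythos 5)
Your proof is correct and matches the paper's approach: the paper simply observes that Lemma~\ref{lem:kcontinuous} is a corollary of Lemma~\ref{lem:2continuous} under the identification $\cX_{\cA_i}\leftrightarrow\cX_1$, $\cX_{\cA_i^c}\leftrightarrow\cX_2$ (which makes $\ttf_{\cA_i}$ literally equal to $\ttf_1$) and omits the details, whereas you re-transcribe the Lemma~\ref{lem:2continuous} argument in the new notation — the same reduction, just spelled out. Your extra remarks (non-emptiness of $\ttf_{\cA_i}^{-1}(\cV)$ via $\cV\subseteq\cV_i=\im{\ttf_{\cA_i}}$, $1$-Lipschitz continuity of $Q\mapsto d(Q,\ttf_{\cA_i}^{-1}(\cV))$, compactness of the finite-alphabet conditional simplex) are all accurate and fill small gaps the paper leaves implicit.
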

This is in fact a corollary of Lemma~\ref{lem:2continuous} if we treat $\cX_{A_i}$ and $\cX_{A_i^c}$ as $\cX_1$ and $\cX_2$, respectively, in Lemma~\ref{lem:2continuous}. Notice that $\Phi_{\cA_i}$ and $\Phi^{-1}_{\cA_i}$ here are then identical to $\Phi_1$ and $\Phi^{-1}_{1}$ in Lemma~\ref{lem:2continuous}. Therefore, we omit the proof.

        We now state the lemma analogous to Lemma~\ref{lem:old-2user-technical-single-letter}.
\begin{lemma}\label{lem:old-kuser-technical-single-letter}
For an $\coll$-viable $f$ and a non-intersecting $\cI\subseteq[m]$ along with a corresponding $g_{\cI}$ as in Lemma~\ref{lem:kuser-single-letter}, for every $i\in\cI$, there is a $\gamma'_{i,\cI}:\bb{R}^{+} \rightarrow \bb{R}^{+}$, such that  $\gamma'_{i,\cI}(\delta)\to 0$ as $\delta\to 0$, satisfying the following: for $\delta>0$, if 
$W_{\bX_{\cA_i}|X_{\cA_i}}$ is such that the p.m.f
\begin{align*}
&Q_{\uX_{[k]},\uY}(\ux_{[k]},\uy)\\
&\qquad:=
\sum_{x_{\cA_i}}
P_{X_{[k]},Y}(\langle{x_{\cA_i},\ux_{\cA_i^c}}\rangle,\uy)W_{\bX_{\cA_i}|X_{\cA_i}}(\ux_{\cA_i}|x_{\cA_i})
\end{align*}
belongs to $\bigcap_{j \in \cI} \cV^{\delta}_j$, then
\[\pr(f(X_{[k]},Y)\neq g_{i,\cI}(\langle{\bX_{\cA_i}, X_{\cA_i^c}}\rangle,Y))\leq \gamma'_{i,\cI}(\delta)\] under the p.m.f. 
$P_{X_{[k]},Y}W_{\bX_{\cA_i}|X_{\cA_i}}$.
\end{lemma}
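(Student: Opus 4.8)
The plan is to transcribe, verbatim up to notation, the proof of Lemma~\ref{lem:old-2user-technical-single-letter}, now with $\cX_{\cA_i}$ playing the role of $\cX_1$, the pair $(\cX_{\cA_i^c},\cY)$ playing the role of $(\cX_2,\cY)$, the map $\ttf_{\cA_i}$ playing the role of $\ttf_1$, and $\bigcap_{j\in\cI}\cV_j$ playing the role of $\cV_1\cap\cV_2$. Since $i\in\cI$, the set $\bigcap_{j\in\cI}\cV_j$ is closed and non-empty (it contains $P_{X_{[k]}Y}$, obtained by taking every channel to be the identity) and is contained in $\cV_i$; hence Lemma~\ref{lem:kcompact} supplies a function $\varepsilon_{\cI}$, and Lemma~\ref{lem:kcontinuous} applied with $\cV=\bigcap_{j\in\cI}\cV_j\subseteq\cV_i$ supplies a function $\eta_{i,\bigcap_{j\in\cI}\cV_j}$, both vanishing at $0$. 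I would then take $g_{i,\cI}:=g_{\cI}$, the function from Lemma~\ref{lem:kuser-single-letter}, and set, mirroring the two-user definition,
\[
\gamma'_{i,\cI}(\delta):=|\cX_{\cA_i}|\,\eta_{i,\bigcap_{j\in\cI}\cV_j}\!\bigl(\varepsilon_{\cI}(\delta)\bigr),
\]
which tends to $0$ as $\delta\to0$.

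Next, fix a $W_{\bX_{\cA_i}|X_{\cA_i}}$ whose induced view $\ttf_{\cA_i}(W_{\bX_{\cA_i}|X_{\cA_i}})=Q_{\uX_{[k]},\uY}$ lies in $\bigcap_{j\in\cI}\cV_j^{\delta}$. By Lemma~\ref{lem:kcompact}, $\dtv\bigl(\ttf_{\cA_i}(W_{\bX_{\cA_i}|X_{\cA_i}}),\bigcap_{j\in\cI}\cV_j\bigr)\le\varepsilon_{\cI}(\delta)$, and then Lemma~\ref{lem:kcontinuous} produces a channel $C_{\bX_{\cA_i}|X_{\cA_i}}\in\ttf_{\cA_i}^{-1}\bigl(\bigcap_{j\in\cI}\cV_j\bigr)$ with $d\bigl(W_{\bX_{\cA_i}|X_{\cA_i}},C_{\bX_{\cA_i}|X_{\cA_i}}\bigr)\le\eta_{i,\bigcap_{j\in\cI}\cV_j}\!\bigl(\varepsilon_{\cI}(\delta)\bigr)$. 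The one step carrying real content is the following: because $\ttf_{\cA_i}(C_{\bX_{\cA_i}|X_{\cA_i}})\in\bigcap_{j\in\cI}\cV_j$, the defining property of each $\cV_j$ yields, for every $j\in\cI$, a channel $C_{\bX_{\cA_j}|X_{\cA_j}}$ with $\ttf_{\cA_j}(C_{\bX_{\cA_j}|X_{\cA_j}})=\ttf_{\cA_i}(C_{\bX_{\cA_i}|X_{\cA_i}})$; in particular every member of the family $\{C_{\bX_{\cA_j}|X_{\cA_j}}\}_{j\in\cI}$ induces one and the same view distribution, so the pairwise-consistency hypothesis of Lemma~\ref{lem:kuser-single-letter} is met, and that lemma (specialized to the index $i$) gives $f(X_{[k]},Y)=g_{\cI}(\ina{\bX_{\cA_i},X_{\cA_i^c}},Y)$ almost surely under $P_{X_{[k]}Y}C_{\bX_{\cA_i}|X_{\cA_i}}$, i.e.
\[
\sum_{x_{[k]},y,\bx_{\cA_i}}P_{X_{[k]}Y}(x_{[k]},y)\,C_{\bX_{\cA_i}|X_{\cA_i}}(\bx_{\cA_i}|x_{\cA_i})\,\bb{1}\{f(x_{[k]},y)\neq g_{\cI}(\ina{\bx_{\cA_i},x_{\cA_i^c}},y)\}=0.
\]

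To finish, I would run the add-and-subtract estimate of the two-user proof verbatim: writing $W_{\bX_{\cA_i}|X_{\cA_i}}=C_{\bX_{\cA_i}|X_{\cA_i}}+(W_{\bX_{\cA_i}|X_{\cA_i}}-C_{\bX_{\cA_i}|X_{\cA_i}})$ inside $\pr(f(X_{[k]},Y)\neq g_{i,\cI}(\ina{\bX_{\cA_i},X_{\cA_i^c}},Y))$ evaluated under $P_{X_{[k]}Y}W_{\bX_{\cA_i}|X_{\cA_i}}$, the $C$-term vanishes by the displayed identity, and the remaining term is dominated---after marginalizing $P_{X_{[k]}Y}$ down to $P_{X_{\cA_i}}\le1$ and dropping it---by $\sum_{x_{\cA_i}}\dtv\bigl(W_{\bX_{\cA_i}|X_{\cA_i}}(\cdot|x_{\cA_i}),C_{\bX_{\cA_i}|X_{\cA_i}}(\cdot|x_{\cA_i})\bigr)=d\bigl(W_{\bX_{\cA_i}|X_{\cA_i}},C_{\bX_{\cA_i}|X_{\cA_i}}\bigr)\le\gamma'_{i,\cI}(\delta)$. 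I do not expect a genuine obstacle: every inequality is the exact analog of one already carried out for $k=2$. The only points that need care are the multi-index bookkeeping around the interleaving notation $\ina{\cdot,\cdot}$; the (immediate) verification above that the constructed family $\{C_{\bX_{\cA_j}|X_{\cA_j}}\}_{j\in\cI}$ satisfies the hypothesis of Lemma~\ref{lem:kuser-single-letter}, which rests only on the fact that all these channels induce the same view; and the minor point that Lemma~\ref{lem:kcontinuous} is invoked here with the closed non-empty set $\cV=\bigcap_{j\in\cI}\cV_j\subseteq\cV_i$ rather than with $\cV_i$ itself.
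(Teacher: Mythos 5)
Your proposal is correct and is exactly the argument the paper intends: the paper itself only says ``the proof is along the lines of Lemma~\ref{lem:old-2user-technical-single-letter}'' and supplies the same $\gamma'_{i,\cI}(\delta):=|\cX_{\cA_i}|\,\eta_{i,\bigcap_{j\in\cI}\cV_j}(\varepsilon_{\cI}(\delta))$, then omits the details. You have filled in exactly those details, substituting $\cX_{\cA_i}$ for $\cX_1$, $\ttf_{\cA_i}$ for $\ttf_1$, $\bigcap_{j\in\cI}\cV_j$ for $\cV_1\cap\cV_2$, invoking Lemma~\ref{lem:kcompact} and Lemma~\ref{lem:kcontinuous} in place of their two-user counterparts, and correctly verifying (via the observation that all $C_{\bX_{\cA_j}|X_{\cA_j}}$, $j\in\cI$, induce the same view) the hypothesis of Lemma~\ref{lem:kuser-single-letter} before running the add-and-subtract estimate; this matches the paper's approach in every respect.
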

\begin{proof}
The proof is along the lines of Lemma~\ref{lem:old-2user-technical-single-letter} in Section~\ref{sec:2user-technical-single-letter-proof}. Following that, we may argue that $\gamma'_{i,\cI}(\delta):=|\cX_{\cA_i}|\eta_{i,\bigcap_{j \in \cI} \cV_j}(\varepsilon_{\cI}(\delta))$
satisfies the required properties, where $\varepsilon_{\cI}$ and $\eta_{i,\cV}$  functions are as in Lemmas~\ref{lem:kcompact} and~\ref{lem:kcontinuous}, respectively. We omit the details.
\end{proof}

Finally, Lemma~\ref{lem:kuser-technical-single-letter} follows from Lemma~\ref{lem:old-kuser-technical-single-letter} along the same lines as the proof (in Section~\ref{sec:2user-technical-single-letter-proof}) of Lemma~\ref{lem:2user-technical-single-letter} from Lemma~\ref{lem:old-2user-technical-single-letter}.

\subsubsection{Proofs of Lemmas~\ref{lem:kuser-blame-bound} and~\ref{lem:kuser-hamming-bound}} \label{sec:proofs-of-kuser-blame-hamming-bounds}

For $\delta>0$, define
\begin{align*}
\cQ_n(\delta):=\{&Q_{X'_{[k]} Y' \bX'_{\cA_i}}\in\cP_n(\cX_{[k]} \times \cY \times \cX_{\cA_i}) : \\
 &\qquad D(Q_{X'_{[k]} Y' \bX'_{\cA_i}}||P_{X_{[k]}Y}Q_{\bX'_{\cA_i}|X'_{\cA_i}})\leq 2{\delta}^2\}.
\end{align*}
Proceeding along the lines of the proof of Lemma~\ref{lem:2user-typicality} (treating $\cX_{A_i}$ and $\cX_{A_i^c}$ as $\cX_1$ and $\cX_2$, respectively, in that proof), analogous to~\eqref{eq:lemma3parta}, we obtain
\begin{align}
\pr\inp{\type{X_{[k]}^n,Y^n,\bX_{\cA_i}^n}\in \cQ_n(\delta)}\geq 1-2^{-\Omega(n)}.\label{eq:kuser-typicality-proof1}
\end{align}
This implies Lemma~\ref{lem:kuser-blame-bound} (just as~\eqref{eq:lemma3parta} implied part~(i) of Lemma~\ref{lem:2user-typicality}).
To show Lemma~\ref{lem:kuser-hamming-bound},
\begin{align*}
&\pr\Big(\big(\type{\langle{\bX^n_{\cA_i}, X^n_{\bcA_i}}\rangle,Y^n} \in \bigcap_{j \in \cI} \cV^{\delta}_j\big) \bigcap\\
&\qquad\big(\ham\big(g_{\cI}\big(\langle{\bX^n_{\cA_i}, X^n_{\bcA_i}}\rangle, Y^n\big),
    f\big(X^n_{[k]},Y^n\big)\big)>\gamma_{i,\cI}(\delta)\big)\Big)\\
&\stackrel{\text{(a)}}{=}\pr\Big(\big(\type{\langle{\bX^n_{\cA_i}, X^n_{\bcA_i}}\rangle,Y^n} \in \bigcap_{j \in \cI} \cV^{\delta}_j\big) \bigcap
\big(\type{X_{[k]}^n,Y^n,\bX_{\cA_i}^n}\in \cQ_n(\delta)\big)
\\
&\qquad\big(\ham\big(g_{\cI}\big(\langle{\bX^n_{\cA_i}, X^n_{\bcA_i}}\rangle, Y^n\big),
    f\big(X^n_{[k]},Y^n\big)\big)>\gamma_{i,\cI}(\delta)\big)\Big)\\
&\quad + 2^{-\Omega(n)}\\
&\stackrel{\text{(b)}}{=}: C + 2^{-\Omega(n)},
\end{align*}
where (a) follows from \eqref{eq:kuser-typicality-proof1} and in (b) we define the first term as $C$. Along the same lines as we argued that a similar $C$ term is 0 in the proof of Lemma~\ref{lem:2user-typicality}, we can show that, by Lemma~\ref{lem:kuser-technical-single-letter}, the term $C$ defined above is 0.

\section{Proof of Remark~\ref{rmk:decidability}}\label{app:LP-remark}
In this section, we will show that the viability condition in Definition~\ref{def:s-viable} can be efficiently checked using a linear program.
The s-viability states that
$f\in\cF_s(P_{X_{[k]}Y})$ if for any collection $\cA_1,\cA_2,\ldots,\cA_m$ of distinct subsets of $[k]$ such that each $|\cA_i|\leq s$ and $\bigcap_{i=1}^m \cA_i=\varnothing$, under every joint p.m.f. $Q_{\uX_{[k]},\uY,\left(\tX^{i}_{\cA_i}\right)_{i\in[m]}}$ over 
\[ \mathcal{S}=\cX_1\times\ldots\cX_k\times\cY\times\prod_{i=1}^m\prod_{j\in \cA_i}\cX_j \]
satisfying, for each $i\in[m]$,
\begin{itemize}
\item[(a)] 
$Q_{\tX^i_{\cA_i},\,\uX_{\bar{\cA}_i},\uY}=P_{X_{[k]}Y}$ and
\item[(b)] 
$\uX_{\cA_i} \mc \tX^i_{\cA_i} \mc (\uX_{\bar{\cA}_i},\uY)$,
\end{itemize}
we have, for all $i,i'\in[m]$ (with probability 1)
\begin{align}\label{eq:alt-s-viability}
f\inp{\ina{\tX^{i}_{\cA_i},\,\uX_{\bar{\cA}_i}},\uY} = f\inp{\ina{\tX^{i'}_{\cA_{i'}},\,\uX_{\bar{\cA}_{i'}}},\uY}.
\end{align}

Fix any $\cA_1,\cA_2,\ldots,\cA_m$.
We first establish that the set of all joint p.m.f. $Q_{\uX_{[k]},\uY,\left(\tX^{i}_{\cA_i}\right)_{i\in[m]}}$ satisfying conditions (a) and (b) for all $i\in[m]$ can be described as the feasibility region of a linear program.
Any p.m.f. $Q_{\uX_{[k]},\uY,\left(\tX^{i}_{\cA_i}\right)_{i\in[m]}}$ can described by a $|\mathcal{S}|$ dimensional vector in which each element is addressed by an element of $\mathcal{S}$ such that all elements are non-negative and add up to $1$.
Next, focus on condition (a) for any $i\in[m]$.
For every $\left(\tx^i_{\cA_i},\,\ux_{\bar{\cA}_i},\uy\right)$,  
\[ Q_{\tX^i_{\cA_i},\,\uX_{\bar{\cA}_i},\uY}\left(\tx^i_{\cA_i},\,\ux_{\bar{\cA}_i},\uy\right) = P_{X_{[k]}Y}\left(\tx^i_{\cA_i},\,\ux_{\bar{\cA}_i},\uy\right).\]
Since the LHS can be computed by adding up the appropriate coordinates of the $|\mathcal{S}|$-dimensional vector and RHS is fixed,
we conclude that condition (a) for each $i$ amounts to a set of linear constraints.
Next, for each $i\in[m]$, condition (b) can be written as
\begin{align*}
 &\frac{Q_{\uX_{\cA_i},\tX^i_{\cA_i},\uX_{\bar{\cA}_i},\uY}\left(\ux_{\cA_i},\tx^i_{\cA_i},\ux_{\bar{\cA}_i},\uy\right)}{P_{X_{[k]}Y}\left(\tx^i_{\cA_i},\ux_{\bar{\cA}_i},\uy\right)} \\
 &=Q_{\uX_{\cA_i} | \tX^i_{\cA_i},\uX_{\bar{\cA}_i},\uY}\left(\ux_{\cA_i} | \tx^i_{\cA_i},\ux_{\bar{\cA}_i},\uy\right)\\
 &=Q_{\uX_{\cA_i} | \tX^i_{\cA_i}}\left(\ux_{\cA_i} | \tx^i_{\cA_i}\right)\\
 &=\frac{Q_{\uX_{\cA_i} , \tX^i_{\cA_i}}\left(\ux_{\cA_i} , \tx^i_{\cA_i}\right)}{P_{X_{\cA_i}}\left(\tx^i_{\cA_i}\right)}
\end{align*}
for all $\left(\tx^i_{\cA_i},\,\ux_{\bar{\cA}_i},\uy\right)$.
In the first and last equalities, we used the fact that, $Q_{\tX^i_{\cA_i},\,\uX_{\bar{\cA}_i},\uY}=P_{X_{[k]}Y}$ whenever condition (a) is satisfied.
Since computing marginals is a linear operation and the p.m.f. of $P_{X_{[k]}Y}$ is fixed, condition (b) for each $i\in[m]$ can be written as a set of linear constraints.
Let $R$ be the feasibility region of the set of linear constraints defined by conditions (a) and (b) for all $i\in[m]$.
Finally, the equality in \eqref{eq:alt-s-viability}, is equivalent to requiring that, for any $\left(\ux_{[k]},\uy,\left(\tx^{i}_{\cA_i}\right)_{i\in[m]}\right)$ such that for some $i,i'\in[m]$,
$
f\inp{\tx^{i}_{\cA_i},\,\ux_{\bar{\cA}_i},\uy} \neq f\inp{\tx^{i'}_{\cA_{i'}},\,\ux_{\bar{\cA}_{i'}},\uy}
$,
\[Q_{\uX_{[k]},\uY,\left(\tX^{i}_{\cA_i}\right)_{i\in[m]}}\left(\ux_{[k]},\uy,\left(\tx^{i}_{\cA_i}\right)_{i\in[m]}\right)=0.\]
Let $R'$ be the feasibility region of the set of linear constraints defined by conditions (a) and (b) for all $i\in[m]$ and the requirement that \eqref{eq:alt-s-viability} holds for all $i,i'\in[m]$.
$f$ is $s$-viable if and only if $R=R'$.
Thus, $s$-viability of $f$ is decided by the linear program that checks if $R=R'$.
Feasibility of a linear program can be checked in time polynomial in the number of variables and constraints when each coefficient in the LP can be represented by a constant number of bits.
Let $\ell=\max(|\cY|,\max_i(|\cX_i|)$.
The number of variables is upper bounded by $\ell^{s{k\choose s}}$, and number of constraints is upper bounded by $O(m^2\cdot \ell^{k+s}+m\cdot\ell^k)$ where $m$ is upper bounded by ${k\choose s}$.
Finally, the number of distinct $\cA_1,\ldots,\cA_m$ satisfying $\cap_{i=1}^m\cA_i=\emptyset$ is upper bounded by $2^{{k\choose s}}$.
Thus, the total computation time is upper bounded by $2^{O(s{k \choose s})}$ for any fixed alphabet sizes $\cX_1,\ldots,\cX_n,\cY$.

\section{The $s=1$ case for any $k$}\label{app:k=1}
\subsubsection{Preliminaries}
In this section, we will set up some preliminaries towards building a specialized protocol for recovering a function. We use the term `sender' and `user' interchangeably since the problem setup only requires the users to send reports to the decoder. It is instructive to refer to \cite{fitzi2004pseudo} and \cite{narayanan2023complete} for a detailed introduction and proofs of the observations given below.  
\begin{definition}\cite{narayanan2023complete}
    For random variables $U \in \cU$ and $V \in \cV$ jointly distributed according to $P_{UV}$, consider the partition of $U$ on alphabet $\cU$ based on the following equivalence relation:
    $$u \sim u' \iff P_{V|U}(v|u) = P_{V|U}(v|u')\ \forall v \in \cV.$$
    Define $\psi_{U \mss V}$ to be the function which maps the elements in $\cU$ to their part in the above partition, and we define $U \mss V = \psi_{U \mss V}(U)$. 
\end{definition}
If a sender observes $U$ and a receiver observes $V$, a malicious sender can send $u'$ if $U = u$ and both $u$ and $u'$ are in the same equivalence class, i.e., $P_{V|U=u} = P_{V|U=u'}$ and the decoder cannot detect this change since the Markov chain $V \mc (U \mss V) \mc U$ holds.
This can be extended for a sequence of i.i.d. variables where the sender observes $U^n$ and the decoder observes $V^n$. The sender can send $\bu^n$ where for every $t \in [n]$, $\bu_t$ and $u_t$ are in the same equivalence class without triggering a detection from the decoder. 
But, conversely, if the sender sends a $\bu^n$ such that $\bu_t$ and $u_t$ are not in the same equivalence class for a lot of values of $t \in [n]$, then the sequence received by the decoder will not be jointly typical with  of the decoder with high probability.
If the decoder is allowed to detect if $U^n$ was manipulated, a typicality test would trigger a detection with high probability. 
Intuitively, $U \mss V$ is the function of $U$ which can be verified by the decoder having side-information $V$. 
The function which can be robustly recovered by the decoder is therefore $(\psi_{U \mss V}(U_1), \dots, \psi_{U \mss V}(U_n))$.
We now present the following lemma which helps us extend this notion to more than one senders. 
\begin{lemma}\cite{fitzi2004pseudo}\label{lem:mss-single}
    \begin{enumerate}
        \item[(i)]     $U \mc (U \mss V) \mc V$
        \item[(ii)]   If $V_1$ is a function of $V_2$, then $U \mss V_1$ is a function of $U \mss V_2$.
    \end{enumerate}
\end{lemma}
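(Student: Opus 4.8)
The plan is to read both parts straight off the definition of the equivalence relation $u\sim u'\iff P_{V|U=u}\equiv P_{V|U=u'}$ that defines $\psi_{U\mss V}$. Throughout I would restrict attention to symbols $u\in\cU$ with $P_U(u)>0$, since symbols of zero probability can be dropped from (or assigned arbitrarily to) the partition without affecting either statement.

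For part~(i), the first step is to observe that $P_{V|U=u}$ depends on $u$ only through the class $s:=\psi_{U\mss V}(u)$. Expanding
\[
P_{V|\,U\mss V=s}(v)=\sum_{u':\,\psi_{U\mss V}(u')=s}P_{U|\,U\mss V=s}(u')\,P_{V|U=u'}(v)
\]
and using that every $u'$ appearing in the sum satisfies $P_{V|U=u'}(\cdot)=P_{V|U=u}(\cdot)$ by definition of $\sim$, the right-hand side collapses to $P_{V|U=u}(v)\sum_{u'}P_{U|\,U\mss V=s}(u')=P_{V|U=u}(v)$. Hence $P_{V|\,U=u,\,U\mss V=s}=P_{V|\,U\mss V=s}$ for every $u$ with $\psi_{U\mss V}(u)=s$, and since $U\mss V$ is a deterministic function of $U$, this is exactly the Markov chain $U\mc(U\mss V)\mc V$.

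For part~(ii), write $V_1=h(V_2)$, so that $P_{V_1|U=u}(v_1)=\sum_{v_2:\,h(v_2)=v_1}P_{V_2|U=u}(v_2)$; I would then show that the equivalence relation defining $\psi_{U\mss V_2}$ refines the one defining $\psi_{U\mss V_1}$. Indeed, if $P_{V_2|U=u}\equiv P_{V_2|U=u'}$ then for every $v_1$,
\[
P_{V_1|U=u}(v_1)=\sum_{v_2:\,h(v_2)=v_1}P_{V_2|U=u}(v_2)=\sum_{v_2:\,h(v_2)=v_1}P_{V_2|U=u'}(v_2)=P_{V_1|U=u'}(v_1),
\]
so $P_{V_1|U=u}\equiv P_{V_1|U=u'}$. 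Thus $\psi_{U\mss V_2}(u)=\psi_{U\mss V_2}(u')$ forces $\psi_{U\mss V_1}(u)=\psi_{U\mss V_1}(u')$, which means there is a well-defined map $\rho$ from $\psi_{U\mss V_2}$-classes to $\psi_{U\mss V_1}$-classes with $\psi_{U\mss V_1}=\rho\circ\psi_{U\mss V_2}$; evaluating at $U$ shows $U\mss V_1$ is a function of $U\mss V_2$. Neither step is a genuine obstacle — the only points requiring care are the push-forward identity for $V_1=h(V_2)$ (valid since $V_1=h(V_2)$ holds surely given $V_2$) and the harmless treatment of zero-probability symbols — so the main decision is presentational: I would prove~(i) first, since it and the object $U\mss V$ are used repeatedly in the subsequent protocol.
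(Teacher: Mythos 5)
Your proof is correct. The paper itself does not reprove this lemma (it is cited from \cite{fitzi2004pseudo} without an argument), and your direct derivation from the definition of the equivalence relation defining $\psi_{U\mss V}$ is the standard one: part~(i) by collapsing the mixture $P_{V\,|\,U\mss V=s}=\sum_{u'} P_{U\,|\,U\mss V=s}(u')\,P_{V|U=u'}$ using that all summands coincide on a class, and part~(ii) by noting that the push-forward identity $P_{V_1|U=u}(v_1)=\sum_{v_2:h(v_2)=v_1}P_{V_2|U=u}(v_2)$ makes the $V_2$-equivalence a refinement of the $V_1$-equivalence, which is exactly the statement that $U\mss V_1$ factors through $U\mss V_2$.
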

 An extension of the above formulation is presented in \cite{narayanan2023complete} for 2 users when at most one of them can be corrupt as in a $(2,1)$-byzantine distributed source coding problem. Let the two users observe the sequences $U^n$ and $V^n$ and the decoder observes a sequence $W^n$; $(U_t,V_t,W_t)_{t \in [n]}$ being sampled i.i.d. from a distribution $P_{UVW}$. From the single-user scheme, it is clear that the decoder can faithfully recover $(U \mss W)$ and $(V \mss W)$ or detect the malicious party. Then, it can use this information to further recover $(W^{(1)}, U \mss W^{(1)}, V \mss W^{(1)})$ where $W^{(1)} = (W, (U \mss W), (V \mss W))$ using the single user scheme. This process of `upgrading' the side-information runs for a finite number of steps, as explained below, and the function (of $U,V,W$) which the decoder can learn after running these upgrading steps will be a robustly recoverable function. 
\begin{definition}[Upgraded variable~\cite{narayanan2023complete}]
    For a triple of random variables $(U,V,W)$ with joint distribution $P_{UVW}$, we define 
    \begin{align*}
        W^{(0)} &= W\\
        W^{(1)} &= \inp{W^{(0)}, \inp{U \mss W^{(0)}}, \inp{V \mss W^{(0)}}}\\
        W^{(2)} &= \inp{W^{(1)}, \inp{U \mss W^{(1)}}, \inp{V \mss W^{(1)}}}\\
        &\vdots\\
        W^{(i+1)} &= \inp{W^{(i)}, \inp{U \mss W^{(i)}}, \inp{V \mss W^{(i)}}}\\
        &\vdots
    \end{align*}
\end{definition}

Function $\psi_{U \mss W^{(i)}}$ induces a partitioning on $\cU$ in the following manner: the partition $\cU = \cU_1 \sqcup \cU_2 \sqcup \dots \cU_{\ell}$ satisfies that for any part of this partition, every element in the part maps to the same element in the co-domain and moreover, no two elements from different parts map to the same element in the co-domain. Since $W^{(i)}$ is a function of $W^{(i-1)}$, therefore, $\psi_{U \mss W^{(i-1)}}$ is a function of $U \mss W^{(i)}$. This means that the elements of the partition induced by $\psi_{U \mss W^{(i)}}$ are subsets of (or equal to) elements of partition induced due to $\psi_{U \mss W^{(i-1)}}$, which in turn means that the number of parts in the partition induced due to $\psi_{U \mss W^{(i)}}$ is at least as many as those in the partition induced due to $\psi_{U \mss W^{(i-1)}}$. Noting that this number cannot exceed $|\cU|$ and a similar set of partitioning exist for $\cV$, we observe that the process of upgrading must saturate after at most $|\cU||\cV|$ steps.

\begin{definition}\cite{narayanan2023complete}\label{def:msstest}
    The functions corresponding to the final partitions are therefore $\psi_{U \mss W^{|\cU||\cV|}}$ and $\psi_{V \mss W^{|\cU||\cV|}}$. We define the maximum upgraded variable $$W^{\ast} = (W, U \mss W^{|\cU||\cV|}, V \mss W^{|\cU||\cV|}).$$
\end{definition}
Let $\mu_{W^{\ast}}(u,v,w) := (u, \psi_{U \mss W^{|\cU||\cV|}}(v), \psi_{V \mss W^{|\cU||\cV|}(w)})$. Hence $\mu_{W^{\ast}}(U,V,W) = W^{\ast}$.

\begin{lemma}\label{lem:mss-properties}\cite{narayanan2023complete}
    For random variables $(U,V,W)$, the following holds:
    \begin{align*}
        &U \mc (U \mss W^{|\cU||\cV|}) \mc (V \mss W^{|\cU||\cV|},W)\\
        &V \mc (V \mss W^{|\cU||\cV|}) \mc (U \mss W^{|\cU||\cV|},W)
    \end{align*}
\end{lemma}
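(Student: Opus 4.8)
The plan is to prove the two Markov chains by a symmetric argument, so I will spell out only the first, $U \mc (U\mss W^{(N)}) \mc (V\mss W^{(N)},W)$, writing $N := |\cU|\,|\cV|$ (so that $W^{(N)}$ is the variable denoted $W^{|\cU||\cV|}$ in the statement); the second is obtained by exchanging the roles of $U$ and $V$ (and of $\cU$ and $\cV$) throughout.

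First I would collect the elementary ``is a function of'' facts. From the defining recursion $W^{(i+1)} = (W^{(i)}, U\mss W^{(i)}, V\mss W^{(i)})$ it follows by induction that $W = W^{(0)}$ is a deterministic function of $W^{(i)}$ for every $i$, while $U\mss W^{(N)}$ and $V\mss W^{(N)}$ are literally coordinates of $W^{(N+1)}$; hence the pair $(V\mss W^{(N)},W)$ is a function of $W^{(N+1)}$. Applying Lemma~\ref{lem:mss-single}(ii) with $V_2 = W^{(N+1)}$ and $V_1 = (V\mss W^{(N)},W)$ then gives that $U\mss(V\mss W^{(N)},W)$ is a function of $U\mss W^{(N+1)}$. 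Next I would invoke saturation: by the discussion preceding Definition~\ref{def:msstest}, the partition of $\cU$ induced by $\psi_{U\mss W^{(i)}}$ is weakly refining in $i$ and has stabilized by step $N=|\cU|\,|\cV|$, so $\psi_{U\mss W^{(N+1)}}$ and $\psi_{U\mss W^{(N)}}$ induce the same partition of $\cU$ and each is a function of the other; in particular $U\mss W^{(N+1)}$ is a function of $U\mss W^{(N)}$. Composing the two, $U\mss(V\mss W^{(N)},W)$ is a function of $U\mss W^{(N)}$.

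The last step is to translate this into the Markov chain. By the definition of $\mss$, the conditional law $P_{(V\mss W^{(N)},W)\mid U=u}$ is constant on the blocks of the partition induced by $\psi_{U\mss(V\mss W^{(N)},W)}$, i.e.\ it depends on $u$ only through $\psi_{U\mss(V\mss W^{(N)},W)}(u)$; by the previous paragraph that value is a function of $\psi_{U\mss W^{(N)}}(u)=U\mss W^{(N)}$, so the conditional law depends on $u$ only through $U\mss W^{(N)}$. Since $U\mss W^{(N)}$ is itself a deterministic function of $U$, this is exactly the statement that $U$ and $(V\mss W^{(N)},W)$ are conditionally independent given $U\mss W^{(N)}$, which is the desired chain. (Equivalently one may quote Lemma~\ref{lem:mss-single}(i) to get $U \mc (U\mss(V\mss W^{(N)},W)) \mc (V\mss W^{(N)},W)$ and then note that replacing the conditioning variable by the finer, still $U$-measurable variable $U\mss W^{(N)}$ preserves the conditional independence.)

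I do not expect any real difficulty here: the whole argument is bookkeeping with partitions of the finite alphabets $\cU$ and $\cV$. The one point that must be handled carefully is the direction of each ``is a function of'' relation, in particular the upgrade provided by saturation from the always-available ``$U\mss W^{(N)}$ is a function of $U\mss W^{(N+1)}$'' to its converse ``$U\mss W^{(N+1)}$ is a function of $U\mss W^{(N)}$'', which is exactly where the bound $N=|\cU|\,|\cV|$ is used.
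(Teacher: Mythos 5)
Your proof is correct, and it uses precisely the tools the paper sets up for this purpose: Lemma~\ref{lem:mss-single} and the saturation of the partition sequence discussed just before Definition~\ref{def:msstest}. (The paper itself only cites this lemma from~\cite{narayanan2023complete} rather than proving it, but your reconstruction is exactly the intended argument; in particular you handle the one subtle step correctly, namely upgrading the Markov chain given by Lemma~\ref{lem:mss-single}(i), which conditions on $U\mss(V\mss W^{|\cU||\cV|},W)$, to the chain conditioning on the finer $U$-measurable variable $U\mss W^{|\cU||\cV|}$, using that the latter determines the former.)
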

The protocol which takes $P_{UVW}$ and sequences $(u^n,v^n,w^n)$ as inputs which robustly recover $\mu_W(u,v,w)$ is as follows:
\begin{protocol}\label{prot:2usrmss}\cite[Lemma 4]{narayanan2023complete}.
    The decoder $\protdecoder(P_{UVW}, u^n,v^n,w^n)$ is defined below: On receiving the reports $u^n, v^n$ from users 1 and 2, respectively, and side-information $w^n$,
\begin{enumerate}
    \item Fix some parameters $\gamma_0, \dots, \gamma_{|\cU||\cV|} > 0$. 
    \item Let $w^{(0),n} = \{w^{(0)}_t\}_{t \in [n]} = w^n$.
    \item For round index $r = {0,\dots,|\cU||\cV|}$,
    \begin{enumerate}
        \item   check if $({u^n,w^{(r),n}}) \in \cT^n_{\gamma_r}(P_{U,W^{(r)}})$. If not, declare user 1 is corrupt and terminate early.
        \item   check if $(v^n,w^{(r),n}) \in \cT^n_{\gamma_r}(P_{V,W^{(r)}})$. If not, declare user 2 is corrupt and terminate early. 
        \item   if both the aforementioned checks pass, for every $t \in [n]$, assign $$w^{(r+1)}_t = \inp{w^{(r)}_t, \psi_{U \mss W^{(r)}}(u_{t}), \psi_{V \mss W^{(r)}}(v_t)}$$ and update $r$ with $r+1$.
    \end{enumerate}
    \item   Output $w^{(|\cU||\cV|+1), n}$.
\end{enumerate}
\end{protocol}
\begin{lemma}\cite[Lemma 4]{narayanan2023complete}\label{lem:2userprotocol}
    If $(U_t,V_t,W_t)$ are sampled i.i.d. from a distribution $P_{UVW}$ for $t \in [n]$, then for any $\delta>0$, there exist parameters $\gamma_0, \gamma_1, \dots, \gamma_{|\cU||\cV|} > 0$ such that the following holds:
    \begin{enumerate}
        \item
        $\protdecoder$ does not terminate early under the inputs $(U^n,V^n,W^n)\sim P_{UVW}$ i.i.d. with probability at least $1-2^{-\Omega(n)}$.

        \item 
        Suppose $\bU^n \mc U^n \mc (V^n,W^n)$ is a Markov chain, then, on inputs $P_{UVW}$ and $(\bU^n,V^n,W^n)$, $\Pi$ satisfies the following:
        \begin{enumerate}
            \item
            $\protdecoder$ declares user $2$ as corrupt with probability at most $2^{-\Omega(n)}$.
            \item
            For $r = |\cU||\cV|$, 
            \begin{multline*}
            \pr \Biggl(
                \left(
                    \ham \inp{\psi_{U \mss W^{(r)}}(U^n), \psi_{U \mss W^{(r)}}(\bU^n)} > \delta
                \right)
            \Biggr.
            \\
            \Biggl. \wedge
                \left(
                    \Pi \text{ does not terminate early}                    
                \right)
            \Biggr) \leq 2^{-\Omega(n)}.
            \end{multline*}
        \end{enumerate}
        \item 
        Suppose $\overline{V}^n \mc V^n \mc (U^n,W^n)$ is a Markov chain, then, on inputs $P_{UVW}$ and $(U^n,\overline{V}^n,W^n)$, $\Pi$ satisfies the following:
        \begin{enumerate}
            \item
            $\protdecoder$ declares user $1$ as corrupt with probability at most $2^{-\Omega(n)}$.
            \item
            For $r = |\cU||\cV|$, 
            \begin{multline*}
            \pr \Biggl(
                \left(
                    \ham \inp{\psi_{V \mss W^{(r)}}(V^n), \psi_{V \mss W^{(r)}}(\overline{V}^n)} > \delta
                \right)
            \Biggr.
            \\
            \Biggl. \wedge
                \left(
                    \Pi \text{ does not terminate early}                    
                \right)
            \Biggr) \leq 2^{-\Omega(n)}.
            \end{multline*}
        \end{enumerate}
    \end{enumerate}
\end{lemma}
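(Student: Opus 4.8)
The plan is to combine the single-user robust-recovery property with an induction over the $|\cU||\cV|+1$ rounds of $\protdecoder$; this is essentially the argument of \cite[Lemma~4]{narayanan2023complete} and runs closely parallel to the achievability proof of Theorem~\ref{thm:2user}. Throughout, write $W^{(r),n}:=(W^{(r)}_1,\ldots,W^{(r)}_n)$ for the \emph{honest} upgraded sequence, i.e., the $w^{(r),n}$ that round $r$ would produce if both users reported truthfully; this is a deterministic function of $(U^n,V^n,W^n)$, so $(U^n,W^{(r),n})\sim P^n_{U,W^{(r)}}$ and $(V^n,W^{(r),n})\sim P^n_{V,W^{(r)}}$ are i.i.d. For Part~1 (honest execution) I would simply observe that, by a Chernoff/Hoeffding bound for i.i.d.\ sequences (cf.\ \cite[Lemma~2.6]{csiszar2011information}), for \emph{any} $\gamma_r>0$ both typicality tests at round $r$ are passed with probability $1-2^{-\Omega(n)}$, and take a union bound over the finitely many rounds.

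Parts~2 and~3 are symmetric, so I focus on Part~2: user~1 is corrupt and reports $\bU^n$ with $\bU^n\mc U^n\mc(V^n,W^n)$, while user~2 is honest. The core step is an induction on $r=0,1,\ldots,|\cU||\cV|$ establishing, for suitable constants $\epsilon_r=\epsilon_r(\gamma_0,\ldots,\gamma_{r-1})$ with $\epsilon_r\to 0$ as all the $\gamma_s\to 0$, the invariant: (a) with probability $1-2^{-\Omega(n)}$, either $\protdecoder$ has already declared user~1 corrupt, or it has accused no one and $\ham(w^{(r),n},W^{(r),n})\le\epsilon_r$; and (b) with probability $1-2^{-\Omega(n)}$, $\protdecoder$ does not accuse user~2 at round $r$ (while having accused no one earlier). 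The base case $r=0$ is immediate, $w^{(0),n}=W^n=W^{(0),n}$. For the step, assume the invariant through round $r-1$; outside a $2^{-\Omega(n)}$ event $w^{(r),n}$ is $\epsilon_r$-close to the honest $W^{(r),n}$, so $(V^n,w^{(r),n})$ is typical w.r.t.\ $P_{V,W^{(r)}}$ once $\gamma_r$ is chosen with slack to absorb $\epsilon_r$, giving (b). For (a): if the round-$r$ user-1 test fails, $\protdecoder$ declares user~1 — no error, since user~1 is indeed malicious; if it passes, the claim (proven below) is that $\ham\big(\psi_{U\mss W^{(r)}}(\bU^n),\psi_{U\mss W^{(r)}}(U^n)\big)\le\epsilon_{r+1}$ w.h.p. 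Granting this, and using that user~2's contribution $\psi_{V\mss W^{(r)}}(V^n)$ to the upgrade is honest, the coordinate-wise definitions of $w^{(r+1),n}$ and $W^{(r+1),n}$ yield $\ham(w^{(r+1),n},W^{(r+1),n})\le\epsilon_{r+1}$ (after relabelling constants), closing the induction; a union bound over rounds then gives conclusion~2(a), and applying the claim at $r=|\cU||\cV|$ gives conclusion~2(b).

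To prove the claim I would first reduce, via the permutation-symmetrization trick of \cite{narayanan2023complete} used in Section~\ref{sec:2user-typicality-proof}, to the memoryless case in which $\bU^n$ is obtained by passing $U^n$ through a DMC, so $(U^n,\bU^n,W^{(r),n})$ is i.i.d. Then the passed test together with $\ham(w^{(r),n},W^{(r),n})\le\epsilon_r$ forces the joint type of $(U^n,\bU^n,W^{(r),n})$ to lie within $O(\gamma_r+\epsilon_r)$ of some $Q_{U\bar U W^{(r)}}$ with $Q_{UW^{(r)}},Q_{\bar U W^{(r)}}$ near $P_{UW^{(r)}}$ and $\bar U\mc U\mc W^{(r)}$. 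The single-letter fact is: if $Q_{UW^{(r)}}=Q_{\bar U W^{(r)}}=P_{UW^{(r)}}$ and $\bar U\mc U\mc W^{(r)}$, then $Q\big(\psi_{U\mss W^{(r)}}(\bar U)=\psi_{U\mss W^{(r)}}(U)\big)=1$; indeed $Q_{\bar U W^{(r)}}=Q_{UW^{(r)}}$ gives $I_Q(\bar U;W^{(r)})=I_Q(U;W^{(r)})$, which with $\bar U\mc U\mc W^{(r)}$ forces the reverse Markov chain $U\mc\bar U\mc W^{(r)}$ (equality in the data-processing inequality), so for $(u,\bar u)$ with $Q(u,\bar u)>0$ one has $P_{W^{(r)}|U=u}=Q_{W^{(r)}|\bar U=\bar u}=P_{W^{(r)}|U=\bar u}$, i.e., $\psi_{U\mss W^{(r)}}(u)=\psi_{U\mss W^{(r)}}(\bar u)$. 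A compactness/continuity argument identical in structure to Lemmas~\ref{lem:comp-int}--\ref{lem:2continuous} and~\ref{lem:2user-technical-single-letter} upgrades this to a robust version, and translating back through the method of types gives the claim with $\epsilon_{r+1}=\gamma(O(\gamma_r+\epsilon_r))$ for some $\gamma(\cdot)$ vanishing at $0$; given the target $\delta$ in the lemma statement, one fixes $\gamma_0,\ldots,\gamma_{|\cU||\cV|}$ small enough at the outset so that $\epsilon_{|\cU||\cV|}\le\delta$. The main obstacle is precisely this claim: making the single-letter data-processing fact robust under total-variation perturbations, simultaneously handling arbitrary (non-memoryless) adversary channels through the symmetrization reduction, and controlling how the slack $\epsilon_r$ accumulates over the (finitely many) rounds — the rest is bookkeeping and standard typicality estimates.
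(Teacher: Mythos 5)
The paper does not prove Lemma~\ref{lem:2userprotocol}; it is imported verbatim from \cite[Lemma~4]{narayanan2023complete} and used as a black box in Appendix~\ref{app:k=1}, so there is no in-paper proof to compare against. Your proposal is nevertheless a sound reconstruction of how such a result is established, and it meshes well with the machinery the paper \emph{does} develop: the round-by-round induction with a Hamming-closeness invariant, the reduction to a DMC via the permutation-symmetrization trick (exactly what Section~\ref{sec:2user-typicality-proof} does), and the compactness/continuity upgrade of a single-letter identity (mirroring Lemmas~\ref{lem:comp-int}--\ref{lem:2continuous} and~\ref{lem:2user-technical-single-letter}) are all in the right spirit. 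Your single-letter step is correct as written: from $\bar{U}\mc U\mc W^{(r)}$ and matching marginals you get equality in the data-processing inequality, hence the reverse Markov chain $U\mc\bar{U}\mc W^{(r)}$, and combining the two chains with $Q_{UW^{(r)}}=Q_{\bar{U}W^{(r)}}=P_{UW^{(r)}}$ yields $P_{W^{(r)}\mid U=u}=P_{W^{(r)}\mid U=\bar{u}}$ on the support of $Q_{U\bar{U}}$, which is precisely $\psi_{U\mss W^{(r)}}(u)=\psi_{U\mss W^{(r)}}(\bar{u})$. One point worth spelling out if you flesh this out: $\bar{U}\mc U\mc W^{(r)}$ does hold single-letter because $W^{(r)}$ is a deterministic function of $(U,V,W)$ and $\bar{U}\mc U\mc(V,W)$ is assumed, so conditionally on $U$ the variable $\bar{U}$ is independent of $(V,W)$ and hence of $W^{(r)}$. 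The remaining work is the quantitative bookkeeping — choosing the $\gamma_r$ with enough slack to absorb the accumulated $\epsilon_r$ and verifying the $2^{-\Omega(n)}$ rates via the method of types — which you correctly flag as the technical bulk; none of the steps you sketch should fail.
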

\subsection{The$(k=2,s=1)$ case}
To maintain consistency across the special case of $(k=2, s=1)$ and the general case (any $k$, $s=1$), we change the notation by making the following assignments to the variables in the definition given above: $\cU \leftarrow \cX_1$, $\cV \leftarrow \cX_2$, $\cW \leftarrow \cY$, $U \leftarrow X_1$, $V \leftarrow X_2$, $W \leftarrow Y$ and $W^{\ast} \leftarrow Y^{\ast}$.
It is shown in \cite{narayanan2023complete} that the upgraded random variable $Y^{\ast}$ at the receiver can be robustly recovered. 
In this section, we connect the formulation given in \cite{narayanan2023complete} to the class of functions which are 1-viable. We show that given a distribution $P_{X_1 X_2 Y}$ the class of 1-viable functions is the same as all functions of the upgraded variable $Y^{\ast}$ in Theorem~\ref{thm:2usermss}.

\begin{theorem}\label{thm:2usermss}
    Given a distribution $P_{X_1 X_2 Y}$, $f$ is $1$-viable if and only if there exists a function $h:\cX_1 \times \cX_2 \times \cY \rightarrow \cZ$ such that $h(Y^{\ast}) = f(X_1, X_2, Y)$.
\end{theorem}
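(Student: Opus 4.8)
The plan is to prove the two implications separately. Throughout put $N:=|\cX_1|\,|\cX_2|$ and $\psi_1:=\psi_{X_1\mss Y^{(N)}}$, $\psi_2:=\psi_{X_2\mss Y^{(N)}}$, so that $Y^{\ast}=(Y,\psi_1(X_1),\psi_2(X_2))=\mu_{Y^{\ast}}(X_1,X_2,Y)$ by Definition~\ref{def:msstest}, and recall from Lemma~\ref{lem:mss-properties} the Markov chains $X_1\mc\psi_1(X_1)\mc(\psi_2(X_2),Y)$ and $X_2\mc\psi_2(X_2)\mc(\psi_1(X_1),Y)$.

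\emph{If $h(Y^{\ast})=f(X_1,X_2,Y)$ then $f$ is $1$-viable.} I would build a decoder for $f$ by post-processing $\protdecoder$ of Protocol~\ref{prot:2usrmss}: on input $(\ux_1^n,\ux_2^n,\uy^n)$, run $\protdecoder$ with parameters as in Lemma~\ref{lem:2userprotocol}; output the blamed user if it terminates early; otherwise output, coordinate by coordinate, $h$ evaluated at the triple $(\uy_t,\psi_1(\ux_{1,t}),\psi_2(\ux_{2,t}))$, which can be read off from the protocol's output $w^{(N+1)}_t$ because the partitions saturate by round $N$ and, by Lemma~\ref{lem:mss-single}(ii), every intermediate $\psi$ is a function of $\psi_1$ (resp.\ $\psi_2$). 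When nobody cheats this output equals $h(Y^{\ast}_t)=Z_t$ coordinate-wise whenever $\protdecoder$ does not terminate early, which happens w.h.p.\ by Lemma~\ref{lem:2userprotocol}(1). When the adversary controls user~$1$, its report $\bX_1^n$ is a randomized function of $X_1^n$, hence $\bX_1^n\mc X_1^n\mc(X_2^n,Y^n)$, so Lemma~\ref{lem:2userprotocol}(2) applies: w.h.p.\ the decoder either correctly names user~$1$ or does not terminate early and then outputs $h(Y_t,\psi_1(\bX_{1,t}),\psi_2(X_{2,t}))$ coordinate-wise, which differs from $Z_t=h(Y_t,\psi_1(X_{1,t}),\psi_2(X_{2,t}))$ only where $\psi_1(\bX_{1,t})\neq\psi_1(X_{1,t})$; thus the average Hamming distortion is at most $\ham(\psi_1(X_1^n),\psi_1(\bX_1^n))\leq\delta$ w.h.p. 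The user-$2$ case is symmetric via Lemma~\ref{lem:2userprotocol}(3). Hence $f$ is $1$-robustly recoverable for every $\gamma>0$, so $f$ is $1$-viable by Theorem~\ref{thm:2user}.

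\emph{If $f$ is $1$-viable then $f$ is a function of $Y^{\ast}$.} I would argue the contrapositive. If $f$ is not a function of $Y^{\ast}$, there are $p=(x_1,x_2,y)$ and $p'=(x_1',x_2',y)$ in $\supp P_{X_1X_2Y}$ (sharing the same $y$ because $\mu_{Y^{\ast}}(p)=\mu_{Y^{\ast}}(p')$) with $\psi_1(x_1)=\psi_1(x_1')$, $\psi_2(x_2)=\psi_2(x_2')$, yet $f(p)\neq f(p')$. Consider the ``resample-within-class'' channels $W^{(1)}(\,\cdot\mid\tx_1):=P_{X_1\mid\psi_1(X_1)=\psi_1(\tx_1)}$ and $W^{(2)}(\,\cdot\mid\tx_2):=P_{X_2\mid\psi_2(X_2)=\psi_2(\tx_2)}$. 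Using the two Markov chains above one checks that both channels induce the same decoder ``view'': for all $\ux_1,\ux_2,\uy$, $\sum_{\tx_1}P(\tx_1,\ux_2,\uy)W^{(1)}(\ux_1\mid\tx_1)=\sum_{\tx_2}P(\ux_1,\tx_2,\uy)W^{(2)}(\ux_2\mid\tx_2)$, each side being equal to $P_{X_1\mid\psi_1=\psi_1(\ux_1)}(\ux_1)\,P_{X_2\mid\psi_2=\psi_2(\ux_2)}(\ux_2)\,P(\psi_1(X_1)=\psi_1(\ux_1),\psi_2(X_2)=\psi_2(\ux_2),Y=\uy)$. Let $Q_{\uX_1\uX_2\uY}$ be this common distribution, let $Q_{\tX_1|\uX_1\uX_2\uY}$ and $Q_{\tX_2|\uX_1\uX_2\uY}$ be the Bayes inverses of the two attacks, and impose $\tX_1\perp\tX_2\mid(\uX_1,\uX_2,\uY)$. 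The resulting $Q_{\uX_1\tX_1\uX_2\tX_2\uY}$ satisfies conditions (i) and (ii) of Definition~\ref{def:1-viable}, since its $(\tX_1,\uX_2,\uY)$-marginal is exactly the joint law of scenario~(i) and its $(\uX_1,\tX_2,\uY)$-marginal that of scenario~(ii). Because $W^{(1)}(x_1'\mid x_1)>0$, $W^{(2)}(x_2\mid x_2')>0$ and $P(p),P(p')>0$, the event $\{\tX_1=x_1,\,\uX_1=x_1',\,\uX_2=x_2,\,\tX_2=x_2',\,\uY=y\}$ has positive $Q$-probability, and on it $f(\tX_1,\uX_2,\uY)=f(p)\neq f(p')=f(\uX_1,\tX_2,\uY)$, contradicting $1$-viability of $f$.

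The main obstacle is the identity ``both resample-within-class channels induce the same decoder view'' in the second implication; this is the single place where the fixed-point/Markov structure of the upgraded variable $Y^{\ast}$ (Lemma~\ref{lem:mss-properties}) is essential, since $\psi_1$ is in general strictly coarser than $\psi_{X_1\mss(X_2,Y)}$, and without that structure a $\psi_1$-identification of $x_1$ with $x_1'$ need not correspond to any attack the decoder cannot detect. The first implication, by contrast, is routine bookkeeping built on Lemma~\ref{lem:2userprotocol} and Theorem~\ref{thm:2user}.
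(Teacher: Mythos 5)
Your proposal is correct and follows essentially the same route as the paper: the forward direction reduces to the recoverability of $Y^{\ast}$ via Protocol~\ref{prot:2usrmss} and Theorem~\ref{thm:2user}, and the converse is the contrapositive argument of Lemma~\ref{lem:2usermss}. In fact your ``resample-within-class'' construction, described via the channels $W^{(1)},W^{(2)}$ and their Bayes inverses with $\tX_1\perp\tX_2\mid(\uX_1,\uX_2,\uY)$, yields exactly the joint law in \eqref{eq:21joint} of the paper (marginalizing out $\Xdag_1,\Xdag_2$), so the two constructions coincide and the ``common view'' identity you verify is the same fact the paper uses when showing $Q_{\uX_1\tX_2\uY}=P_{X_1X_2Y}$.
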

\begin{proof}
    To show that if $f(X_1,X_2,Y)$ is a function of $Y^{\ast}$, then $f$ is $1$-viable, 
    it suffices to show that $Y^{\ast}$ is $1$-viable. This is true since $Y^{\ast} = (Y, X_1 \mss Y^{|\cX_1||\cX_2|}, X_2 \mss Y^{|\cX_1||\cX_2|})$ is recoverable using Protocol~\ref{prot:2usrmss}, as shown in \cite[Lemma 4]{narayanan2023complete} and every recoverable function is $1$-viable using Theorem~\ref{thm:2user}.
    In Lemma~\ref{lem:2usermss} below, we show the converse, i.e., that if $f$ is $1$-viable, then, $f$ is a function of $Y^{\ast}$.
\end{proof}
\begin{lemma}\label{lem:2usermss}
    For a distribution $P_{X_1 X_2 Y}$, if $f$ is $1$-viable, then, there exists a function $h$ such that $h(Y^{\ast}) = f(X_1, X_2, Y)$, i.e., $f(x_1,x_2,y) = \mu_{Y^{\ast}}(x_1,x_2,y)$ for all $(x_1,x_2,y)$.
\end{lemma}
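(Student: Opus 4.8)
The plan is to combine Theorem~\ref{thm:2user} with the already‑established first half of Theorem~\ref{thm:2usermss}. Since a function of $Y^{\ast}$ is recoverable (via Protocol~\ref{prot:2usrmss}) and hence $1$‑viable, it remains only to prove the converse: every $1$‑viable $f$ is a function of $Y^{\ast}$, equivalently $f(X_1,X_2,Y)$ is constant on each cell of the partition of $\supp P_{X_1X_2Y}$ induced by $\mu_{Y^{\ast}}$. I would do this in two stages.

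First, a reduction to a ``saturated'' side‑information. I claim that if $f$ is $1$‑viable for $P_{X_1X_2Y}$ then it is also $1$‑viable for $P_{X_1X_2Y^{(r)}}$ for every $r$, where $Y^{(r)}$ is the $r$‑th upgraded variable and the associated function is $f^{(r)}(x_1,x_2,y^{(r)}):=f(x_1,x_2,y)$, with $y$ the ($Y$‑)component of $y^{(r)}$. Indeed, given any joint p.m.f.\ over $\uX_1,\tX_1,\uX_2,\tX_2,\uY^{(r+1)}$ witnessing a violation of viability for $P_{X_1X_2Y^{(r+1)}}$, replacing $\uY^{(r+1)}$ by its component $\uY^{(r)}$ yields a joint p.m.f.\ that still satisfies the four conditions of Definition~\ref{def:1-viable} for $P_{X_1X_2Y^{(r)}}$: the marginal equalities (i)(a),(ii)(a) are preserved under marginalization, and the Markov conditions (i)(b),(ii)(b) survive because conditional independence is preserved when the conditioned‑upon tuple is coarsened; and the value of $f^{(r+1)}$ on any realization equals $f$ at the original $Y$‑coordinate, which is unchanged, so this also witnesses a violation for $P_{X_1X_2Y^{(r)}}$. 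Iterating and taking $r$ past the saturation index $|\cX_1|\,|\cX_2|$ — so that $Y^{(r)}$ induces the same partition of $\supp P$ as $Y^{\ast}$ and $X_1\searrow Y^{(r)}$, $X_2\searrow Y^{(r)}$ are components of $Y^{(r)}$ — reduces everything to the following core claim.

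Core claim: if $f$ is $1$‑viable for $P_{X_1X_2W}$, where $W$ is such that $A:=X_1\searrow W$ and $B:=X_2\searrow W$ are determined by $W$ on $\supp P_{X_1X_2W}$, then $f$ is a function of $W$. I would prove this by contraposition. If $f$ is not a function of $W$, pick $(x_1,x_2,w),(x_1',x_2',w)\in\supp P$ with the same $w$ — hence the same value of $A$ (so $x_1,x_1'$ lie in one $A$‑cell) and of $B$ (so $x_2,x_2'$ lie in one $B$‑cell) — but with $f(x_1,x_2,y)\neq f(x_1',x_2',y)$, $y$ being the $Y$‑part of $w$. The goal is to build a joint p.m.f.\ $Q_{\uX_1\tX_1\uX_2\tX_2\uW}$ satisfying the four conditions of Definition~\ref{def:1-viable} (with $\uW$ as the decoder's side‑information) that places positive probability on a realization with $(\tX_1,\uX_2,\uW)=(x_1,x_2,w)$ and $(\uX_1,\tX_2,\uW)=(x_1',x_2',w)$; then $f(\tX_1,\uX_2,\uW)\neq f(\uX_1,\tX_2,\uW)$ on this event, contradicting $1$‑viability. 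The construction couples two copies of $P_{X_1X_2W}$ over a shared $\uW$ — one furnishing the scenario‑(i) true source $(\tX_1,\uX_2)$, the other the scenario‑(ii) true source $(\uX_1,\tX_2)$ — arranged so that the decoder's view $(\uX_1,\uX_2,\uW)$ has the same law under both scenarios and the required conditional independences hold; the Markov chains $X_1\mc A\mc(B,Y)$ and $X_2\mc B\mc(A,Y)$ of Lemma~\ref{lem:mss-properties} (which upgrade to $X_1\mc A\mc W$ and $X_2\mc B\mc W$) are precisely what allow each user's report to be re‑routed through its commonly‑known cell without altering the decoder's view. The main obstacle is that $X_1$ and $X_2$ need not be conditionally independent given $W$, so the ``corrupted'' report cannot simply be a fresh resample within a cell but must instead be correlated with the \emph{other} user's true observation (as already illustrated by the impossibility of recovering $U$ when $X_1=X_2=U$, where the witnessing attack swaps the two users' roles). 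Making all four conditions of Definition~\ref{def:1-viable} hold at once is the crux; I expect to carry it out by decomposing the passage from $(x_1,x_2,w)$ to $(x_1',x_2',w)$ into elementary moves that each alter only one user's report within its cell, chaining them through intermediate support points, with Lemma~\ref{lem:mss-properties} guaranteeing that each move keeps the view inside the honest law.
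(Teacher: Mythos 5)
Your overall strategy matches the paper's: argue by contraposition, pick two support points $(x_1,x_2,y)$ and $(x_1',x_2',y)$ lying in the same $Y^{\ast}$-cell but with $f(x_1,x_2,y)\neq f(x_1',x_2',y)$, and build a single joint p.m.f.\ $Q_{\uX_1\tX_1\uX_2\tX_2\uY}$ satisfying the four conditions of Definition~\ref{def:1-viable} that puts positive mass on an event where $f(\tX_1,\uX_2,\uY)\neq f(\uX_1,\tX_2,\uY)$. You also correctly identify Lemma~\ref{lem:mss-properties} as the engine. Two things, however, fall short.

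First, a genuine gap: you never write down the witness $Q$. You explicitly flag that constructing a $Q$ satisfying all four conditions at once is ``the crux,'' and then propose to ``decompose the passage from $(x_1,x_2,w)$ to $(x_1',x_2',w)$ into elementary moves, chaining them through intermediate support points.'' This is not carried out, and it is not clear it can be: even if such a chain exists in the support, each ``elementary move'' produces a separate scenario; you still need a \emph{single} p.m.f.\ over $\uX_1,\tX_1,\uX_2,\tX_2,\uY$ whose marginals simultaneously equal $P_{X_1X_2Y}$ in both the $(\tX_1,\uX_2,\uY)$ and $(\uX_1,\tX_2,\uY)$ coordinates and whose two Markov constraints hold, and which charges the bad event. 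The paper supplies a closed-form one-shot construction: with $\Xdag_i=\psi^{\ast}_{X_i}(X_i)$, take
\begin{align*}
Q &= P_{\Xdag_1\Xdag_2 Y}(\xdag_1,\xdag_2,\uy)\,P_{X_1|\Xdag_1}(\ux_1|\xdag_1)\,P_{X_2|\Xdag_2}(\ux_2|\xdag_2)\\
&\qquad\times P_{X_1|\Xdag_1 X_2 Y}(\tx_1|\xdag_1,\ux_2,\uy)\,P_{X_2|\Xdag_2 X_1 Y}(\tx_2|\xdag_2,\ux_1,\uy),
\end{align*}
and verifies the four conditions and the positivity at the chosen point using Lemma~\ref{lem:mss-properties}. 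This construction is where all the work lives, and it is missing from your argument.

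Second, a mis-diagnosis that likely blocked you from finding the construction: you claim the corrupted report ``cannot simply be a fresh resample within a cell'' because $X_1$ and $X_2$ need not be conditionally independent given $W$. In the paper's construction the adversary's channel is \emph{exactly} a fresh resample within the cell, $Q_{\uX_1|\tX_1}(\ux_1|\tx_1)=P_{X_1|\Xdag_1}(\ux_1|\psi^{\ast}_{X_1}(\tx_1))$, and the two scenario views coincide precisely thanks to the Markov chains $X_1\mc\Xdag_1\mc(\Xdag_2,Y)$ and $X_2\mc\Xdag_2\mc(\Xdag_1,Y)$ from Lemma~\ref{lem:mss-properties}. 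The key point you seem to be missing is that the decoder's view $Q_{\uX_1\uX_2\uY}$ need \emph{not} equal $P_{X_1X_2Y}$ (only the scenario marginals $Q_{\tX_1\uX_2\uY}$ and $Q_{\uX_1\tX_2\uY}$ must); once one drops that implicit requirement, the conditional-independence ``obstacle'' disappears. Your first-stage reduction (passing 1-viability from $Y^{(r)}$ to $Y^{(r+1)}$) appears sound but is unnecessary machinery: the paper's $Q$ is constructed over the original alphabets directly, with $\Xdag_1,\Xdag_2$ as auxiliary coordinates.
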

\begin{proof}
    We denote the functions $\psi_{X_1 \mss Y^{|\cX_1||\cX_2|}}$ by $\psi_{X_1}^{\ast}$ and $\psi_{X_2 \mss Y^{|\cX_1||\cX_2|}}$ by $\psi_{X_2}^{\ast}$. Furthermore, we denote $\psi^{\ast}_{X_1}(X_1)$ by $\Xdag_1$ and $\psi^{\ast}_{X_2}(X_2)$ by $\Xdag_2$.

    We will prove the contrapositive of the lemma. We will show that if 
    $f$ is such that there is no function $h$ s.t. $h(\mu_{Y^{\ast}}(x_1,x_2,y)) = f(x_1,x_2,y)$, then $f$ is not $1$-viable. Suppose there is no such $h$ for the function $f$, then there exists some pair of triples $(x_1,x_2,y), (x_1',x_2',y)$, both in the support of $P_{X_1 X_2 Y}$ such that $f(x_1', x_2', y) \neq f(x_1, x_2, y)$ but 
    \begin{equation}
        (y, \psi^{\ast}_{X_1}(x_1), \psi^{\ast}_{X_2}(x_2)) = (y, \psi^{\ast}_{X_1}(x'_1), \psi^{\ast}_{X_2}(x_2')).
    \end{equation}
    We now prove that $f$ is not $1$-viable, i.e., there exists a distribution $Q_{\uX_1 \tX_1 \uX_2 \tX_2 \uY}$ which satisfies 
    \begin{align}
        \label{eq:2user-mss-def1}
        Q_{\uX_1 \tX_2 \uY} &= P_{X_1 X_2 Y} \text{ and } \uX_1 \mc \tX_1 \mc (\uX_2, \uY),\\
        \label{eq:2user-mss-def2}
        Q_{\tX_1 \uX_2 \uY} &= P_{X_1 X_2 Y} \text{ and }\uX_2 \mc \tX_2 \mc (\uX_1, \uY),
    \end{align}
    but $f(\uX_1, \tX_2, \uY) \neq f(\tX_1, \uX_2, \uY)$ with some non-zero probability under $Q_{\uX_1 \tX_1 \uX_2 \tX_2 \uY}$.
    Towards this, note that the joint distribution $P_{X_1 X_2 Y}$ induces a distribution on $(X_1, X_2, \Xdag_1, \Xdag_2, Y)$, which is, 
    \begin{multline*}
    P_{X_1 X_2 \Xdag_1 \Xdag_2 Y}(\xdag_1, \xdag_2, y) = P_{X_1 X_2 Y}(x_1, x_2, y) \\ \times \bb{1}_{\{\psi^{\ast}_{X_1}(x_1) = \xdag_1\}} \times \bb{1}_{ \{ \psi^{\ast}_{X_2}(x_2) = \xdag_2 \} } .
    \end{multline*}
    Consider the joint distribution 
    \begin{align*}
        \MoveEqLeft[2] Q_{\Xdag_1 \Xdag_2 \uY \uX_1 \tX_1 \uX_2 \tX_2}(\xdag_1, \xdag_2, \uy,\ux_1,\tx_1,\ux_2,\tx_2)\\ 
        =& P_{\Xdag_1 \Xdag_2 Y}(\xdag_1, \xdag_2, \uy) P_{X_1|\Xdag_1}(\ux_1|\xdag_1) P_{X_2|\Xdag_2}(\ux_2|\xdag_2)\\
        &P_{X_1|\Xdag_1 X_2 Y}(\tx_1|\xdag_1, \ux_2, \uy) P_{X_2|\Xdag_2 X_1 Y}(\tx_2|\xdag_2,\ux_1,\uy). \addtocounter{equation}{1} \label{eq:21joint} \tag{\theequation}
    \end{align*}
    We first show that the aforementioned distribution satisfies (\ref{eq:2user-mss-def1}) and (\ref{eq:2user-mss-def2}) and then show that $f(\uX_1, \tX_2, \uY) \neq f(\tX_1, \uX_2, \uY)$ with non-zero probability under this distribution.
    Note that $\uX_1$ is independent of $(\uX_2, \uY)$ conditioned on $\Xdag_1$. Here, $\uX_1 \mc \Xdag_1 \mc (\uX_2,\uY)$. Furthermore, $\Xdag_1 = \psi_{\uX_1}^{\ast}(\tX_1)$ by virtue of the fact that $Q_{\Xdag_1 \Xdag_2 \uY \uX_2 \tX_1}(\xdag_1, \xdag_2, \uy, \ux_2, \tx_1) = P_{\Xdag_1 \Xdag_2 Y X_2 X_1}(\xdag_1, \xdag_2, \uy, \ux_2, \tx_1)$ for all $(\xdag_1, \xdag_2, \uy, \ux_2, \tx_1)$ and hence, $Q_{\Xdag_1 X_1} = P_{\Xdag_1 X_1}$. Hence, $\uX_1 \mc \tX_1 \mc (\uX_2, \uY)$, the Markov chain given in \eqref{eq:2user-mss-def1} holds. The Markov chain in \eqref{eq:2user-mss-def2} are shown in a similar way. We now show the condition $Q_{\uX_1 \tX_2 \uY} = P_{X_1 X_2 Y}$ from \eqref{eq:2user-mss-def1} (and the similar condition in \eqref{eq:2user-mss-def2}).
    \begin{align*}
    \MoveEqLeft[2] Q_{\uX_1 \tX_2 \uY}(\ux_1, \tx_2, \uy)\\
    =&\sum_{\xdag_1, \xdag_2, \tx_1, \ux_2} P_{\Xdag_1 \Xdag_2 Y}(\xdag_1, \xdag_2, \uy) P_{X_1|\Xdag_1}(\ux_1|\xdag_1) P_{X_2|\Xdag_2}(\ux_2|\xdag_2)\\
    &P_{X_1|\Xdag_1 X_2 Y}(\tx_1|\xdag_1, \ux_2, \uy) P_{X_2|\Xdag_2 X_1 Y}(\tx_2|\xdag_2,\ux_1,\uy)\\
    \stackrel{(a)}{=}& \sum_{\xdag_1, \xdag_2} P_{\Xdag_1 \Xdag_2 Y}(\xdag_1, \xdag_2, \uy)P_{X_1|\Xdag_1}(\ux_1|\xdag_1)\\
    & \times P_{X_2|\Xdag_2 X_1 Y}(\tx_2|\xdag_2,\ux_1,\uy)\\
    \stackrel{(b)}{=}& \sum_{\xdag_1, \xdag_2} P_{\Xdag_1 \Xdag_2 Y}(\xdag_1, \xdag_2, \uy)P_{X_1|\Xdag_1 \Xdag_2 Y}(\ux_1|\xdag_1,\xdag_2,\uy)\\    
    & \times P_{X_2|\Xdag_2 X_1 Y}(\tx_2|\xdag_2,\ux_1,\uy)\\
    =&\sum_{\xdag_1, \xdag_2} P_{X_1 \Xdag_1 \Xdag_2 Y}(\ux_1, \xdag_1, \xdag_2, \uy) P_{X_2|\Xdag_2 X_1 Y}(\tx_2|\xdag_2,\ux_1,\uy)\\
    =&\sum_{\xdag_2}P_{X_1 \Xdag_2 Y}(\ux_1, \xdag_2, \uy) P_{X_2|\Xdag_2 X_1 Y}(\tx_2|\xdag_2,\ux_1,\uy)\\
    =&P_{X_1 X_2 Y}(\ux_1,\tx_2,\uy),
    \end{align*}
    where (a) follows from marginalizing out $\tx_1$ and then $\ux_2$, (b) is due to Lemma~\ref{lem:mss-properties}. To complete the proof, we show that $Q_{\uX_1 \tX_1\uX_2 \tX_2 \uY}(\ux_1, \tx_1, \ux_2, \tx_2, \uy)>0$ and $f(\tx_1,\ux_2,\uy) \neq f(\ux_1, \tx_2, \uy)$ for the point $(\ux_1,\tx_1,\ux_2,\tx_2,\uy,\xdag_1,\xdag_2) = (x_1',x_1,x_2,x_2',y,\psi^{\ast}_{X_1}(x_1),\psi^{\ast}_{X_2}(x_2))$. 
    Note that by assumption $f(x_1,x_2,y) \neq f(x_1',x_2',y)$. Therefore it remains to show that
    \begin{align*}
        Q_{\uX_1 \tX_1 \uX_2 \tX_2 \uY \Xdag_1 \Xdag_2}(x'_1,x_1,x_2,x'_2,y,\psi_{X_1}^{\ast}(x_1),\psi^{\ast}_{X_2}(x_2))>0,
    \end{align*} which we argue in the following steps and therefore completing the proof. 
    \begin{enumerate}
        \item   Note that $P_{\Xdag_1\Xdag_2Y}(\psi^{\ast}_{X_1}(x_1),\psi^{\ast}_2(x_2),y)>0$ since $P_{X_1 X_2 Y}(x_1,x_2,y)>0$.
        \item   $P_{X_1|\Xdag_1}(x_1'|\psi^{\ast}_{X_1}(x_1)) \stackrel{(a)}{=} P_{X_1|\Xdag_1}(x_1'|\psi^{\ast}_{X_1}(x_1')) \stackrel{(b)}{>} 0$, where (a) is due to the fact that $\psi^{\ast}_{X_1}(x_1') = \psi^{\ast}_{X_1}(x_1)$ and (b) is due to the fact that $P_{X_1 \Xdag_1}(x_1',\psi_{X_1}^{\ast}(x_1'))>0$, which, in turn is true since $P_{X_1}(x_1)>0$ and $\Xdag_1 = \psi^{\ast}_{X_1}(X_1)$. Similarly, $P_{X_2|\Xdag_2}(x_2|\psi^{\ast}_{X_2}(x_2))>0$.
        \item   Since $P_{X_1 X_2 Y}(x_1,x_2,y)>0$ and $P_{\Xdag_1|X_1}(\psi^{\ast}_{X_1}(x_1)|x_1)=1, P_{X_1 X_2 Y \Xdag_1}(x_1,x_2,y,\psi^{\ast}_{X_1}(x_1))>0$, which gives that $P_{X_1|\Xdag_1 X_2 Y}(x_1|\psi^{\ast}_{X_1}(x_1),x_2,y)>0$. Similarly, $P_{X_2|\Xdag_2 X_1 Y}(x_2'|\psi^{\ast}_{X_2}(x_2),x_1',y)>0$ noting that $\psi^{\ast}_{X_2}(x_2) = \psi^{\ast}_{X_2}(x_2')$.
    \end{enumerate}
\end{proof}

    \subsection{The general $s=1$ case:}
    We now extend the characterization given in Theorem~\ref{thm:2usermss} to $(k,1)$-byzantine distributed source coding problems.
    Consider the scenario with $k$ users having inputs $X^n_1,X^n_2,\dots,X^n_k$ and the decoder's side-information is $Y^n$, distributed according to $P_{X_{[k]}Y}$ i.i.d. In order to write the characterization for the general case, we define the following variables.
    For $i,j \in [k], i \neq j$, define $Y^{(0)}_{ij} = (Y,X_{\{i,j\}^c})$
    \begin{align*}
        Y^{(1)}_{ij}  &= \inp{Y_{ij}^{(0)}, X_i \mss Y_{ij}^{(0)}, X_j \mss Y_{ij}^{(0)}}\\
                      &\vdots\\
        Y^{(r+1)}_{ij}&= \inp{Y_{ij}^{(r)}, X_i \mss Y_{ij}^{(r)}, X_j \mss Y_{ij}^{(r)}}\\
                      &\vdots
    \end{align*}
    where $\{i,j\}^c = [k] \setminus \{i,j\}$.
    Note that these variables are obtained by invoking Definition~\ref{def:msstest} with the following assignments to variables:
    $U \leftarrow X_{i}, V \leftarrow X_j, W\leftarrow (Y, X_{\{i,j\}^c}), Y_{ij}^{\ast} \leftarrow W^{\ast}$. Simply stated, $(Y^{(r)}_{ij})$ is the $r$th step upgrade of the side-information, but here, instead of using $Y$ as the side information, we will be using $(Y,X_{\{i,j\}^c})$.
    Analogous to the $(2,1)$-case, define
    $$Y^{\ast}_{ij} = \inp{Y, X_i \mss Y^{|\cX_i||\cX_j|}_{ij}, X_j \mss Y_{ij}^{|\cX_i||\cX_j|}}.$$
    Consider a set of functions 
    \begin{multline*}
    \mathcal{G} = \left\{\{h_{ij}\}_{i,j \in [k], i \neq j}: \forall i \neq j, i' \neq j',\right. \\ \left. \{i,j\} \neq \{i',j'\}, h_{ij}(Y^{\ast}_{ij}) = h_{i'j'}(Y^{\ast}_{i'j'})\right\}
    \end{multline*}
    The \emph{common upgraded variable} is $G^{\ast} = h^{\ast}_{ij}(Y^{\ast}_{ij})$ where
    $$h_{ij}^{\ast} = \arg \max_{\mathcal{G}} H(h_{ij}(Y^{\ast}_{ij})).$$
    This can be thought of as the (multi-user) G\'acs-K\"orner common variable \cite{gacskorner} of the random variables $Y^{\ast}_{ij}, i,j \in [k], i \neq j$.  
    We show that the common upgraded variable is recoverable and every recoverable function is a function of the common upgraded variable.  
    \begin{theorem}
        For a distribution $P_{X_{[k]}Y}$, $f$ is 1-robustly recoverable if and only if there exist functions $\{h_{ij}\}_{i,j\in[k],i\neq j}$ such that $h_{ij}(Y^{\ast}_{ij}) = f(X_{[k]},Y)$.
    \end{theorem}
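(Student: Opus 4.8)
The plan is to derive the statement from results already established: the single-letter characterisation of robust recoverability (Theorem~\ref{thm:main-threshold} specialised to $s=1$, equivalently Theorem~\ref{thm:kuser-main} with adversary structure $\coll=\{\varnothing\}\cup\{\{i\}:i\in[k]\}$) and the two-user characterisation (Theorem~\ref{thm:2usermss}). By Theorem~\ref{thm:main-threshold}, $f$ is $1$-robustly recoverable iff $f$ is $1$-viable, so it is enough to prove that $f$ is $1$-viable iff there exist functions $\{h_{ij}\}$ with $h_{ij}(Y^{\ast}_{ij})=f(X_{[k]},Y)$ for all $i\neq j$.

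The first step is to reduce $1$-viability to a family of pairwise conditions. For $s=1$ the only adversary sets are $\varnothing$ and the singletons, so a non-intersecting collection in Definition~\ref{def:s-viable} is, up to the innocuous presence of $\varnothing$, any set of two or more distinct singletons. Marginalising the joint p.m.f.\ $Q$ to two of its indices shows that the constraint attached to an arbitrary such collection follows from the constraint attached to collections of exactly two distinct singletons; the constraint attached to a collection $\{\varnothing,\{i\}\}$ is likewise recovered from a two-singleton constraint by taking one of the two perturbations to be the identity channel. Hence $f$ is $1$-viable iff, for every ordered pair $i\neq j$, the two-singleton constraint for $(\{i\},\{j\})$ holds.

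The second step identifies each pairwise condition with a two-user problem. The two-singleton constraint for $(\{i\},\{j\})$ is, after grouping $(\uY,\uX_{\{i,j\}^c})$ into a single symbol, literally the $1$-viability condition of Definition~\ref{def:1-viable} for the auxiliary two-user problem in which ``user $1$'' observes $X_i$, ``user $2$'' observes $X_j$, the decoder's side-information is $(Y,X_{\{i,j\}^c})$, and the joint law is the marginal $P_{X_i,X_j,(Y,X_{\{i,j\}^c})}$ (with $\tX^1_i,\tX^2_j$ playing the roles of $\tX_1,\tX_2$). Applying Theorem~\ref{thm:2usermss} to this auxiliary problem, the constraint holds iff $f$, viewed as a function on the auxiliary alphabets, is a function of the auxiliary maximum upgraded variable $W^{\ast}_{ij}:=\inp{(Y,X_{\{i,j\}^c}),\,X_i\mss Y^{(N)}_{ij},\,X_j\mss Y^{(N)}_{ij}}$, $N=|\cX_i||\cX_j|$, since the upgrade recursion started from $(Y,X_{\{i,j\}^c})$ is exactly the one defining the $Y^{(r)}_{ij}$. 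Combining the two steps, $f$ is $1$-viable iff $f$ is a function of $W^{\ast}_{ij}$ for every pair $i\neq j$.

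It remains to prove that ``$f$ is a function of $W^{\ast}_{ij}$ for every pair'' is the same as ``$f$ is a function of $Y^{\ast}_{ij}$ for every pair'', i.e.\ that the spectator block $X_{\{i,j\}^c}$ inside $W^{\ast}_{ij}$ is redundant; this is the crux. One direction is immediate: $Y^{\ast}_{ij}$ is a function of $W^{\ast}_{ij}$, so $f$ being a function of every $Y^{\ast}_{ij}$ implies it is a function of every $W^{\ast}_{ij}$, which together with the first two steps and Theorem~\ref{thm:main-threshold} already yields the ``if'' (achievability) direction. For the converse, the argument I would run is: for $m\in\{i,j\}^c$, the hypothesis applied to a pair $(i,m)$ --- whose side-information $X_{\{i,m\}^c}$ does not involve $X_m$ --- shows that the dependence of $f$ on $X_m$ factors through $X_m\mss Y^{(N)}_{im}$; taking the G\'acs--K\"orner common part $\delta_m(X_m)$ of these functions over all pairs containing $m$, the dependence of $f$ on $X_m$ factors through $\delta_m(X_m)$. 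It then remains to show that each $\delta_m(X_m)$ is almost surely determined by $Y^{\ast}_{ij}$, so that all spectator coordinates can be eliminated one at a time. This is where the fixed-point/saturation property of the $\mss$ operation enters: because $Y^{(N)}_{ij}$ is the saturated upgrade obtained with $X_{\{i,j\}^c}$ available as side-information, $X_i\mss Y^{(N)}_{ij}$ and $X_j\mss Y^{(N)}_{ij}$ already absorb every ``robustly common'' function of $X_{\{i,j\}^c}$ --- in particular $\delta_m(X_m)$, which, being a recoverable quantity, must be a deterministic function of the honest users' observations. Making this last implication precise, via the same monotone-partition argument that the appendix uses to bound the number of upgrade steps, is the part I expect to demand the most care; the remainder is bookkeeping plus direct appeals to Theorems~\ref{thm:main-threshold} and~\ref{thm:2usermss}.
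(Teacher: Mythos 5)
Your approach follows the paper's: reduce $1$-robust recoverability to $1$-viability via Theorem~\ref{thm:main-threshold}, reduce $1$-viability to pairwise two-singleton conditions, and apply the two-user result (Theorem~\ref{thm:2usermss} / Lemma~\ref{lem:2usermss}) to each pair $(i,j)$ with $X_1\leftarrow X_i$, $X_2\leftarrow X_j$, $Y\leftarrow(Y,X_{\{i,j\}^c})$. Steps one through three of your reduction are correct and essentially the paper's own proof.

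Where you diverge is the extra "crux" step you introduce: reconciling $W^{\ast}_{ij}=\bigl((Y,X_{\{i,j\}^c}),\,X_i\mss Y^{(N)}_{ij},\,X_j\mss Y^{(N)}_{ij}\bigr)$ (what Lemma~\ref{lem:2usermss} literally gives you) with $Y^{\ast}_{ij}=\bigl(Y,\,X_i\mss Y^{(N)}_{ij},\,X_j\mss Y^{(N)}_{ij}\bigr)$ (the compact display in Appendix~\ref{app:k=1}). You have spotted a real notational inconsistency in the paper: the sentence immediately preceding that display explicitly assigns ``$Y^{\ast}_{ij}\leftarrow W^{\ast}$'' when specialising Definition~\ref{def:msstest}, which makes $Y^{\ast}_{ij}$ \emph{include} the spectator block $X_{\{i,j\}^c}$, while the display right after it omits that block. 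The paper's proof of the Claim is written for the former reading --- it invokes Lemma~\ref{lem:2usermss} and stops, with no intermediate elimination of the spectator coordinates --- so in the paper's intended interpretation there is nothing left to prove after your step three, and your ``crux'' step is a no-op.

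If you instead insist on reading the display at face value (no $X_{\{i,j\}^c}$ inside $Y^{\ast}_{ij}$), then the gap you describe is genuine and your step is necessary, but your sketch is not a proof. The plan of iterating over pairs $(i,m)$ for $m\in\{i,j\}^c$, taking G\'acs--K\"orner common parts $\delta_m$, and then claiming $\delta_m(X_m)$ is a.s.\ a function of $Y^{\ast}_{ij}$ is precisely the step you would have to make rigorous, and the appeal to ``the same monotone-partition argument that bounds the upgrade count'' does not obviously deliver it: the saturation argument bounds the number of refinement steps but does not by itself tell you that every adversary-invariant function of $X_m$ is absorbed into $X_i\mss Y^{(N)}_{ij}$ or $X_j\mss Y^{(N)}_{ij}$. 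The conditioning changes between pairs, so the GK common parts are taken over different auxiliary distributions, and it is not clear the resulting object even lives where you want it to. As written, this is a heuristic, not an argument. My recommendation: first settle which definition of $Y^{\ast}_{ij}$ you are using. Under the ``$Y^{\ast}_{ij}\leftarrow W^{\ast}$'' reading your proof is complete and matches the paper; under the other reading you have correctly located the missing step but have not filled it, and you should either fill it carefully or note that you are adopting the reading under which no such step is needed.
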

    \begin{proof}
        We prove the following claim, which when combined with Theorem~\ref{thm:kuser-main} gives the only if direction of the theorem.
    \begin{claim}
        If $f$ is $1$-viable, then, there exist functions $\{h_{ij}\}_{i,j\in[k],i\neq j}$ such that $h_{ij}(Y^{\ast}_{ij}) = f(X_{[k]},Y)$.
    \end{claim}
    \begin{proof}
        If $f$ is $1$-viable, then by definition, the following holds:
        (notice that in Definition~\ref{def:s-viable}, the sets $\cA_1, \dots, \cA_m$ are singletons or empty, since there is no ambiguity, we denote $\tX_i$ by $\tX_i$)
        For any $\cI \subset \{1, \dots, k\}$, $|\cI| \geq 2$, for any $Q_{\uX_{[k]},\uY,\inp{\tX_i}_{i \in \cI}}$, if
        $Q_{\tX_i, \uX_{\{i\}^c}, \uY} = P_{X_{[k]}Y}$ for every $i \in \cI$ and $\uX_i \mc X_i \mc (\uX_{ \{i\}^c }, \uY)$ for every $i \in \cI$, then, $f(\tX_i^i, \uX_{\{i\}^c}, \uY) = f(\tX_j^j, \uX_{\{j\}^c}, \uY)$ for every $i,j \in \cI$.
        This is true in particular for every $\cI$ with $|\cI|=2$. 
        To complete the proof, we now invoke Lemma~\ref{lem:2usermss} for every such $\cI = \{i,j\}$, $i \neq j$, with the following variables: $X_1 \leftarrow X_i$, $X_2 \leftarrow X_j$, $Y\leftarrow (Y,X_{\{i,j\}^c})$ to obtain functions $h_{ij}$ such that $h_{ij}(Y^{\ast}_{ij}) = f(X_{[k]},Y)$.
    \end{proof}    
    To show the if direction, we argue that the scheme presented in Protocol~\ref{prot:kuserprot} $1$-robustly recovers $f$.

    \begin{protocol} \label{prot:kuserprot} On receiving the sequences $\ux^n_1, \ux_2^n, \dots, \ux^n_{k}$ from the users and side-information $\uy^n$, the decoder runs $\protdecoderr$ which is described as follows: Fix a set $S = \varnothing$ and performs the following steps:
    \begin{enumerate}
        \item 
        While $|S| \leq k-2$
        \begin{enumerate}[label=(\alph*)]
            \item 
            Pick any two users $i,j \in [k] \setminus S$.
            \item 
            Compute 
            \begin{multline*} \mathsf{o/p}_{i,j} \leftarrow \protdecoder(P_{X_i, X_j, (Y, X_{ \{i,j\}^c })},\\ \ux_i^n, \ux_j^n, (\uy^n, \ux^n_{ \{i,j\}^c }))
            \end{multline*}
            [decoder for $(s,k) = (2,1)$]
            \item 
            If $\mathsf{o/p}_{i,j} \not\in \{i,j\}$, output $h_{ij}(\mathsf{o/p}_{i,j})$ and terminate; else if $\mathsf{o/p}_{i,j} = i$, then, $S \leftarrow S \cup \{j\}$; else if $\mathsf{o/p}_{i,j} = j$, then, $S \leftarrow S \cup \{i\}$.
            \item 
            If $|S| = k-2$, then output $\mathsf{o/p}_{i,j}$ (i.e., declare $\mathsf{o/p}_{i,j}$ as malicious) and terminate.
        \end{enumerate}
    \end{enumerate}
    \end{protocol}
    Consider a sequence $(X_{1,t}, \dots, X_{k,t}, Y_t)$ sampled i.i.d. from $P_{X_{[k]}Y}$ for every $t \in [n]$. There are two possible cases of corruption which are considered as follows:\\
    The decoder receives from users $(\uX^n_{[k]},Y^n)$. 
    The decoder chooses any two users $i,j \in [k]$, assigns $(Y^n, \uX^n_{\{i,j\}^c})$ as the side information and runs $\protdecoder$ with $P_{U,V,W} = P_{X_i, X_j, (Y, X_{ \{i,j\}^c })}$, $u^n = \uX_i^n$, $v^n = \uX_j^n$.
    If all users are honest, then $\uX^n_i = X^n_i$ for every $i \in [k]$. By part (1) and sub-part (b) of part (2) of Lemma~\ref{lem:2userprotocol}, if $\protdecoder$ doesn't terminate, then it outputs $Y^{\ast}_{ij}$ w.h.p. and therefore $\protdecoderr$ outputs $G^{\ast} = h_{ij}(Y^{\ast}_{ij})$ and terminates. 
    Note that $\protdecoderr$ outputs $G^{\ast}$ if for any $i,j \in [k]$, $\protdecoder$ outputs $Y^{\ast}_{ij}$.
    Now, we consider the case where, for every $i,j \in [k]$, $\protdecoder$ terminates before outputting $Y_{ij}^{\ast}$. Whenever $\protdecoder$ terminates before outputting $Y^{\ast}_{ij}$, it means that either the corrupt party is not in $\{i,j\}$ and caused $\protdecoder$ to implicate one of $\{i,j\}$ due to corrupt side-information, or, all users providing the side-information are honest and the implicated party is the corrupt one. 
    In either case, the party in $\{i,j\}$ who was not implicated can be exonerated, and therefore can be added to the set $S$, which can be thought of as the set of `trusted users'. 
    In this way, every time $\protdecoder$ doesn't output $Y^{\ast}_{ij}$, a new party is exonerated and is added to the set $S$. When $|S| = k-2$, the only two parties which could be corrupt are in $[k] \setminus S$. When the last run of $\protdecoder$ also doesn't output $Y^{\ast}_{ij}$, the user who is blamed is therefore the corrupt party with high probability.
    \end{proof}

\section{Proof of Claim~\ref{cl:(3,2)-example}}
\label{app:proof-for-(3,2)-example}

The second statement will follow from the first and the argument in Section~\ref{sec:example} that $W$ cannot be recovered robustly. Since $V$ is a function of $Y$, it suffices to show that $U$ is 2-robustly recoverable. Towards this, we show that for any distribution of the form $Q_{\uX_1 \uX_2 \uX_3 \uY \txa_2 \txa_3 \txb_1 \txb_3 \txc_1 \txc_2}$ such that
\begin{align}
    \label{eq:23}
    (\uX_2,\uX_3) &\mc (\txa_2,\txa_3) \mc (\uX_1,\uY),\\ 
    \label{eq:31}
    (\uX_1,\uX_3) &\mc (\txb_1,\txb_3) \mc (\uX_2,\uY),\\
    \label{eq:12}
    (\uX_1,\uX_2) &\mc (\txc_1,\txc_2) \mc (\uX_3,\uY),\\
    \label{eq:marg}
    P_{X_1X_2X_3Y} &= Q_{\txc_1\txc_2\uX_3\uY} = Q_{\uX_1\txa_2\txa_3\uY} = Q_{\txb_1\uX_2\txb_3\uY}
\end{align}
$\uX_1 = \txb_1 = \txc_1$ a.s.
We use the following claim which formalizes the intuition that when users 1 and 2 collude together, the reports given by user 1 and user 2 should be same as their respective observation and an analogous statement for collusion between user 1 and user 3. Formally,
\begin{claim}\label{claim:example-subclaim-red}
    \begin{enumerate}[label=(\alph*)]
        \item   $Q_{\uX_1 \uX_2| \txc_1 \txc_2}(x, \eras_2|x,\eras_2) = 1$ for $x \in \{0,1\}$,
        \item   $Q_{\uX_1\uX_2|\txc_1\txc_2}(x,x|x,x)=1$ for $x \in \{0,1\}$,
        \item   $Q_{\uX_1 \uX_3| \txb_1 \txb_3}(x, \eras_3|x,\eras_3) = 1$ for $x \in \{0,1\}$ and
        \item   $Q_{\uX_1\uX_3|\txb_1\txb_3}(x,x|x,x)=1$ for $x \in \{0,1\}$.
    \end{enumerate} 
\end{claim}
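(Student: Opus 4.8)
The plan is to split on whether the decoder's side information $\uY$ records an erasure. By the marginal identities in \eqref{eq:marg}, $\uY$ is distributed as $Y$, so $\pr(\uY=\eras)=1/2$, and the event $\{\uY=\eras\}$ is exactly $\{V=1\}$. On the complement $\{\uY\in\{0,1\}\}$ the support of $P$ forces, reading each of the three identities in \eqref{eq:marg} in turn, $\uX_1=\uX_2=\uX_3=\txa_2=\txa_3=\txb_1=\txb_3=\txc_1=\txc_2=\uY$; hence parts (b) and (d) hold there with $x=\uY$, and the conditioning events of (a) and (c) (which require $\txc_2=\eras_2$, resp. $\txb_3=\eras_3$) are disjoint from $\{\uY\in\{0,1\}\}$. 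It therefore remains to work in the erased world, and the plan is: first prove (b) and (d); then a probability-counting lemma; then deduce (a) and (c).

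For (b), condition on $\{(\txc_1,\txc_2)=(x,x)\}$ with $x\in\{0,1\}$. By $Q_{\txc_1\txc_2\uX_3\uY}=P_{X_1X_2X_3Y}$ this event splits, up to null sets, into the realization $V=0$, on which $(\uX_3,\uY)=(x,x)$, and the realization $V=1,W=1$, on which $(\uX_3,\uY)=(\eras_3,\eras)$, both of positive probability (note $V=1,W=0$ gives $\txc_2=\eras_2$ and is excluded). On the first piece $\uY=x\in\{0,1\}$, so the collapse noted above gives $(\uX_1,\uX_2)=(x,x)$. The Markov chain \eqref{eq:12} makes $(\uX_1,\uX_2)$ conditionally independent of $(\uX_3,\uY)$ given $(\txc_1,\txc_2)$, so the conditional law of $(\uX_1,\uX_2)$ given $\{(\txc_1,\txc_2)=(x,x)\}$ equals its conditional law on that first piece, namely the point mass at $(x,x)$; this is (b). Part (d) is identical with users $2$ and $3$ interchanged: condition on $\{(\txb_1,\txb_3)=(x,x)\}$, use $Q_{\txb_1\uX_2\txb_3\uY}=P$ to split into $V=0$ (giving $(\uX_2,\uY)=(x,x)$) and $V=1,W=0$ (giving $(\uX_2,\uY)=(\eras_2,\eras)$), and invoke the Markov chain \eqref{eq:31}.

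Next I would prove the bookkeeping lemma: on $\{\uY=\eras\}$ exactly one of $\{\uX_2=\eras_2\}$, $\{\uX_3=\eras_3\}$ occurs a.s. Applying (b) on the piece $(\uX_3,\uY)=(\eras_3,\eras)$ — where $Q_{\txc_1\txc_2\uX_3\uY}=P$ forces $\txc_1=\txc_2=U$ for a bit $U$ — shows $\uX_3=\eras_3,\uY=\eras$ implies $\uX_2=\txc_2=U\in\{0,1\}$, so $\{\uX_3=\eras_3,\uY=\eras\}\subseteq\{\uX_2\in\{0,1\},\uY=\eras\}$; but $Q_{\uX_3\uY}=P_{X_3Y}$, $Q_{\uX_2\uY}=P_{X_2Y}$ and $\pr(\uY=\eras)=1/2$ make both events have probability $1/4$, so they coincide a.s. Given this, (a) follows: conditioning on $\{(\txc_1,\txc_2)=(x,\eras_2)\}$, the identity $Q_{\txc_1\txc_2\uX_3\uY}=P$ puts us on $\{\uX_3=x\in\{0,1\},\uY=\eras\}$, the bookkeeping lemma forces $\uX_2=\eras_2$, then $Q_{\txb_1\uX_2\txb_3\uY}=P$ forces $(\txb_1,\txb_3)=(U,U)$ for a bit $U$, and (d) forces $(\uX_1,\uX_3)=(U,U)$; since $\uX_3=x$ this yields $\uX_1=x$, hence $(\uX_1,\uX_2)=(x,\eras_2)$. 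Part (c) is the mirror image: condition on $\{(\txb_1,\txb_3)=(x,\eras_3)\}$, use $Q_{\txb_1\uX_2\txb_3\uY}=P$ to land on $\{\uX_2=x\in\{0,1\},\uY=\eras\}$, the bookkeeping lemma forces $\uX_3=\eras_3$, $Q_{\txc_1\txc_2\uX_3\uY}=P$ forces $(\txc_1,\txc_2)=(U,U)$, and (b) forces $(\uX_1,\uX_2)=(U,U)$, so $\uX_1=x$ and $(\uX_1,\uX_3)=(x,\eras_3)$.

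The main obstacle is getting (b) and (d) off the ground, i.e. seeing that the Markov chains have any content at all. For this particular $P$ the chain \eqref{eq:23} is vacuous once \eqref{eq:marg} holds, since $(X_1,Y)$ is a deterministic function of $(X_2,X_3)$ under $P$; and restricted to $\{\uY=\eras\}$ the chains \eqref{eq:12}, \eqref{eq:31} are equally empty, because there $(\uX_3,\uY)$ is already a function of $(\txc_1,\txc_2)$. The leverage appears only when one conditions on a value such as $(\txc_1,\txc_2)=(x,x)$ that is realized in \emph{both} an unerased world — where the reports are rigidly pinned to $(x,x)$ by \eqref{eq:marg} — and an erased one; conditional independence then transports that rigidity into the erased world. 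Once (b), (d) and the probability count are established, the remaining case split on which of $\uX_2,\uX_3$ carries the erasure is routine.
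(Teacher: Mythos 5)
Your proposal is correct, and it takes a genuinely different route from the paper. The paper opens with two preliminary facts about user~2's report — that it cannot fabricate an erasure (their \eqref{eq:example-sc1}) and that it must report its own erasure (their \eqref{eq:example-sc2}) — and then proves part~(a) by a consistency argument that asks which adversarial coalitions could have produced a hypothetical bad view, with part~(b) obtained last via further such coalition checks; parts (c), (d) are by symmetry. You instead prove (b) and (d) \emph{first}, by a clean conditional-independence transport: the event $(\txc_1,\txc_2)=(x,x)$ has, under $Q_{\txc_1\txc_2\uX_3\uY}=P$, both an unerased realization $(\uX_3,\uY)=(x,x)$ and an erased one $(\uX_3,\uY)=(\eras_3,\eras)$, both of positive mass; on the unerased one the marginals $Q_{\uX_1\txa_2\txa_3\uY}=Q_{\txb_1\uX_2\txb_3\uY}=P$ pin $(\uX_1,\uX_2)=(\uY,\uY)=(x,x)$, and the Markov chain \eqref{eq:12} transports that point mass across the split. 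Your bookkeeping lemma (on $\{\uY=\eras\}$ exactly one of $\uX_2=\eras_2$, $\uX_3=\eras_3$) then replaces the paper's multi-scenario consistency step with a one-line probability count using (b) and the marginal equalities, and (a), (c) follow by chaining through the other adversarial identity and (d), (b) respectively. Both approaches use exactly the same inputs — the marginal equalities \eqref{eq:marg} and the Markov chains \eqref{eq:12}, \eqref{eq:31} — but yours isolates the Markov chain as the mechanism explicitly, whereas the paper's coalition-consistency argument buries it; your version is arguably the more transparent of the two. I checked the positivity of the conditioning events, the probability count ($\pr(\uX_3=\eras_3,\uY=\eras)=\pr(\uX_2\in\{0,1\},\uY=\eras)=1/4$), and the chaining in (a); all hold.
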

 $Q_{\uX_1|\txc_1}(x|x) = \sum_{x_2} Q_{\uX_1|\txc_1 \txc_2}(x|x,x_2) Q_{\txc_2|\txc_1}(x_2|x) = \sum_{x_2} Q_{\txc_2|\txc_1}(x_2|x) = 1$, where the second equality is due to Claim~\ref{claim:example-subclaim-red}. Similarly, $Q_{\uX_1|\txc_1}(x|x) = 1$. It only remains to prove Claim~\ref{claim:example-subclaim-red} to complete the proof. 

\begin{proof}[Proof of Claim~\ref{claim:example-subclaim-red}]
    We will first show that 
    \begin{equation} \label{eq:example-sc1} Q_{\uX_1 \uX_2| \txc_1 \txc_2}(b,\eras_2|x,x)=0 \text{ for } x,b \in \{0,1\}.\end{equation}
    This formalizes the intuition that if user 2 observes a bit, it cannot report $\eras_2$.
    Suppose not, then, $Q_{\uX_1 \uX_2|\txc_1 \txc_2}(b,\eras_2|x,x) > 0$ for $x,b \in \{0,1\}$. Noting that $Q_{\uX_1 \uX_2 \uX_3 \uY}(b,\eras_2,x,x) \geq Q_{\txc_1 \txc_2 \uX_3 \uY}(x,x,x,x)Q_{\uX_1 \uX_2|\txc_1 \txc_2}(b,\eras_2|x,x) = P_{X_1 X_2 X_3 Y}(x,x,x,x)Q_{\uX_1 \uX_2|\txc_1 \txc_2}(b,\eras_2|x,x)>0$, we have $Q_{\uX_2 \uY}(\eras_2,x)>0$. But, $P_{X_2 Y}(\eras_2,x) = Q_{\uX_2 \uY}(\eras_2,x) = 0$ due to \eqref{eq:marg}. This gives a contradiction. We will now show that 
    \begin{equation} \label{eq:example-sc2} Q_{\uX_2|\txc_1 \txc_2}(\eras_2|x,\eras_2) = 1 \text{ for } x \in \{0,1\}.\end{equation}
    This claim formalizes that user 2 cannot report $\eras_2$ if it received a bit when colluding with user 1.
    \begin{align*}
    \MoveEqLeft[2]\bb{P}[\uX_2 = \eras_2]\\
    =&\sum_{
        x_3,y
    }
    Q_{\uX_2 \uX_3 \uY}(\eras_2,x_3,y)\\
    =& \sum_{x_3,y} \sum_{x_1,x_2} \Pabc(x_1,x_2,x_3,y) Q_{\uX_2|\tX^3_1 \tX^3_2}(\eras_2|x_1,x_2)\\
    \stackrel{(a)}{=}& \sum_{x_1,x_3,y}
        \Pabc(x_1,x_1,x_3,y) Q_{\uX_2|\txc_1 \txc_2}(\eras_2|x_1,x_1)\\
    &+ \sum_{x_1,x_3,y}
        \Pabc(x_1,\eras_2, x_1,\eras) Q_{\uX_2|\txc_1 \txc_2}(\eras_2|x_1,\eras_2)\\
        \stackrel{(b)}{=}& \sum_{x_1 \in \{0,1\}} \Pabc(x_1,\eras_2, x_1,\eras) Q_{\uX_2|\txc_1 \txc_2}(\eras_2|x_1,\eras_2)\\
        \label{eq:sclm1p3}
        =&(1/8) Q_{\uX_2|\txc_1\txc_2}(\eras_2|0,\eras_2) + (1/8) Q_{\uX_2|\txc_1\txc_2}(\eras_2|1,\eras_2),
        \end{align*}
        where in (a), we consider the $x_2 \neq \eras_2$ and $x_2 = \eras_2$ cases separately and note that when $X_1 = x_1$ and $X_2 \neq \eras_2$, then, $X_2 = x_1$, and (b) is because of \eqref{eq:example-sc1}. From \eqref{eq:marg}, $\bb{P}[\uX_2 = \eras_2] = P_{X_2}(\eras_2) = 1/4$, which is only possible when $Q_{\uX_2|\txc_1 \txc_3}(\eras_2|x,\eras_2)=1$ for $x \in \{0,1\}$, completing the argument for \eqref{eq:example-sc2}. 

        We will now show part (a) of the claim. For the sake of contradiction, assume that $Q_{\uX_1 \uX_2| \txc_1 \txc_2}(1-x,\eras_2|x,\eras_2)>0$ for some $x \in \{0,1\}$ (we are using the fact that from \eqref{eq:example-sc2}, $\uX_2 = \eras_2$). Then, we will have that
        \begin{align*}
            \MoveEqLeft[2] Q_{\uX_1 \uX_2 \uX_3 \uY}(1-x,\eras_2,x,\eras)\\
            \geq& P_{X_1 X_2 X_3 Y}(x,\eras_2,x,\eras) Q_{\uX_1 \uX_2| \txc_1 \txc_2}(1-x,\eras_2|x,\eras_2)>0.
        \end{align*}
        Note that the same view must also be generated by the scenario when the adversarial set is users 1 and 3, i.e., there is some $z \in \{0,1\}$ such that $P_{X_1 X_2 X_3 Y}(z,\eras_2,z,\eras)Q_{X_1 X_3|\tX_1^2 \tX_3^2}(1-x,x|z,z)>0$ and 
        \begin{equation} \label{eq:example-sc3} 
        Q_{\uX_1 \uX_3|\txb_1 \txb_3}(1-x,x|z,z)>0. 
        \end{equation}
        Towards the contradiction, will show that \eqref{eq:example-sc3} is false.
        Since conditioned on their observations, the adversarial users 1 and 3 generate their report conditionally independent of user 2 and decoder's observations, and $P_{X_1 X_2 X_3 Y}(z,z,z,z)>0$ and therefore $Q_{\uX_1 \uX_2 \uX_3 \uY}(1-x,z,x,z)>0$.
        Then, it must be the case that 
        \begin{itemize}
            \item this view is also generated by the scenario when the adversarial users are 2 and 3, i.e., for some $x_2 \in \{0,1,\eras_2\}$ and $x_3 \in \{0,1,\eras_3\}$, $P_{X_1 X_2 X_3 Y}(1-x,x_2,x_3,z)>0$ and $Q_{\uX_2 \uX_3 | \txa_2,\txa_3}(z,x|x_2,x_3)>0$, which implies that $P_{X_1 Y}(1-x,z)>0$, i.e., $z = 1-x$.
            \item this view is also generated by the scenario when the adversarial users are 1 and 2, i.e., for some $x_1' \in \{0,1\}$, $x_2' \in \{0,1,\eras_2\}$, 
            $P_{X_1 X_2 X_3 Y}(x_1',x_2',x,z)>0$ and $Q_{\uX_1 \uX_2|\txc_1 \txc_2}(1-x,z|x_1',x_2')>0$, which implies that $P_{X_3 Y}(x,z)>0$, i.e., $z=x$;
        \end{itemize} 
        Thus, we have a contradiction and it completes the argument for part (a) of the claim.

        We now show that part (b) of the claim is true, i.e., if users 1 and 2 collude and  they both observe a bit, they must both report the bit correctly. The argument above (in which we proved (\ref{eq:example-sc3}) cannot be true) shows that $Q_{\uX_1 \uX_3| \txb_1 \txb_3}(1-x,x|z,z)=0$ for $x,z \in \{0,1\}$. By symmetry, we also have that
        \begin{equation}
            \label{eq:example-sc4}
            Q_{\uX_1 \uX_2|\txc_1 \txc_2}(1-x,x|z,z) = 0 \text{ for } x,z \in \{0,1\},
        \end{equation}
        which means that they cannot output complementary bits as output.
        Further, note that both of them together cannot output the negation of their observations either, i.e., 
        \begin{equation}\label{eq:example-sc5}
        Q_{\uX_1 \uX_2 | \txc_1 \txc_2}(1-z,1-z|z,z) = 0,
        \end{equation} because otherwise, it must be the case that $Q_{\uX_1 \uX_2 \uX_3 \uY}(1-z,1-z,z,z)>0$ and therefore this must also be a view under adversarial users 2 and 3, in which case $\uX_1$ is unaltered by an honest user 1, which makes $P_{X_1 Y}(1-z,z)>0$, which is impossible. Part (b) of the claim is implied by \eqref{eq:example-sc5}, \eqref{eq:example-sc4} and \eqref{eq:example-sc1}.
        
\end{proof}
\bibliography{IEEEabrv,bib}
\end{document}